\newcommand{\compilefullversion}{true} 
\newcommand{\compilehidecomments}{true}
\definecolor{red}{HTML}{E51400} 
\definecolor{blue}{HTML}{0050EF} 
\definecolor{green}{HTML}{008A00} 
\definecolor{purple}{HTML}{AA00FF} 
\definecolor{orange}{HTML}{FF7F00}
\definecolor{gray}{HTML}{848482} 
\renewcommand\footnotetextcopyrightpermission[1]{} 
\DeclareMathOperator*{\argmax}{argmax}
\newcommand{\E}{\mathbb{E}}
\newcommand{\I}{\mathbb{I}}
\newcommand{\R}{\mathbb{R}}
\newcommand{\bX}{\boldsymbol{X}}
\newcommand{\cC}{\mathcal{C}}
\newcommand{\cE}{\mathcal{E}}
\newcommand{\cR}{\mathcal{R}}
\newcommand{\cS}{\mathcal{S}}
\newcommand{\cM}{\mathcal{M}}
\newcommand{\cV}{\mathcal{V}}
\newcommand{\abs}[1]{\left| #1 \right|}
\newcommand{\dom}{{\rm dom}}
\newcommand{\avg}{{\rm avg}}
\newcommand{\rroot}{{\rm root}}
\newcommand{\ag}{{\rm ag}}
\newcommand{\ai}{{\rm ai}}
\newcommand{\OPT}{{\rm OPT}}
\newcommand{\WRGreedy}{{\sf WR-Greedy}}
\newcommand{\NodeSelection}{{\sf NodeSelection}}
\newcommand{\CRGreedy}{{\sf CR-Greedy}}
\newcommand{\CRNS}{\mbox{\sf CR-NS}}
\newcommand{\WRNS}{\mbox{\sf WR-NS}}
\newcommand{\CRIMM}{\mbox{\sf CR-NAIMM}}
\newcommand{\WRIMM}{\mbox{\sf WR-NAIMM}}
\newcommand{\WRIIMM}{\mbox{\sf WRI-NAIMM}}
\newcommand{\alg}[1]{\textnormal{\textsf{#1}}}
\newcommand{\newalg}[1]{{#1}}
	\newcommand{\OnlyInFull}[1]{}
	\newcommand{\OnlyInShort}[1]{#1}
	\newcommand{\OnlyInFull}[1]{#1}%
	\newcommand{\OnlyInShort}[1]{}%
    \newcommand{\wei}[1]{}
    \newcommand{\weiran}[1]{}
    \newcommand{\lichao}[1]{}
    \newcommand{\backup}[1]{}
    \newcommand{\wei}[1]{{\color{blue!50!black}  [\text{Wei:} #1]}}
    \newcommand{\weiran}[1]{{\color{red!70!black} [\text{Weiran:} #1]}}
    \newcommand{\lichao}[1]{{\color{green!90!black} [\text{Lichao:} #1]}}
    \newcommand{\backup}[1]{{\color{green!50!black} #1}}
\begin{document}
    
\title{Multi-Round Influence Maximization \OnlyInFull{(Extended Version)}}

\author{Lichao Sun$^1$, Weiran Huang$^2$,  Philip S. Yu$^{1,2}$, Wei Chen$^{3,}$}
\authornote{Corresponding author}
\affiliation{$^1$University of Illinois at Chicago, $^2$Tsinghua University, $^3$Microsoft Research}
\affiliation{lsun29@uic.edu, huang.inbox@outlook.com, psyu@uic.edu, weic@microsoft.com}

%
%
%
%
%
%


\begin{abstract}
In this paper, we study
the Multi-Round Influence Maximization (MRIM) problem, where
influence propagates in multiple rounds independently from possibly different seed
sets, and the goal is to select seeds for each round to maximize the expected number of
	nodes that are activated in at least one round.
MRIM problem models the viral marketing scenarios in which advertisers conduct multiple rounds of 
viral marketing to promote one product.
We consider two different settings: 
    1) the non-adaptive MRIM, where the advertiser needs to determine the seed sets for all rounds at the very beginning, 
    and 2) the adaptive MRIM, where the advertiser can select seed sets adaptively based on the propagation results in the previous rounds.
For the non-adaptive setting, we design two algorithms that exhibit an interesting tradeoff
	between efficiency and effectiveness: 
	a cross-round greedy algorithm that selects seeds at a global level and achieves $1/2 - \varepsilon$
	approximation ratio, and a within-round greedy algorithm that selects seeds round by round and achieves
	$1-e^{-(1-1/e)}-\varepsilon \approx 0.46 - \varepsilon$ approximation ratio but saves running time by
	a factor related to the number of rounds. 
For the adaptive setting, we design an adaptive algorithm that guarantees $1-e^{-(1-1/e)}-\varepsilon$ 
	approximation to the adaptive optimal solution.
In all cases, we further design scalable algorithms based on the reverse influence sampling approach
	and achieve near-linear running time.
We conduct experiments on several real-world networks and demonstrate that our algorithms
	are effective for the MRIM task.
\end{abstract}

%
%
\begin{CCSXML}
    <ccs2012>
    <concept>
    <concept_id>10002951.10003260.10003272.10003276</concept_id>
    <concept_desc>Information systems~Social advertising</concept_desc>
    <concept_significance>300</concept_significance>
    </concept>
    <concept>
    <concept_id>10002951.10003260.10003282.10003292</concept_id>
    <concept_desc>Information systems~Social networks</concept_desc>
    <concept_significance>300</concept_significance>
    </concept>
    <concept>
    <concept_id>10003752.10003753.10003757</concept_id>
    <concept_desc>Theory of computation~Probabilistic computation</concept_desc>
    <concept_significance>300</concept_significance>
    </concept>
    <concept>
    <concept_id>10003752.10003809.10003716.10011141.10010040</concept_id>
    <concept_desc>Theory of computation~Submodular optimization and polymatroids</concept_desc>
    <concept_significance>300</concept_significance>
    </concept>
    </ccs2012>
\end{CCSXML}

\ccsdesc[300]{Information systems~Social advertising}
\ccsdesc[300]{Information systems~Social networks}
\ccsdesc[300]{Theory of computation~Probabilistic computation}
\ccsdesc[300]{Theory of computation~Submodular optimization and polymatroids}

\keywords{Influence maximization, triggering model, greedy algorithm}

\maketitle

\sloppy

\section{Introduction}


Most companies need to advertise their products or brands on social networks,
through paying for influential people (seed nodes) on Twitter, with the hope that they can promote the products to their followers \cite{berger2014word}. 
The objective is to find a set of most influential people with limited budget for the best marketing effect.
Influence maximization (IM) is the optimization problem of 
	finding a small set of most influential nodes 
	in a social network that generates the largest influence spread.
It models viral marketing scenario in social networks~\cite{domingos01,richardson02,kempe03}, 
	and can also be applied to cascade detection~\cite{Leskovec07}, 
	rumor control~\cite{BAA11,HeSCJ12}, etc.
The standard IM problem and a number of its 
	variants has been extensively studied
	(c.f.~\cite{chen2013information}).
In most formulations, the IM is formulated as a one-shot task:
	the seed set is selected by the algorithm at the beginning, and one propagation
	pass from the seed set activates a subset of nodes in the network.
The objective is to maximize the expected number of activated nodes in this one propagation pass.
However, in many practical viral marketing scenarios, 
	an advertiser's viral marketing campaign may contain multiple rounds to promote
	one product.
Each round may be initiated from a different set of influential nodes.
The advertiser would like to maximize the total number of users adopting the
	product over all rounds.
	
We model the above scenario by the multi-round diffusion model and multi-round influence maximization (MRIM) task.
We consider the entire process over $T$ rounds.
In each round $t$, an independent diffusion is carried out starting from seed set $S_t$,
	and a random set of nodes, $I(S_t)$, is activated.
Then the total influence spread over $T$ rounds given seed sets $S_t, \ldots, S_T$, 
	$\rho(S_1, \ldots, S_T)$, is the expected 
	total number of activated nodes while considering all
	rounds together, namely $\rho(S_1, \ldots, S_T) = \E[|\bigcup_{t=1}^T I(S_t)|]$.
Note that a node activated in a previous round may be activated again and propagate
	influence to other nodes in a new round, but it will not be counted again in the
	final influence spread.
The MRIM task is to find seed sets $S_1, \ldots, S_T$, each of which has at most $k$ nodes,
	so that the final influence spread $\rho(S_1, \ldots, S_T)$ is maximized.
	
The MRIM task possesses some unique features different from the classical 
	IM task.
For example, in the classical IM, it makes no sense to select one seed node
	multiple times, but for MRIM, it may be desirable to select an influential node in multiple rounds
	to generate more influence.
Moreover, it enables adaptive strategies where the advertiser to adaptive select seeds in the next round
	based on the propagation results of previous rounds.
%

%

We study both the non-adaptive and adaptive versions of MRIM under the 
	Multi-Round Triggering (MRT) model.
For the non-adaptive MRIM problem, we design two algorithms that exhibit an interesting trade-off between efficiency and effectiveness.
The first cross-round greedy algorithm selects seeds globally cross different rounds, and achieves
	an approximation ratio of $1/2 - \varepsilon$, for any $\varepsilon > 0$.
The second within-round greedy algorithm selects seeds round by round and achieves
	an approximation ratio of $1-e^{-(1-1/e)}-\varepsilon \approx 0.46 - \varepsilon$.
The higher approximation ratio enjoyed by the cross-round greedy algorithm is achieved by
	investigating seed candidates in all rounds together in each greedy step, and thus incurs
	a higher running time cost at a factor proportional to the number of rounds.
%
For the adaptive MRIM problem, we rigorously formulate the problem
	according to the adaptive optimization framework specified in~\cite{GoloKrause11}.
We show that our formulation satisfies the {\em adaptive submodularity}
	defined in~\cite{GoloKrause11}.
Based on the adaptive submodularity, we design the \alg{AdaGreedy} algorithm
	that achieves $1-e^{-(1-1/e)}-\varepsilon$ approximation to the
	optimal adaptive policy.
%

For both the non-adaptive and adaptive cases, we greatly improve the scalability 
	 by incorporating the state-of-the-art reverse influence sampling (RIS) 
	 approach~\cite{BorgsBrautbarChayesLucier,tang14,tang15}.
In each case, the RIS method needs to be carefully revised to accommodate to the multi-round
	or adaptive situation.
In all cases, we show that our scalable algorithms could achieve near-linear running time
	with respect to the network size, greatly improving the corresponding Monte Carlo greedy algorithms.
%

To demonstrate the effectiveness of our algorithms, we conduct experiments
on real-world social networks, with both synthesized and learned influence parameters.
Our experimental results demonstrate that our algorithms are more effective than baselines, and
	our scalable algorithms run in orders of magnitude faster than their Monte Carlo greedy counterparts
	while keeping the influence spread at the same level.
The results also show some interesting findings, such as the non-adaptive cross-round algorithm could 
	achieve competitive influence spread as the adaptive greedy algorithm.
This may suggest that in practice one may need to consider whether spending the cost of collecting feedbacks
	and doing near-term adaptive strategies based on feedbacks, or spending the up front cost to
	do more global planning, and it opens new directions for further investigations.

%

To summarize, our contributions are:
	(a) proposing the study of both non-adaptive and adaptive MRIM problems;
	(b) proposing non-adaptive and adaptive greedy algorithms and showing their trade-offs;
	(c) designing scalable algorithms in both non-adaptive and adaptive settings; and
	(d) conducting experiments on real-world networks to demonstrate the effectiveness and the scalability of 
	our proposed algorithms.
\OnlyInShort{Due to space constraints, all proofs and supplementary materials are shown in the extended version
	\cite{sun2018multi}.}
\OnlyInFull{Due to space constraints, all proofs and supplementary materials are shown in the appendix.}

\subsection{Related Work}

Influence maximization is first studied by Domingos and 
	Richardson~\cite{domingos01,richardson02}, and then formulated
	as a discrete optimization problem by \citet{kempe03},
	who also formulate the independent cascade model, the linear threshold model, the
	triggering model, and provide a greedy approximation algorithm based on submodularity.
Since then, 
a significant number of papers studies improving the efficiency and scalability
	of influence maximization algorithms~\cite{ChenWY09,ChenYZ10,WCW12,JungHC12,BorgsBrautbarChayesLucier,tang14,tang15}. 
At this front, the state of the art is the reverse influence sampling (RIS) 
	approach~\cite{BorgsBrautbarChayesLucier,tang14,tang15}, and the \alg{IMM} algorithm
	of~\cite{tang15} is among the most competitive ones so far.
Our scalable algorithms are based on \alg{IMM}, but require careful redesigns in the reverse sampling method.
Other studies look into different variants, such as community, competitive and complementary
	influence maximization~\cite{kempe07,BAA11,chen2011influence,HeSCJ12,lu2015competition,he2016joint},
	adoption maximization~\cite{bhagat_2012_maximizing}, 
	robust influence maximization~\cite{ChenLTZZ16,HeKempe16}, etc.

Another related work is adaptive seeding \cite{seeman2013adaptive}, which uses the first-stage nodes and their accessible neighbors together as the seed set to maximize the influence, and is quite different from ours.
In terms of the multi-round diffusion model and influence maximization,
    \citet{lin2015learning} focus on the multi-party influence maximization where there must be at least two parties to compete in networks.
\citet{LeiMMCS15} use the same formulation of the multi-round diffusion
	model and the influence maximization objective as in our paper.
They focus on the online learning aspect of learning edge probabilities, 
while we study the offline
	non-adaptive and adaptive maximization problem when the edge probabilities are known.
Without a rigorous study of such offline problems, it is very difficult to
	assess the performance of online learning algorithms, and as a result
	they could only propose heuristic learning algorithms without any 
	theoretical guarantee. 
From this perspective, our study fills this important gap
	by providing a solid theoretical understanding of the offline multi-round influence
	maximization.

There are also a number of studies on the influence maximization bandit problem
	\cite{CWYW16,VaswaniL15,Wen2016,VKWGLS17}.
Their formulations also have multi-rounds, but each round has a separate influence maximization
	instance, and the total reward is a simple count of activated nodes in all rounds, so
	one node activated in multiple rounds will be counted multiple times.
This makes it qualitatively different from our formulation.
Moreover, these study focus on the online learning aspects of such repeated influence maximization
	tasks.
	
The adaptive MRIM study follows the adaptive optimization framework
	defined by \citet{GoloKrause11}.
They also study adaptive influence maximization as an application,
	but the adaptation is at per-node level: seeds are selected one by one. 
Later seed 
	can be selected based on the activation results from the earlier seeds, but the earlier seeds would not 
	help propagation again for later seeds.
This makes it different from our multi-round model.
\section{Model and Problem Definition} \label{sec:model}


\subsection{Multi-Round Diffusion Model}
In this paper, we focus on the well-studied {\em triggering model} \cite{kempe03}
as the basic diffusion model. 
A social network is modeled as a directed graph $G=(V,E)$, 
where $V$ is a finite set of vertices or nodes, 
and $E \subseteq V \times V$ is the set of 
directed edges connecting pairs of nodes.
The diffusion of information or influence proceeds in discrete time steps $\tau = 0, 1, 2, \dots$. 
At time $\tau=0$, the {\em seed set} $S_0$ is selected to be active, and also 
each node $v$ independently chooses a random {\em triggering set} $T(v)$ 
according to some distribution over subsets of its in-neighbors. 
At each time $\tau\ge 1$, an inactive node $v$ becomes active if at least 
one node in $T(v)$  is active by time $\tau-1$. 
The diffusion process ends when there is no more nodes activated in a time step.
We remark that the classical Independent Cascade (IC) model is a special case of the triggering model.
In the IC model, 
    every edge $(u,v) \in E$ is associated with a probability $p_{uv} \in [0, 1]$, 
    and $p_{uv}$ is set to zero if $(u,v) \notin E$.
Each triggering set $T(v)$ is generated by independently sampling $(u,v)$ with probability $p_{uv}$
and including $u$ in $T(v)$ if the sampling of $(u,v)$ is successful.

The triggering model can be equivalently described as propagations in {\em live-edge graphs}.
Given a class of triggering sets $\{T(v)\}_{v\in V}$, we can construct the live-edge graph
$L = (V, E(L))$, where $E(L) = \{(u,v) \mid v\in V, u\in T(v)\}$, and
each edge $(u,v)\in L$ is called a {\em live edge}.
It is easy to see that the propagation in the triggering model is the same as the deterministic
propagation in the corresponding live-edge graph $L$: if $A$ is the set of active nodes
at time $\tau-1$, then all directed out-neighbors of nodes in $A$ will be active
by time $\tau$.
An important metric in any diffusion model is the {\em influence spread},
    the expected number of active nodes when the propagation from the given  
    seed set $S_0$ ends, denoted as $\sigma(S_0)$.
Let $\Gamma(G,S)$ denote the set of nodes in graph $G$ that can be reached from the node set $S$.
Then, by the above equivalent live-edge graph model, we have
$\sigma(S_0) = \E[|\Gamma(L,S_0)|]$, where the expectation is taken over the distribution
of live-edge graphs. 

A set function $f:V\rightarrow \R$ is called {\em submodular} if
for all $S\subseteq O \subseteq V$ and $u \in V\setminus T$, 
$f(S\cup \{u\}) - f(S) \ge f(O \cup \{u\}) - f(O)$.
Intuitively, submodularity characterizes the diminishing return property often occurring
in economics and operation research.
Moreover, a set function $f$ is called {\em monotone} if for all $S\subseteq O \subseteq V$,
$f(S) \le f(O)$.
It is shown in~\cite{kempe03} that influence spread $\sigma$ for the triggering model
is a monotone submodular function .
A non-negative monotone submodular function allows a greedy solution to
its maximization problem subject to a cardinality constraint, with an
approximation ratio $1-1/e$, where $e$ is the base of the natural logarithm~\cite{NWF78}.
This is the technical foundation for most influence maximization tasks.

We are now ready to define the Multi-Round Triggering (MRT) model.
The MRT model includes $T$ {\em independent} rounds of influence diffusions.
In each round $t \in [T]$, the diffusion starts from a separate seed set $S_t$ with up to $k$ nodes,
    and it follows the dynamic in the classical triggering model described previously.
Since one node can be repeatedly selected as the seed set in different rounds, 
    to clarify the round, we use pair notation $(v, t)$ to denote a node $v$ in the seed set of round $t$.
We use $\cS_t = \{(v,t)\mid v\in S_t \}$ to represent the seed set of round $t$ 
	in the pair notation.
Henceforth, we always use the calligraphic $\cS$ for sets in the pair notation and the normal $S$ for node sets.
By the equivalence between the triggering model and the live-edge graph model, 
    the MRT model can be viewed as $T$ independent propagations in the $T$ live-edge graphs $L_1, L_2, \dots, L_T$, 
    which are drawn independently from the same distribution based on the triggering model. 
The total active nodes in $T$ rounds is counted by $\left|\bigcup_{t=1}^T \Gamma(L_t, S_t)\right|$, where $\Gamma(L_t, S_t)$ is the set of final active nodes in round $t$.
Given a class of seed set (in pair notation) $\cS:=\bigcup_{t=1}^T \cS_t$, the {\em influence spread} $\rho$ in the MRT model is
defined as 
\begin{equation*}
\rho(\cS)=\rho(\cS_1 \cup \cS_2 \cup \dots\cup \cS_T )=\E\left[\abs{\bigcup_{t=1}^T \Gamma(L_t, S_t)}\right],
\end{equation*}
where the expectation is taken over the distribution of all live-edge graphs 
$L_1, L_2, \ldots, L_T$.

%

\subsection{Multi-Round Influence Maximization}

The classical influence maximization problem is to choose a seed set $S_0$ of size at most $k$
to maximize the influence spread $\sigma(S_0)$. 
For the Multi-Round Influence Maximization (MRIM) problem, the goal is to select at most $k$ seed nodes of each
round, such that the influence spread in $T$ rounds is maximized.
We first introduce its non-adaptive version formally defined as follows.

\begin{definition}
	The {\em non-adaptive} Multi-Round Influence Maximization (MRIM) under the MRT model
	is the optimization task where
	the input includes the directed graph $G=(V,E)$, the triggering set distribution for every node
	in the MRT model, the number of rounds $T$, and each-round budget $k$, and the goal
	is to find $T$ seed sets $\cS_1^*$, $\cS_2^*$, \dots, $\cS_T^*$ with each seed set having at most
	$k$ nodes, such that the total influence spread is maximized, i.e.,
	\begin{equation*}
		\cS^* = \cS_1^* \cup \cS_2^* \dots\cup \cS_T^* = \argmax_{\cS\colon |S_t| \leq k, \forall t\in[T]} \rho(\cS).
	\end{equation*}
\end{definition}
The non-adaptiveness refers to the definition that one needs to find $T$ seed sets all at once
before the propagation starts.
In practice, one may be able to observe the propagation results in previous rounds and
select the seed set for the next round based on the previous results to increase the influence spread.
This leads to the adaptive version.
To formulate the {\em adaptive} MRIM requires the setup of the 
adaptive optimization framework, and we defer to Section~\ref{sec:AMRIMDef} for
its formal definition.

Note that as the classical influence maximization is NP-hard and is a special case of MRIM with
$T=1$, both the non-adaptive and adaptive versions of MRIM are NP-hard.
\section{Non-Adaptive MRIM} \label{sec:nonadaptive}


Let $\cV_t = \{(v,t) \mid v \in V\}$ be the set of all possible nodes in round $t$ (e.g., $\cS_t \subseteq \cV_t$), and
$\cV := \bigcup_{t = 1}^T \cV_t $.
We first show that the influence spread function $\rho$ for the MRT model is monotone and submodular.

\begin{restatable}{lemma}{mrtsubmodular} \label{lem:mrtsubmodular}
    Influence spread $\rho(\cS)$ for the MRT model satisfies
    (a) monotonicity: for any $\cS^A \subseteq \cS^B \subseteq \cV$,  $\rho(\cS^A)\leq\rho(\cS^B)$;
    and (b) submodularity: 
    for any $\cS^A \subseteq \cS^B \subseteq \cV$ and any pair $(v, t) \in \cV\setminus \cS^B$, 
    $\rho(\cS^A \cup \{(v, t)\})-\rho(\cS^A) \geq \rho(\cS^B \cup \{(v, t)\})-\rho(\cS^B)$.
\end{restatable}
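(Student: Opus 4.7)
The plan is to lift the proof to the live-edge graph representation of the MRT model and then take expectations. Fix any joint realization $(L_1, \ldots, L_T)$ of the $T$ independent live-edge graphs, and define the deterministic set function
\[
f_{L_1,\ldots,L_T}(\cS) \;=\; \Big|\bigcup_{t=1}^{T} \Gamma(L_t, S_t)\Big|,
\]
where $S_t = \{v : (v,t) \in \cS\}$. Since $\rho(\cS) = \E[f_{L_1,\ldots,L_T}(\cS)]$ and a nonnegative convex combination of monotone submodular functions is monotone and submodular, it suffices to prove both properties for $f_{L_1,\ldots,L_T}$ with arbitrary fixed live-edge graphs.

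Monotonicity is immediate: if $\cS^A \subseteq \cS^B$, then $S^A_t \subseteq S^B_t$ for every $t$, so reachability gives $\Gamma(L_t, S^A_t) \subseteq \Gamma(L_t, S^B_t)$, and the union over rounds is monotone in each component. For submodularity, consider adding a single element $(v, t_0) \notin \cS^B$. Adding it only changes round $t_0$, so writing $U^A = \bigcup_{t \ne t_0} \Gamma(L_t, S^A_t)$ and $X^A = \Gamma(L_{t_0}, S^A_{t_0} \cup \{v\}) \setminus \Gamma(L_{t_0}, S^A_{t_0})$, a short set calculation gives
\[
f_{L_1,\ldots,L_T}(\cS^A \cup \{(v,t_0)\}) - f_{L_1,\ldots,L_T}(\cS^A) \;=\; \bigl| X^A \setminus U^A \bigr|,
\]
and similarly with $A$ replaced by $B$. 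The key step will be to establish the two containments $X^B \subseteq X^A$ and $U^A \subseteq U^B$. The second is monotonicity of reachability across the unaffected rounds. For the first, any node $u \in X^B$ is reachable from $v$ in $L_{t_0}$ (since it is reachable from $S^B_{t_0} \cup \{v\}$ but not from $S^B_{t_0}$), hence also from $S^A_{t_0} \cup \{v\}$; and if it were in $\Gamma(L_{t_0}, S^A_{t_0}) \subseteq \Gamma(L_{t_0}, S^B_{t_0})$ it would contradict $u \notin \Gamma(L_{t_0}, S^B_{t_0})$. Combining the two containments yields $X^B \setminus U^B \subseteq X^A \setminus U^A$, so the marginal gain at $\cS^A$ is at least that at $\cS^B$.

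Finally, linearity of expectation over the independent live-edge graphs $L_1, \ldots, L_T$ transfers both properties from $f_{L_1,\ldots,L_T}$ to $\rho$, completing the proof.

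The only mildly delicate point is the submodularity argument for a single realization, specifically the containment $X^B \subseteq X^A$: one must recognize that the ``new reach'' contributed by $v$ in round $t_0$ shrinks as the existing round-$t_0$ seeds grow, which is essentially the classical Kempe--Kleinberg--Tardos observation restricted to a single round, while the pair-notation ground set $\cV = \bigcup_t \cV_t$ causes no complication because the added element $(v, t_0)$ only affects round $t_0$.
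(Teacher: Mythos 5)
Your proposal is correct and follows essentially the same route as the paper: condition on a fixed realization of the live-edge graphs $L_1,\ldots,L_T$, prove monotonicity and submodularity pointwise via a set containment, and transfer to $\rho$ by taking expectations. Your decomposition into $U$ and $X$ and the explicit observation that any node in $X^B$ must be reachable from $v$ alone in $L_{t_0}$ is in fact a cleaner rendering of the step the paper's proof passes over tersely.
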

\begin{proof}
	The proof of monotonicity is straightforward, so we next consider submodularity.
	According to the definition of the influence spread, we have
	\begin{align*}
		&\rho(\cS^A \cup \{(v,t)\})-\rho(\cS^A) \geq \rho(\cS^B \cup \{(v,t)\})-\rho(\cS^B) \\
		\Leftrightarrow & \E\left[\left|\bigcup_{i=1}^T \Gamma(L_i,S_i^A) \setminus \Gamma(L_t,S_t^A) \cup \Gamma(L_t,S_t^A \cup \{v\})\right|\right] \\
		&- \E\left[\left|\bigcup_{i=1}^T \Gamma(L_i,S_i^A)\right|\right] \\
		\Leftrightarrow & \E\left[\left|\bigcup_{i=1}^T \Gamma(L_i,S_i^A) \cup \Gamma(L_t,S_t^A \cup \{v\})\right|\right] - \E\left[\left|\bigcup_{i=1}^T \Gamma(L_i,S_i^A)\right|\right] \\
		\geq &  \E\left[\left|\bigcup_{i=1}^T \Gamma(L_i,S_i^B) \setminus \Gamma(L_t,S_t^B) \cup \Gamma(L_{t+1},S_t^A \cup \{v\})\right|\right] \\
		&- \E\left[\left|\bigcup_{i=1}^T \Gamma(L_i,S_i^B)\right|\right] \\
		\Leftrightarrow & \E\left[\left|\bigcup_{i=1}^T \Gamma(L_i,S_i^B) \cup \Gamma(L_t,S_t^B \cup \{v\})\right|\right] - \E\left[\left|\bigcup_{i=1}^T \Gamma(L_i,S_i^A)\right|\right]. \\
	\end{align*}
	Then it is sufficient to show that
	\begin{align}\label{proofeq:1}
		& \bigcup_{i=1}^T \Gamma(L_i,S_i^A) \cup \Gamma(L_{t},S_t^A \cup \{v\}) \setminus \bigcup_{i=1}^T \Gamma(L_i,S_i^A) \nonumber\\
		& \supseteq  \bigcup_{i=1}^T \Gamma(L_i,S_i^B) \cup \Gamma(L_{t},S_t^B \cup \{v\}) \setminus \bigcup_{i=1}^T \Gamma(L_i,S_i^B).
	\end{align}
	
	For a node $u \in \bigcup_{i=1}^T \Gamma(L_i,S_i^B) \cup \Gamma(L_{t},S_t^B \cup \{v\}) \setminus \bigcup_{i=1}^T \Gamma(L_i,S_i^B)$, 
	$u$ is reachable from $S_t^B \cup \{v\}$ in $L_{t}$, 
	but not reachable from $S_{i}^B$ in $L_{i}$ for any $i\in[T]$.
	Thus $u$ is also not reachable from $S_{i}^A$ in $L_{i}$ for any $i\in[t]$.
	Therefore, we conclude that $u \in \bigcup_{i=1}^T \Gamma(L_i,S_i^A) \cup \Gamma(L_{t},S_{t}^A \cup \{v\}) \setminus \bigcup_{i=1}^T \Gamma(L_i,S_i^A)$.
	Thus the submodularity holds.
\end{proof}

%
The monotonicity and submodularity above are the theoretical basis of designing and analyzing greedy algorithms for the non-adaptive MRIM.
In the following sections, we will consider two different settings separately for seed node selection: within-round and cross-round.
For the within-round setting, 
    one needs to determine the seed sets round by round,
    while for the cross-round setting,
    one is allowed to select nodes crossing rounds.
    
\subsection{Cross-Round Setting}

    \begin{algorithm}[t]
    \caption{{\CRGreedy}: Cross-Round Greedy Algorithm} \label{alg:crgreedy}
    \KwIn{Graph $G=(V,E)$, integers $T$, $k$ and $R$, triggering set distributions.}
    \KwOut{$\cS^o $.}
    $\cS^o \leftarrow \emptyset$; $\cC\leftarrow \cV$\;
    $c_1, c_2, \dots, c_t \leftarrow 0$\tcp*{node counter for each round}
    \For{$i = 1, 2, \dots, kT$}{ 
        for all $(v,t)\in \cC\setminus \cS^o$, 
        estimate ${\rho}(\cS^o \cup \{(v,t)\})$ by simulating the diffusion process $R$ times\;
        $(v_i, t_i) \leftarrow \argmax_{(v,t)\in \cC\setminus \cS^o} \hat{\rho}(\cS^o \cup \{(v, t)\} )$\; \label{line:crsimulate}
        $\cS^o \leftarrow \cS^o \cup \{(v_i, t_i)\}$;   
        $c_{t_i} \leftarrow c_{t_i}  + 1$\; 
        \If(\tcp*[f]{budget for round $t_i$ exhausts}){$c_{t_i}  \ge k$}{ $\cC \leftarrow \cC \setminus \cV_{t_i}$\;}
    }
    \textbf{return} $\cS^o$.
\end{algorithm}


We design a greedy algorithm for the non-adaptive MRIM under cross-round setting, 
named \alg{CR-Greedy} (Algorithm~\ref{alg:crgreedy}).
The idea of \alg{CR-Greedy} is that at every greedy step, it searches all $(v,t)$
	in the candidate space $\cC$ and
	picks the one having the maximum marginal influence spread without replacement.
If the budget for some round $t$ exhausts, then the remaining nodes of $\cV_t$ are removed from $\cC$.
Note that as $\cC$ contains nodes assigning to different rounds, \alg{CR-Greedy} selects nodes crossing rounds.

Given a set $U$ which is partitioned into disjoint sets $U_1, \dots, U_n$ and $\mathcal{I}=\{X\subseteq U\colon |X\cap U_i|\le k_i, \forall i\in[n]\}$, $(U,\mathcal{I})$ is called a {\em partition matroid}.
Thus, the node space $\cV$ with the constraint of MRIM, namely $(\cV, \{\cS\colon|S_t|\le k, \forall t\in[T]\})$, is a partition matroid.
This indicates that MRIM under cross-round setting is an instance of submodular maximization under partition matroid, 
and thus the performance of \alg{CR-Greedy} has the following guarantee \cite{fisher1978analysis}.

\begin{restatable}{theorem}{crmrim} \label{thm:crmrim}
    Let $\cS^*$ be the optimal solution of the non-adaptive MRIM under cross-round setting.
    For every $\varepsilon > 0$ and $\ell > 0$, with probability at least
    $1-\frac{1}{n^\ell}$, the output $\cS^o$ of  \alg{CR-Greedy}
    satisfies
    \begin{equation*}
    \rho(\cS^o) \geq \left(\frac{1}{2} - \varepsilon\right) \rho(\cS^*),
    \end{equation*}
    if \alg{CR-Greedy}
    uses $R=\lceil 31 k^2T^2n\log(2kn^{\ell+1})/\varepsilon^2 \rceil$ as input.
    In this case, the total running time is $O(k^3\ell T^4 n^2m\log(n)/\varepsilon^2)$, 
    assuming each simulation finishes in $O(m)$ time.
\end{restatable}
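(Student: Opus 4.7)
The plan is to reduce to the classical greedy-for-matroid result and then carefully account for the Monte Carlo noise on line~\ref{line:crsimulate}. By Lemma~\ref{lem:mrtsubmodular}, $\rho$ is monotone submodular on $\cV$, and the paragraph preceding the theorem observes that the feasibility set $\{\cS:|S_t|\le k,\forall t\in[T]\}$ forms a partition matroid on $\cV$. If \alg{CR-Greedy} had access to the exact value $\rho(\cS^o\cup\{(v,t)\})$, the classical theorem of Fisher, Nemhauser and Wolsey~\cite{fisher1978analysis} would immediately give a $1/2$-approximation; the task therefore reduces to quantifying the loss caused by replacing the exact oracle with an $R$-sample average $\hat\rho$.

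For the concentration step, each simulation returns $|\bigcup_t \Gamma(L_t,S_t)|\in[0,n]$ with mean $\rho(\cS^o\cup\{(v,t)\})\ge 1$ (the newly added seed itself is always active). A multiplicative Chernoff bound applied to $R$ i.i.d.\ samples of this quantity rescaled into $[0,1]$ then gives
\begin{equation*}
\Pr\bigl[|\hat\rho(\cS)-\rho(\cS)|\ge\delta\rho(\cS)\bigr]\le 2\exp(-R\delta^2\rho(\cS)/(3n)).
\end{equation*}
I would take a union bound over the $kT$ greedy iterations and the at most $nT$ candidates examined in each iteration; choosing $\delta=\Theta(\varepsilon/(kT))$ and solving, the stated $R=\lceil 31k^2T^2 n\log(2kn^{\ell+1})/\varepsilon^2\rceil$ drives the overall failure probability below $1/n^\ell$. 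On this good event, $|\hat\rho(\cS)-\rho(\cS)|\le\delta\rho(\cS^*)$ for every estimate the algorithm ever inspects, since each queried set $\cS^o_{i-1}\cup\{(v,t)\}$ remains feasible in the partition matroid and monotonicity of $\rho$ then forces $\rho(\cS)\le\rho(\cS^*)$. Feeding this additive slack $\xi:=\delta\rho(\cS^*)$ into the FNW analysis through the standard noisy-greedy inequality $\rho(\cS^o_{i-1}\cup\{s^o_i\})\ge\rho(\cS^o_{i-1}\cup\{s^*_{\pi(i)}\})-2\xi$, where $\pi$ is the matroid-exchange bijection between $\cS^o$ and $\cS^*$, and summing over the $kT$ iterations yields $\rho(\cS^o)\ge\tfrac12\rho(\cS^*)-kT\xi=(\tfrac12-\varepsilon)\rho(\cS^*)$.

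The running-time bound is a direct count: $kT$ outer iterations, at most $nT$ candidates per iteration, $R$ simulations per candidate, $O(m)$ time per simulation, giving $O(kT\cdot nT\cdot R\cdot m)=O(k^3\ell T^4 n^2 m\log(n)/\varepsilon^2)$. The main obstacle will be calibrating $\delta$ jointly with the union-bound cardinality so that the Chernoff exponent lands exactly at the stated $k^2T^2n\log/\varepsilon^2$ scale while the accumulated per-step error $2kT\delta\rho(\cS^*)$ still fits within the $\varepsilon\rho(\cS^*)$ budget; a secondary technicality is verifying $\rho(\cS)\ge 1$ and $\rho(\cS)\le\rho(\cS^*)$ uniformly over every queried set, but both follow from monotonicity once the non-emptiness and feasibility of the queried sets are noted.
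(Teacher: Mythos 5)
Your proposal is correct and takes essentially the same route as the paper's own (sketch) proof: Lemma~\ref{lem:mrtsubmodular} plus the partition-matroid structure of $(\cV,\{\cS\colon|S_t|\le k\})$ reduces the problem to the Fisher--Nemhauser--Wolsey $1/2$-guarantee, with a Chernoff/union-bound argument over the $kT$ iterations and $O(nT)$ candidates absorbing the Monte Carlo error into the $\varepsilon$ slack. Your running-time count ($kT$ iterations $\times$ $nT$ candidates $\times$ $R$ simulations $\times$ $O(m)$) actually matches the $T^4$ in the theorem statement more faithfully than the paper's appendix, which writes $O(TknRm)$ and lands on $T^3$.
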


\subsection{Within-Round Setting}

%

We give the second greedy algorithm  (Algorithm~\ref{alg:wrgreedy}) for the within-round setting,
	denoted as \alg{WR-Greedy}.
The idea of \alg{WR-Greedy} is that seed nodes are selected round by round.
More specifically, we greedily select seed nodes for the first round, and only after we selected $k$ seed
	nodes for the first round, then we greedily select seed nodes for the next round, and so on.
The immediate advantage of \alg{WR-Greedy} over \alg{CR-Greedy} is that in each greedy step
	the \alg{WR-Greedy} only searches candidates $(v,t)$ with $t$ being the current round number, while
	\alg{CR-Greedy} needs to search $(v,t)$ for all rounds.
This would give at least a factor of $T$ saving on the running time of \alg{WR-Greedy}.
However, as we will show below, it pays a price of a slightly lower approximation ratio.
	
To analyze \alg{WR-Greedy}, we utilize the result of Lemma~\ref{lem:mrtsubmodular}
	in a different way.
First, when we fix the seed sets in round $1, \ldots, t-1$, and only vary the seed sets in round $t$,
	the influence spread certainly still satisfies the monotonicity and submodularity.
Therefore, within round $t$, the seed set $\cS^o_t$ selected by \alg{WR-Greedy} for round $t$ is a $(1-1/e-\varepsilon)$-approximate
	solution when the previous $t-1$ seed sets are fixed.
Second, we could view seed set $\cS_t$ of each round $t$ as a unit, and when adding it to the previous units,
	it would also satisfy the monotonicity and submodularity by Lemma~\ref{lem:mrtsubmodular}.
Namely, $\rho(\cS_1 \cup  \ldots \cup \cS_{t}) \le \rho(\cS_1 \cup \ldots \cup \cS_{t'}) $ for all $t < t'$, and 
	$\rho(\cS_1 \cup  \ldots \cup \cS_{t} \cup \cS_{t''}) - \rho(\cS_1 \cup  \ldots \cup \cS_{t})
	\le \rho(\cS_1 \cup  \ldots \cup \cS_{t'} \cup \cS_{t''}) - \rho(\cS_1 \cup  \ldots \cup \cS_{t'}) $,
	for all $t < t' < t''$.
This means that \alg{WR-Greedy} can be viewed as greedily selecting seed set units $\cS_t$ round by round
	with the monotonicity and submodularity, while within each round, it can not find the optimal $\cS_t$ but instead
	by a $(1-1/e-\varepsilon)$-approximate solution.
Together, by the result in~\cite{NWF78}, we can show that \alg{WR-Greedy} achieves an approximation factor of $1-e^{-(1-1/e)}-\varepsilon$, as summarized in the following theorem.

\begin{algorithm}[t]
    \caption{{\WRGreedy}: Within-Round Greedy Algorithm} \label{alg:wrgreedy}
    \KwIn{Graph $G=(V,E)$, integers $T$, $k$ and $R$, triggering set distributions.}
    \KwOut{$\cS^o$.}
    
    $\cS^o \leftarrow \emptyset$\;
    \For{$t = 1, 2, \dots, T$}{ 
        \For{$i = 1, 2, \dots, k$}{
            for all $(v,t)\in \cV_t\setminus \cS^o$, 
            estimate ${\rho}(\cS^o \cup \{(v,t)\})$ by simulating the diffusion process $R$ times\; \label{line:simulate}
            $(u,t) \leftarrow \argmax_{(v,t)\in \cV_t\setminus \cS^o} \hat{\rho}(\cS^o \cup \{(v,t)\})$\; 
            $\cS^o \leftarrow \cS^o \cup \{(u,t)\}$\;
        }
    }
    \textbf{return} $\cS^o$.
\end{algorithm}

%
%

\begin{restatable}{theorem}{wrmrim} \label{thm:nonadaptive}
    Let $\cS^*$ be the optimal solution of the non-adaptive MRIM under within-round setting.
    For every $\varepsilon > 0$ and $\ell > 0$, with probability at least
    $1-\frac{1}{n^\ell}$, the output $\cS^o$ of \alg{WR-Greedy}
    satisfies
    \begin{equation*}
    \rho(\cS^o) \geq \left(1 - e^{-(1-\frac{1}{e}) } - \varepsilon\right) \rho(\cS^*),
    \end{equation*}
    if \alg{WR-Greedy}
    uses $R=\lceil 31 k^2n\log(2kn^{\ell+1} T)/\varepsilon^2 \rceil$ as input.
    In this case, the total running time is $O(k^3\ell Tn^2m\log(n T)/\varepsilon^2)$, 
    assuming each simulation finishes in $O(m)$ time.
\end{restatable}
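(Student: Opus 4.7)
The plan combines a Monte-Carlo concentration argument with a two-level submodular analysis: inside each round, \alg{WR-Greedy} is a noisy cardinality-$k$ greedy that achieves a $(1-1/e-\varepsilon_1)$-approximation, and across rounds it is an approximate batch greedy whose $T$ super-steps each capture an $\alpha := (1-1/e-\varepsilon_1)$ fraction of the best available marginal, so the template of~\cite{NWF78} delivers the claimed $1 - e^{-\alpha} \ge 1 - e^{-(1-1/e)} - \varepsilon$ ratio after choosing $\varepsilon_1 = \Theta(\varepsilon)$.

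First I would set the sampling error. The at most $kTn$ distinct estimates $\hat{\rho}(\cS^o \cup \{(v,t)\})$ on line~\ref{line:simulate} can each be controlled by a two-sided Hoeffding bound on $R$ independent diffusion simulations, whose values lie in $[0,n]$; a union bound over the $kTn$ queries together with the choice $R = \lceil 31 k^2 n \log(2kn^{\ell+1}T)/\varepsilon^2 \rceil$ makes every estimate within a small additive error of the truth with probability at least $1-n^{-\ell}$. Following the template of~\cite{kempe03}, this additive error is small enough to preserve the standard $(1-1/e)$ greedy guarantee for the within-round greedy, losing only the extra $\varepsilon_1$.

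Conditional on the good estimation event, the within-round analysis is immediate from Lemma~\ref{lem:mrtsubmodular}: with $A_{t-1} := \cS_1^o \cup \cdots \cup \cS_{t-1}^o$, the map $\cS_t \mapsto \rho(A_{t-1}\cup\cS_t)$ is monotone submodular in $\cS_t \subseteq \cV_t$, so \alg{WR-Greedy}'s $k$ inner steps in round $t$ produce $\cS_t^o$ satisfying $\rho(A_{t-1}\cup\cS_t^o) - \rho(A_{t-1}) \ge \alpha\,\phi_t(A_{t-1})$, where $\phi_t(A_{t-1}) := \max_{\cS \subseteq \cV_t,\,|\cS|\le k}[\rho(A_{t-1}\cup\cS) - \rho(A_{t-1})]$. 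Viewing each $\cS_t$ as a super-element, Lemma~\ref{lem:mrtsubmodular} also makes $\rho$ monotone submodular in the sequence of batches, so the approximate-greedy analysis of~\cite{NWF78} yields $\rho(A_T) \ge (1 - (1-\alpha/T)^T)\rho(\cS^*) \ge (1 - e^{-\alpha})\rho(\cS^*)$ provided that at each step $t$, $\phi_t(A_{t-1})$ dominates a $1/T$ share of the remaining gap $\rho(\cS^*) - \rho(A_{t-1})$.

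The hard part will be justifying this $1/T$ share, because \alg{WR-Greedy} only optimizes over $\cV_t$ at round $t$, whereas $\cS^*$ uses contents from every round. I would exploit the round symmetry of the MRT model (the live-edge graphs $L_t,\dots,L_T$ are i.i.d.\ and $A_{t-1}$ touches none of them) to argue that the marginal of adding any $k$-batch in a fresh round depends only on its node content, so $\cS_t^*,\dots,\cS_T^*$ are all candidates for $\phi_t$ whose marginals are at most $\phi_t(A_{t-1})$; for the already-processed rounds $i < t$, submodularity combined with \alg{WR-Greedy}'s per-round $\alpha$-approximate choice bounds $\rho(A_{t-1}\cup\cS_i^*) - \rho(A_{t-1})$ by $\gamma_i/\alpha$, where $\gamma_i := \rho(A_i) - \rho(A_{i-1})$. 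Plugging these into the standard submodular decomposition $\rho(\cS^*) - \rho(A_{t-1}) \le \sum_i [\rho(A_{t-1}\cup\cS_i^*) - \rho(A_{t-1})]$ and rearranging delivers the required per-step progress, from which the $(1 - e^{-\alpha})$ ratio follows. The stated $O(k^3 \ell T n^2 m \log(nT)/\varepsilon^2)$ running time is then a direct substitution: $kT$ greedy steps $\times$ $O(n)$ candidates per step $\times$ $R$ simulations $\times$ $O(m)$ per simulation.
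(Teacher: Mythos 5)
Your overall architecture (per-round Chernoff/union bound, inner $(1-1/e-\varepsilon_1)$ greedy, outer approximate greedy giving $1-e^{-\alpha}$) matches the paper's, and you correctly isolate the one step the paper's sketch glosses over: why the best round-$t$ batch captures a $1/T$ share of $\rho(\cS^*)-\rho(A_{t-1})$ when $\cS^*$ has batches in rounds $i<t$ that \alg{WR-Greedy} can no longer touch. Your resolution of that step, however, does not close it. Bounding $\rho(A_{t-1}\cup\cS_i^*)-\rho(A_{t-1})\le\gamma_i/\alpha$ for $i<t$ and summing over $i$ yields only
$\phi_t(A_{t-1})\ge\frac{1}{T-t+1}\bigl[\rho(\cS^*)-\rho(A_{t-1})-\tfrac{1}{\alpha}\rho(A_{t-1})\bigr]$,
whose right-hand side can become nonpositive well before the target ratio is reached: for $T=2$ and $\alpha=1$ this recursion certifies only $\rho(A_2)\ge\tfrac12\rho(\cS^*)$ rather than the required $1-(1-\tfrac12)^2=\tfrac34$. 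Nor can you repair it by arguing that a batch's marginal in an already-used round $i<t$ is dominated by its marginal in the fresh round $t$: that comparison can fail for general triggering distributions (three nodes $u,w,x$ with $T(x)$ equal to $\{u\}$ or $\{w\}$ each with probability $1/2$; after seeding $u$ in round $1$, the marginal of $w$ in round $1$ is $3/2$ but in round $2$ only $5/4$). So the round-labelled decomposition of $\cS^*$ is the wrong object to telescope.

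The fix is to push the exchangeability you already invoked one step further. Define $g$ on multisets of $k$-node-sets by $g(\{B_1,\dots,B_j\})=\E\bigl[\bigl|\bigcup_{i=1}^{j}\Gamma(L_i,B_i)\bigr|\bigr]$ with one i.i.d.\ live-edge graph per batch. The coupling used for Lemma~\ref{lem:mrtsubmodular} shows $g$ is monotone and submodular on multisets; $\rho(\cS^*)=g(\{S_1^*,\dots,S_T^*\})$; and round $t$ of \alg{WR-Greedy} is exactly an $\alpha$-approximate greedy step for $g$ under the cardinality constraint $|\cB|\le T$, because $\rho(A_{t-1}\cup\{(S,t)\})=g(\{S_1^o,\dots,S_{t-1}^o,S\})$. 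Then the standard decomposition $g(\cB^*)-g(\cA_{t-1})\le\sum_{B\in\cB^*}\bigl[g(\cA_{t-1}\cup\{B\})-g(\cA_{t-1})\bigr]\le T\,\phi_t(A_{t-1})$ holds with every summand realized in a \emph{fresh} round, so every batch of the optimum is a legitimate round-$t$ candidate, the $1/T$ share holds unconditionally, and the $1-e^{-\alpha}$ bound follows from~\cite{NWF78,GS07} with no special treatment of $i<t$. The concentration and running-time portions of your proposal are fine and match the paper's.
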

\begin{proof}[Proof (Sketch)]
	The proof follows the same structure as the proof of Theorem 3.7 in~\cite{chen2013information}, but
	it needs to accommodate the new double greedy algorithm structure and the double submodular
	property.
	let $\varepsilon_0 = e^{(1-1/e)} \varepsilon/2$.
	From the inner-submoduarlity property of the MRT model and 
	the proof of Theorem 3.7 in~\cite{chen2013information} based on the Chernoff bound,
	we can conclude that when $R\ge \lceil 27 k^2n\log(2kn^{\ell+1} T)/\varepsilon_0^2 \rceil$,
	for each round $t \in [T]$, with probability at least $1- \frac{1}{n^\ell T}$, 
	the seed set $\cS_t^o$ found by {\WRGreedy} is a $(1-1/e - \varepsilon_0)$ approximation
	of the optimal solution for round $i$ maximizing the marginal gain of
	$\rho(\cS_1 \cup \cS_2 \cup \dots\cup \cS_{t-1} \cup \cS) - \rho(\cS_1 \cup \cS_2 \cup \dots\cup \cS_{t-1}) $.
	Using the union bound, we know that with probability at least $1- \frac{1}{n^\ell}$,
	for all $t \in [T]$ seed set $S_t^o$ is an $(1-1/e - \varepsilon_0)$ approximation
	of the optimal solution for round $t$.
	
	Now at each greedy step, if the new item found is not the one giving the best marginal contribution
	but an $\alpha$ approximation of the optimal marginal solution, an easy extension of
	\cite{NWF78}, already reported in~\cite{GS07}, show that the greedy algorithm can give
	a $1-e^{-\alpha}$ approximate solution to the sumodular maximization problem.
	In our case, consider the outer level when treating each subset $S_i$ as an item, Lemma~\ref{lem:mrtsubmodular} shows that $\rho$ is also submodular in this case 
	(the outer-submodularity), and we just argued in the previous paragraph that
	in each round, the selected $\cS_t^o$ is a $(1-1/e - \varepsilon_0)$ approximation.
	Therefore, the final output $\cS^o$ satisfies
	\begin{equation*}
		\rho(\cS^o) \ge (1-e^{-(1-1/e-\varepsilon_0)}) \rho(\cS^*).
	\end{equation*}
	Because $\varepsilon_0 = e^{(1-1/e)} \varepsilon/2$, it is easy to verify that 
	$1-e^{-(1-1/e-\varepsilon_0)} \ge 1-e^{-(1-1/e)} - \varepsilon$ in this case, and
	it is sufficient to have $R=\lceil 31 k^2n\log(2kn^{\ell+1} T)/\varepsilon^2 \rceil$.
	Finally, the total running time is simply 
	$O(TknRm) = O(k^3\ell Tn^2m\log(n T)/\varepsilon^2)$.
\end{proof}

Compared with Theorem~\ref{thm:crmrim}, the approximation ratio drops from $1/2 -\varepsilon$ to 
	$0.46 - \varepsilon$, but the running time improves by a factor of $T^3$.
One factor of $T$ is because each greedy step of \alg{CR-Greedy} needs to search a space $T$ times larger
	than that of \alg{WR-Greedy}, and the other factor of $T^2$ is because \alg{CR-Greedy} 
	needs more accurate Monte Carlo estimates for each evaluation of $\rho(\cS)$ to avoid deviation,
	again because it searches
	a larger space.
This shows the trade-off between efficiency and approximation ratio, that \alg{CR-Greedy} has a better performance guarantee while \alg{WR-Greedy} is much more efficient.

\section{Adaptive MRIM}\label{sec:AMRIM}

We now study the adaptive MRIM problem. 
Informally, at the beginning of each round, 
one need to determine the seed set for the current round based on 
	the propagation results observed  in previous rounds. 
The formal definition follows the framework and terminology
	provided in~\cite{GoloKrause11} and will be given in Section~\ref{sec:AMRIMDef}.
We then argue about the adaptive submodularity property, 
	propose the adaptive greedy policy and analyze its performance in Section~\ref{sec:adagreedy}.

\subsection{Notations and Definition}\label{sec:AMRIMDef}

We call $(S_t, t)$ an {\em item}, where $S_t$ is the seed set chosen in round $t$.
Let $\cE$ be the set of all the possible items.
For each item $(S_t, t)$,  after the propagation, the nodes and edges participated in the
	propagation are observed as the feedback.
Formally, the feedback is referred to as a {\em state}, which is the subgraph of
	live-edge graph $L_t$ that can be reached by $S_t$.
A {\em realization} is a function $\phi: \cE \rightarrow O$ mapping 
	every possible item $(S_t, t)$ to a state, where $O$ is the set of all possible states.
Realization $\phi$ represents one possible propagation from a possible seed set in a round.
Let $\phi(S_t,t)$ denote the state of $(S_t, t)$ under realization $\phi$. 
We use $\Phi$ to denote a random realization, and the randomness comes from 
	random live-edge graphs $L_1, \ldots, L_T$.
For the adaptive MRIM, in each round $t$, we pick
an item $(S_t, t)$, see its state $\Phi(S_t,t)$, pick the next item $(S_{t+1},t+1)$,
see its state, and so on. 
After each pick, previous observations can be represented as a
{\em partial realization} $\psi$, a function from some subset of $\cE$ to their states.
For notational convenience, we represent $\psi$ as a relation, so that $\psi \subseteq \cE \times \textit{O}$
equals $\{((S_t,t),o)\colon \psi(S_t,t)=o\}$. 
We define $\dom(\psi) = \{ (S_t, t)\colon \exists o, ((S_t, t), o) \in \psi \}$
as the {\em domain} of $\psi$.
A partial realization $\psi$ is {\em consistent with} a realization $\phi$ if they are equal everywhere in the domain of $\psi$,
denoted as $\phi \sim \psi$. 
If $\psi$ and $\psi'$ are both consistent with some $\phi$, and $\dom(\psi) \subseteq \dom(\psi')$,
$\psi$ is a {\em subrealization} of $\psi'$, also denoted as $\psi \subseteq \psi'$.


A {\em policy} $\pi$ is an adaptive strategy for picking items based on partial realizations in $\cE$.
In each round, $\pi$ will pick the next set of seeds $\pi(\psi)$ 
	based on partial realization $\psi$ so far.
If partial realization $\psi$ is not in the domain of $\pi$, 
	the policy stops picking items. 
We use $\dom(\pi)$ to denote the domain of $\pi$. 
Technically, we require that $\dom(\pi)$ be closed under subrealizations: 
If $\psi' \in \dom(\pi)$ and $\psi$ is a subrealization of $\psi'$ then $\psi \in \dom(\pi)$.
We use the notation $E(\pi, \phi)$ to refer to the set of items selected by $\pi$ under realization $\phi$.
The set of items in $E(\pi, \phi)$ is always in the form 
	$\{(S_1, 1), \ldots, (S_t, t)\}$, so sometimes we also refer to it as 
	sequence of seed sets.

We wish to maximize, subject to some constraints, 
a utility function $f\colon 2^\cE \times O^\cE \rightarrow \R_{\geq 0}$ that depends on the picked items and 
the states of them. 
In the adaptive MRIM, $f(\{(S_1, 1), \ldots, (S_t, t)\}, \phi)$ 
	is the total number of active nodes by round $t$ from the respective seed sets,
	i.e., $|\bigcup_{i=1}^t \Gamma(L^\phi_i, S_i)|$ where $L^\phi_t$ is the live-edge graph of round $t$.
Based on the above notations, the expected utility of a policy $\pi$ is
$f_{\avg}(\pi) = \E_\Phi [f(E(\pi, \Phi),\Phi)]$ where the expectation is taken over 
	the randomness of $\Phi$.
Namely, $f_{\avg}(\pi)$ is the expected number of active nodes under policy $\pi$.
Let $\Pi_{T,k}$ be the set of all policies that select seed sets in at most $T$ rounds and
	each seed set has at most $k$ nodes.
The goal of the adaptive MRIM is to find the best policy $\pi$ such that:
$\pi^* = \argmax_{\pi \in \Pi_{T,k}} f_{\avg}(\pi)$.

\subsection{Adaptive Submodularity and Greedy Policy}
\label{sec:adagreedy}

Given a partial realization $\psi$ of $t-1$ rounds with 
$\dom(\psi) = \{(S_1, 1), (S_2, 2), \dots, (S_{t-1}, t-1)\}$, 
and the seed set $S_{t}$ for round $t$, 
the {\em conditional expected marginal benefit} of item $(S_{t},t)$ conditioned on having observed $\psi$ 
    is defined as
    \begin{align*}
        \Delta((S_{t},t) | \psi) 
         = \E_\Phi \left[ f(\dom(\psi) \cup \{(S_{t},t)\}, \Phi) - f(\dom(\psi), \Phi) \mid \Phi \sim \psi \right]. 
    \end{align*}
The {\em conditional expected marginal gain} of a policy $\pi$ is defined as
\begin{align*}
    \Delta(\pi | \psi) 
    = \E \left[ f(\dom(\psi) \cup E(\pi, \Phi), \Phi) - f(\dom(\psi), \Phi) \mid \Phi \sim \psi \right].
\end{align*}
The adaptive MRIM satisfies the adaptive monotonicity and submodularity shown as below.
The proofs require a careful analysis of the partial realization in the MRT model and is given in \OnlyInShort{\cite{sun2018multi}.} \OnlyInFull{the appendix.}

\begin{restatable}{lemma}{adamonotone}[Adaptive Monotonicity] 
    For all $t>0$, for all partial realization $\psi$ with $t-1$ rounds and $\Pr[\Phi \sim \psi]>0$, and for all item $(S_t, t)$, we have:
    \begin{equation*}
    \Delta((S_t,t) | \psi) \geq 0.
    \end{equation*}
\end{restatable}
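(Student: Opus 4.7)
The plan is to establish the inequality pointwise for every realization in the support of the conditional distribution $\Phi \mid \Phi \sim \psi$, and then conclude by taking expectations. Since $f$ is defined as the size of a union of reachable sets across rounds, adding one more item to the collection can only enlarge this union, so the marginal gain is non-negative for each fixed realization, and therefore also in conditional expectation.

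More concretely, writing $\dom(\psi) = \{(S_1,1), (S_2,2), \ldots, (S_{t-1}, t-1)\}$, I would first unfold the definitions to obtain, for every realization $\phi$,
\begin{align*}
f(\dom(\psi), \phi) &= \Big| \textstyle\bigcup_{i=1}^{t-1} \Gamma(L_i^\phi, S_i) \Big|, \\
f(\dom(\psi) \cup \{(S_t,t)\}, \phi) &= \Big| \textstyle\bigcup_{i=1}^{t} \Gamma(L_i^\phi, S_i) \Big|,
\end{align*}
where $L_1^\phi, \ldots, L_T^\phi$ are the live-edge graphs determined by $\phi$. The set inclusion $\bigcup_{i=1}^{t-1} \Gamma(L_i^\phi, S_i) \subseteq \bigcup_{i=1}^{t} \Gamma(L_i^\phi, S_i)$ holds deterministically (one more term on the right), so the difference
\begin{equation*}
f(\dom(\psi) \cup \{(S_t,t)\}, \phi) - f(\dom(\psi), \phi) = \Big| \Gamma(L_t^\phi, S_t) \setminus \textstyle\bigcup_{i=1}^{t-1} \Gamma(L_i^\phi, S_i) \Big| \geq 0
\end{equation*}
for every $\phi$, and in particular for every $\phi \sim \psi$. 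Taking the conditional expectation of a non-negative random variable then gives $\Delta((S_t,t) \mid \psi) \geq 0$, which is exactly what we want.

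The main delicacy, such as it is, lies in being careful about what the conditioning $\Phi \sim \psi$ actually fixes: the partial realization $\psi$ reveals the portions of the live-edge graphs $L_1, \ldots, L_{t-1}$ reachable from $S_1, \ldots, S_{t-1}$, but leaves the remainder of $L_1, \ldots, L_{t-1}$ and the entirety of $L_t, \ldots, L_T$ random. Since the pointwise inequality above holds for every realization whatsoever, this underdetermination of $L_t$ under $\psi$ causes no issue; the conditioning only restricts the underlying probability measure to an event of positive probability, and the integrand is non-negative on the whole event. I do not expect any substantive obstacle beyond verifying that this bookkeeping is correct.
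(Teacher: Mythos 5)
Your proof is correct. The pointwise argument is exactly the right one: since $f(\mathcal{E}',\phi)=\bigl|\bigcup_{(S_i,i)\in\mathcal{E}'}\Gamma(L_i^\phi,S_i)\bigr|$, adding the item $(S_t,t)$ can only enlarge the union for every fixed realization $\phi$, so the integrand in $\Delta((S_t,t)\mid\psi)$ is non-negative on the (positive-probability) event $\{\Phi\sim\psi\}$, and the conditional expectation is non-negative. Worth noting: the paper itself never writes out a proof of this monotonicity lemma --- the appendix only proves the adaptive submodularity lemma and invokes monotonicity in passing --- so there is no paper proof to diverge from; your argument is the natural one and correctly supplies the missing details, including the observation that the under-determination of $L_t$ under $\psi$ is harmless because the inequality holds realization by realization.
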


\begin{restatable}{lemma}{adasubmodular}[Adaptive Submodularity] \label{lem:adaSubmodular}
    For all $t>0$, for all partial realization $\psi$ with $i-1$ rounds and partial realization $\psi'$ such that $\psi'$ is a subrealization of $\psi$, i.e., $\psi' \subseteq \psi$,
    and for all item $(S_{i},i)$, we have
    \begin{equation*}
    \Delta((S_{t},t) | \psi') \geq  \Delta((S_{t},t) | \psi).
    \end{equation*}
\end{restatable}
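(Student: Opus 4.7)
The plan is to reduce the inequality $\Delta((S_t,t)\mid\psi')\ge \Delta((S_t,t)\mid\psi)$ to a pointwise set-inclusion comparison, exploiting the fact that the live-edge graphs $L_1,\ldots,L_T$ are mutually independent across rounds in the MRT model.

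First I would rewrite each conditional marginal benefit in a convenient form. For any partial realization $\psi$, let $A(\psi) := \bigcup_{(S_i,i)\in\dom(\psi)} \Gamma(L_i, S_i)$ denote the set of nodes already activated by the previously selected seeds. This is a deterministic function of $\psi$, because the state $\psi(S_i,i)$ records the subgraph of $L_i$ reachable from $S_i$ and thus pins down $\Gamma(L_i, S_i)$. Consequently, for any realization $\phi$ compatible with $\psi$,
$f(\dom(\psi)\cup \{(S_t,t)\},\phi)-f(\dom(\psi),\phi) = |\Gamma(L_t,S_t)\setminus A(\psi)|$,
and hence $\Delta((S_t,t)\mid\psi) = \E\!\left[|\Gamma(L_t,S_t)\setminus A(\psi)|\,\middle|\,\Phi\sim\psi\right]$; the same identity holds with $\psi'$ in place of $\psi$.

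Next I would invoke the product structure of the MRT model. The event $\{\Phi\sim\psi\}$ only constrains the reachable subgraphs of $L_1,\ldots,L_{t-1}$, and because $L_t$ is drawn independently of these, the conditional law of $L_t$ given $\Phi\sim\psi$ coincides with its marginal law $\mu_t$. Therefore $\Delta((S_t,t)\mid\psi) = \E_{L_t\sim\mu_t}[|\Gamma(L_t,S_t)\setminus A(\psi)|]$, and analogously for $\psi'$, so that both expectations are now taken with respect to the same distribution on $L_t$.

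The inequality now reduces to the deterministic set-inclusion $A(\psi')\subseteq A(\psi)$, which is immediate: since $\dom(\psi')\subseteq\dom(\psi)$ and $\psi,\psi'$ agree on their common domain, the union defining $A(\psi')$ is indexed over a subset of the terms defining $A(\psi)$. For every fixed $L_t$ we then have $|\Gamma(L_t,S_t)\setminus A(\psi')|\ge|\Gamma(L_t,S_t)\setminus A(\psi)|$, and integrating against $\mu_t$ yields the claim. The delicate step to write out carefully is the independence argument, since the seeds $S_i$ in $\dom(\psi)$ would normally be produced by an adaptive policy; the resolution is that inside a partial realization these items are fixed constants, so $\{\Phi\sim\psi\}$ is genuinely an event about $L_1,\ldots,L_{t-1}$ alone, and the independence across rounds in the MRT model then leaves the distribution of $L_t$ intact.
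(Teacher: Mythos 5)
Your proof is correct, but it takes a genuinely different and more elementary route than the paper. The paper follows the Golovin--Krause template: it constructs a coupled distribution $\mu(\phi,\phi')$ over pairs of full realizations consistent with $\psi$ and $\psi'$, verifies that $p(\phi\mid\psi)=\sum_{\phi'}\mu(\phi,\phi')$ and $p(\phi'\mid\psi')=\sum_{\phi}\mu(\phi,\phi')$, proves a pointwise inequality on the support of $\mu$, and then sums. You instead observe that the marginal benefit factorizes as $f(\dom(\psi)\cup\{(S_t,t)\},\phi)-f(\dom(\psi),\phi)=|\Gamma(L_t,S_t)\setminus A(\psi)|$, where $A(\psi)$ is a \emph{deterministic} function of the partial realization (because each observed state pins down the reachable set in its round), and that the conditioning event $\{\Phi\sim\psi\}$ is measurable with respect to $(L_1,\dots,L_{t-1})$ alone while the integrand depends only on the independent $L_t$. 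This collapses both conditional expectations to integrals of pointwise-comparable functions against the \emph{same} marginal law of $L_t$, and the monotone inclusion $A(\psi')\subseteq A(\psi)$ finishes the argument. Your approach is shorter and avoids the bookkeeping of the coupling (which in the paper's write-up is also where the notation is most error-prone); its only cost is that it leans crucially on the round-independence of the live-edge graphs and on the fact that the added item lives in a round strictly after every round observed in $\psi$ --- you correctly flag the former, and the latter is guaranteed by the lemma's hypotheses. The paper's coupling argument is the more portable template (it would survive weakening of the independence assumption provided a suitable coupling still exists), but for the MRT model as defined, your direct argument is complete and valid.
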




Following the framework of~\cite{GoloKrause11}, adaptive monotonicity and adaptive submodularity
	enable an adaptive greedy policy with a constant approximation of the optimal
	adaptive policy.
{\alg{AdaGreedy} (Algorithm~\ref{alg:adaptivegreedy}) is the greedy adaptive policy for MRIM.
Note that adaptive algorithms operate at per round base --- it takes feedback from
	previous rounds and selects the item for the current round, and then obtain new feedback.
Thus we present \alg{AdaGreedy} for a generic round $t$.
Besides the problem input such as the graph $G$, the triggering model parameters, 
	seed set budget $k$, and simulation number $R$, \alg{AdaGreedy} takes the set of already activated nodes $A_{t-1}$ 
	as the feedback from the previous rounds, and aims at finding the seed set $S_t$
	of size $k$ to maximize the expected marginal gain $\Delta((S_{t},t) | \psi)$, which
	is the expected number of newly activated nodes in round $t$.
However, this problem is NP-hard, so we use a Monte Carlo greedy approximation
	$\alg{MC-Greedy}$ algorithm to find an approximate solution. 
$\alg{MC-Greedy}$ greedily finds the seed with the maximum estimated marginal influence spread until $k$ seeds being selected, where the marginal influence spread of adding an unselected seed is estimated by simulating the propagation $R$ times.
In \alg{AdaGreedy}, $\alg{MC-Greedy}$ won't count the influence of a node if it has been activated in previous rounds. 
The rationale is that maximizing the expected marginal gain $\Delta((S_{t},t) | \psi)$
	is equivalent to the weighted influence maximization task in which we treat nodes in $A_{t-1}$ with
	weight $0$ and other nodes with weight $1$, and we maximize the expected total
	weight of the influenced nodes.
By \cite{mossel2007}, we know that the weighted version is also monotone and submodular, 
so we could use a greedy algorithm to obtain a seed set $S_t$
as a $(1-1/e -\varepsilon)$ approximation of the best seed set for round $t$.
We could use $R$ Monte Carlo simulations to estimate the weighted influence spread,
and $R$ is determined by the desired approximation accuracy $\varepsilon$.
Once the seed set $S_t$ is selected for round $t$, the actual propagation from $S_t$
	will be observed, and the active node set $A_t$ will be updated as the feedback for
	the next round.

The following theorem summarizes the correctness and the time complexity of \alg{AdaGreedy}.

\begin{algorithm}[t] 
	\caption{\alg{AdaGreedy}: Adaptive Greedy for Round $t$} \label{alg:adaptivegreedy}
	\KwIn{Graph $G=(V,E)$, integers $T$, $k$ and $R$, triggering set distributions, 
		active node set $A_{t-1}$ by round $t-1$.}
	\KwOut{Seed set $S_t$, updated active nodes $A_t$}
	
	$S_t\leftarrow $ \alg{MC-Greedy($G, A_{t-1}, k, R$)}\tcp*{Monte Carlo Greedy} 
	Observe the propagation of $S_t$, update activated nodes $A_t$\;
	\textbf{return} $(S_t,t)$, $A_t$.
\end{algorithm}

%
%
%
%

\begin{restatable}{theorem}{adagreedy} \label{thm:AdaGreedy}
	Let $\pi^{\ag}$ represents the policy corresponding to the \alg{AdaGreedy} algorithm. 
	For any $\varepsilon > 0$ and $\ell > 0$, 
		if we use $R=\lceil 31 k^2n\log(2kn^{\ell+1} T)/\varepsilon^2 \rceil$ simulations
		for each influence spread estimation, then
		with probability at least $1-\frac{1}{n^\ell}$, 
	\begin{equation*}
f_{\avg}(\pi^{\ag}) \geq \left(1 - e^{-(1-\frac{1}{e})}-\varepsilon \right) f_{\avg}(\pi^*).
\end{equation*}
 In this case, the total running time for $T$-round \alg{AdaGreedy}
 	is $O(k^3\ell Tn^2m\log(n T)/\varepsilon^2)$.
\end{restatable}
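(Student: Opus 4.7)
The plan is to view \alg{AdaGreedy} as an \emph{approximate} adaptive greedy policy on top of the adaptive submodular utility $f$, and then combine two levels of approximation: the outer loss from the adaptive-greedy approximation theorem of \citet{GoloKrause11}, and the inner loss from using Monte Carlo greedy to solve the per-round maximization. I would first observe that Lemma~\ref{lem:adaSubmodular} together with its monotonicity companion lemma show that $f$ (defined as $|\bigcup_{i\le t}\Gamma(L_i^\phi,S_i)|$) is adaptive monotone and adaptive submodular with respect to the partial realization $\psi$ encoding the observed propagation from the first $t{-}1$ rounds. Thus the idealized ``exact'' adaptive greedy policy $\pi^g$, which in round $t$ chooses $S_t \in \argmax_{|S|\le k}\Delta((S,t)\mid\psi)$, would satisfy $f_{\avg}(\pi^g)\ge(1-1/e)f_{\avg}(\pi^*)$ by Theorem~5.2 of \citet{GoloKrause11}.

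Next I would analyze the inner step executed by \alg{MC-Greedy} in a single round. Conditioned on the observed active set $A_{t-1}$, the quantity $\Delta((S,t)\mid\psi)$ equals the expected number of newly activated nodes in round $t$, which is precisely a weighted influence spread on an independent fresh live-edge graph $L_t$ where nodes in $A_{t-1}$ carry weight $0$ and all other nodes carry weight $1$. By the reduction of \cite{mossel2007}, this weighted spread is still monotone and submodular in $S$, so the classical greedy algorithm gives a $(1-1/e)$-approximation to its maximum. Introducing the Monte Carlo estimation with $R$ simulations per evaluation, a Chernoff bound argument (identical in form to Theorem~3.7 of \cite{chen2013information}, applied round by round and then union-bounded over all $T$ rounds, all $k$ greedy steps, and all $n$ candidate nodes) shows that choosing $R=\lceil 31 k^2 n\log(2kn^{\ell+1}T)/\varepsilon_0^2\rceil$ guarantees that, with probability at least $1-1/n^\ell$, the set $S_t$ returned in every round is a $(1-1/e-\varepsilon_0)$-approximation of $\max_{|S|\le k}\Delta((S,t)\mid\psi)$.

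I would then invoke the approximate-greedy extension of the adaptive submodular theorem: if an adaptive policy selects, at every step of its decision tree, an item achieving at least an $\alpha$-fraction of the maximum conditional marginal gain, then the policy obtains at least a $(1-e^{-\alpha})$-fraction of the optimum $f_{\avg}(\pi^*)$. This is the adaptive analogue of the Goundan--Schulz extension used in the proof of Theorem~\ref{thm:nonadaptive}, and it follows by plugging the $\alpha$-approximate marginal condition into the telescoping argument in the proof of Theorem~5.2 of \cite{GoloKrause11}. Applying this with $\alpha=1-1/e-\varepsilon_0$ yields
\begin{equation*}
f_{\avg}(\pi^{\ag}) \ge \left(1-e^{-(1-1/e-\varepsilon_0)}\right) f_{\avg}(\pi^*)
\end{equation*}
on the high-probability event.

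Finally, I would close the argument by choosing $\varepsilon_0 = e^{(1-1/e)}\varepsilon/2$, which, as in the proof of Theorem~\ref{thm:nonadaptive}, is enough to deduce $1-e^{-(1-1/e-\varepsilon_0)}\ge 1-e^{-(1-1/e)}-\varepsilon$, and to verify that $R=\lceil 31 k^2 n\log(2kn^{\ell+1}T)/\varepsilon^2\rceil$ suffices; the running time is then $T\cdot k\cdot n\cdot R\cdot O(m) = O(k^3\ell T n^2 m\log(nT)/\varepsilon^2)$. The main obstacle I anticipate is the careful statement and justification of the approximate-greedy adaptive theorem, since \citet{GoloKrause11} state their theorem for an exact greedy selection, and I must argue that their inductive argument goes through when each per-step maximization is replaced by an $\alpha$-approximate solution and, crucially, that the union bound is taken uniformly over the random realization $\Phi$ (so that the same $(1-1/e-\varepsilon_0)$ guarantee holds at every decision node the policy could ever reach, not just on one particular sample path).
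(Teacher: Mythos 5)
Your proposal is correct and follows essentially the same route as the paper's proof: an outer application of the (approximate-greedy version of) Theorem~5.2 of \citet{GoloKrause11} via adaptive monotonicity and adaptive submodularity, combined with an inner per-round analysis showing \alg{MC-Greedy} returns a $(1-1/e-\varepsilon_0)$-approximation with high probability, with $\varepsilon_0 = e^{(1-1/e)}\varepsilon/2$ and a union bound over rounds. Your added care about extending the Golovin--Krause argument to $\alpha$-approximate per-step selections uniformly over realizations is a point the paper glosses over by citing Theorem~5.2 directly, but it does not change the structure of the argument.
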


%
%
%
%


\section{Scalable Implementations}\label{sec:RRset}

In this section, we aim to speed up \alg{CR-Greedy}, \alg{WR-Greedy} and \alg{AdaGreedy} by the reverse influence sampling~\cite{BorgsBrautbarChayesLucier,tang14,tang15}.

A {\em Reverse-Reachable (RR) set $R_v$ rooted at node $v\in V$}
is 
the set of nodes which are reached by reverse simulating a propagation from $v$ in the triggering model.
Equivalently, 
$R_v$ is the set of nodes in a random live-edge graph which can reach $v$.
We use $\rroot(R_v)$ to denote its root $v$.
We define a {\em (random)  RR set} $R$ is a RR set rooted at a node picked uniformly at random from $V$,
then for any seed set $S\subseteq V$, its influence spread
\begin{equation}
\sigma(S) = n \cdot \E[\I\{ S\cap R \ne \emptyset \}], \label{eq:RRset}
\end{equation}
where $n=|V|$, $\I\{\}$ is the indicator
function, and the expectation is taken over the randomness of $R$: randomness of root node and randomness of live-edge graph.
The property implies that we can accurately estimate 
    the influence spread of any possible seed set $S$ by sampling enough RR sets.
More importantly, 
by Eq.~\eqref{eq:RRset} the optimal seed set
can be found by seeking the optimal set of nodes that intersect with (a.k.a. cover) 
the most number of
RR sets, which is a max-cover problem.
Therefore, a series of near-linear-time algorithms are
developed~\cite{BorgsBrautbarChayesLucier,tang14,tang15} based on the above observation.
All RR-set algorithms have the same structure of two phases.
In Phase 1, the number of RR sets needed is estimated, and in Phase 2, these RR sets are
generated and greedy algorithm is used on these RR sets to find the $k$ nodes
that cover the most number of RR sets.
All algorithms have the same Phase 2, but Phase 1 is being improved from one to another
so that less and less RR sets are needed.
Our algorithms are based on \alg{IMM} proposed in \cite{tang15}.

\begin{algorithm}[t] 
	\caption{{\CRNS}: Cross-Round Node Selection} \label{alg:crnr}
	\KwIn{Multi round RR vector sets $\cM$, $T$, $k$}
	\KwOut{seed sets $\cS^o$}
	
	Build count array: $c[(u, t)] = \sum_{(u,t) \in \cM}|(u, t)|$, $\forall (u, t) \in \cV$\; \label{line:crgen1}
	Build RR set link: $RR[(u, t)]$, $\forall (u, t) \in \cV$\; \label{line:crgen2}
	For all $\cR \in \cM$, $covered[\cR] = false$\;
	$\cS^o \leftarrow \emptyset$; $\cC\leftarrow \cV$;
	$c_1, c_2, \dots, c_T \leftarrow 0$\; 
	\For{$i = 1 \; to \; Tk$ \label{line:crfor1}}{
		$(u, t) \leftarrow \argmax_{(u', t')\in \cC\setminus \cS^o} c[(u', t')]$\;\label{line:crf2}
		$\cS^o \leftarrow \cS^o \cup \{(u, t)\}$;  $c_{t} \leftarrow c_{t} + 1$\; 
		\If{$c_{t} == k$}{ $\cC = \cC \setminus \{(v,t) \mid v\in V\}$}\label{line:crif1}
		\For{all $\cR \in RR[(u, t)] \wedge covered[\cR]= false$\label{line:crfor2}}{
			$covered[\cR]= true$\;
			for all $(u', t') \in \cR \wedge (u', t') \ne (u, t)$ to do $c[(u', t')] = c[(u', t')] -1$\;\label{line:crfor3}
		}
	}
	\bf{return} $\cS^o $.
\end{algorithm}

\begin{algorithm}[t] 
	\caption{{\CRIMM}: Non-adaptive IMM Algorithm for Cross-Round
	} \label{alg:crimm}
	\KwIn{Graph $G=(V,E)$, round number $T$, budget $k$, accuracy parameters $(\varepsilon, \ell)$, triggering set distributions}
	\KwOut{seed set $\cS$}
	
	\tcp{Phase 1: Estimating the number of multi-round RR sets needed, $\theta$}
	$\newalg{\ell \leftarrow \ell + \ln 2 / \ln{n}} $;
	$\newalg{\cM \leftarrow \emptyset}$;  $LB \leftarrow 1$; $\varepsilon' \leftarrow \sqrt{2}\varepsilon$\;
	$\alpha \leftarrow \sqrt{\ell \ln{n} + \ln{2}}$;
	$\beta \leftarrow \sqrt{(1-1/2)\cdot (T\ln{\binom{n}{k}}+\alpha^2)}$\;
	$\lambda' \leftarrow [(2+\frac{2}{3}\varepsilon') \cdot (T\ln{{\binom{n}{k}}}+\ell \cdot \ln{n}+\ln{\log_2{n}})\cdot n]/\varepsilon'^2$\;
	$\lambda^* \leftarrow 2 nT \cdot ((1-1/e) \cdot \alpha + \beta)^2 \cdot \varepsilon^{-2}$\;
	
	\For{$i = 1 \; to \; \log_2{(\newalg{n} - 1)}$ \label{line:cfor1}}{
		$x \leftarrow \newalg{n}/2^i$\; \label{line:cassignx}
		$\theta_i \leftarrow \lambda'/x_i$\; \label{line:cgenRRset1b} 
		\While{$\newalg{|\cM|}< \theta_i$}{
			Select a node $u$ from $V$ uniformly at random\; \label{line:csample1}
			Generate RR-vector $\newalg{\cR}$ from $u$, and insert it into $\newalg{\cM}$\;  \label{line:cgsample1}
		}\label{line:cgenRRset1e} 
		\newalg{$\cS_i \leftarrow \CRNS(\cM,T, k)$}\; \label{line:cnodeselect1}
		\If{$\newalg{n} \cdot \newalg{F_{\cM}(\cS_i)} \geq (1 + \varepsilon') \cdot x$}{
			\label{line:cestimate1}
			$LB \leftarrow \newalg{n} \cdot \newalg{F_{\cM}(\cS_i)}/(1 + \varepsilon')$\; 
			\label{line:cestimate2}
			break\;
		} 
	}
	$\theta \leftarrow \lambda^* / LB$\;
	\While{$\newalg{|\cM|} \leq \theta$}{ \label{line:ccheckLB}
		Select a node $u$ from $V$  uniformly at random\;\label{line:csample2}
		Generate \newalg{$\cR$} for $u$, and insert it into \newalg{$\cM$}\; \label{line:cgsample2}		
	}
	\tcp{Phase 2: Generate $\theta$ RR-vector sets and select seed nodes}
	\newalg{$\cS \leftarrow \CRNS(\cM,T, k)$}\;
	
	\bf{return} $\cS$.
\end{algorithm}

\subsection{Non-Adaptive IMMs}
For the non-adaptive MRIM, we define the {\em multi-round reverse-reachable (RR) set $\cR_v$ rooted at node $v$} for the MRT model as
$\cR_v:=\bigcup_{t=1}^T\cR_{v,t}$ where $\cR_{v,t}$ denotes a RR set rooted at $v$ of round $t$ in pair notation. 
$\cR_v$ is generated by independently reverse simulating the propagation $T$ rounds from $v$ and then aggregating them together. 
Let $\rroot(\cR_v):=v$.
A (random) multi-round RR set $\cR$ is a multi-round RR set rooted at a node picked uniformly at random from $V$.
We use $\cM$ to denote the set of $\cR_v$.
We are now ready to explain the cross-round and within-round non-adaptive \alg{IMM}.

\subsubsection{Cross-Round Non-Adaptive IMM}\label{sec:CRIMM}

In cross-round setting, if $\cR$ is a random multi-round RR set,
then for any seed set $\cS$, its influence spread satisfies the following lemma.

\begin{restatable}{lemma}{crimm} \label{lem:crimm}
	For any node-round pair seed set $\cS$,
	\begin{equation*}
	\rho(\cS) = n  \cdot \E[\I\{ \cS\cap \cR \ne \emptyset \}], 
	\end{equation*}
	where the expectation is taken over the randomness of $\cR$.
\end{restatable}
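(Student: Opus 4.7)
The plan is to generalize the standard single-round RIS identity $\sigma(S)=n\cdot\E[\I\{S\cap R\ne\emptyset\}]$ to the multi-round setting by exploiting the independence of the $T$ live-edge graphs $L_1,\ldots,L_T$ and the fact that the multi-round RR set $\cR_v$ is constructed by independently reverse-sampling one RR set per round.

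First I would rewrite the expected influence spread by linearity of expectation over individual target nodes:
\begin{equation*}
\rho(\cS)=\E\!\left[\,\Big|\bigcup_{t=1}^{T}\Gamma(L_t,S_t)\Big|\,\right]=\sum_{v\in V}\Pr\!\left[v\in\bigcup_{t=1}^{T}\Gamma(L_t,S_t)\right].
\end{equation*}
The main technical step is to identify this per-node probability with $\Pr[\cS\cap\cR_v\ne\emptyset]$. For this I would unpack the definition of the multi-round RR set: $\cR_{v,t}$, written in pair notation, equals $\{(u,t):u\text{ can reach }v\text{ in }L_t\}$, where the live-edge graph used for round $t$ in the reverse simulation is drawn from the \emph{same} distribution as (and independently from the rest, just as) the forward live-edge graph $L_t$. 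Hence $v\in\Gamma(L_t,S_t)$ in the forward process is equivalent to $S_t$ containing some $u$ that reaches $v$ in $L_t$, i.e.\ to $\cS_t\cap\cR_{v,t}\ne\emptyset$ in the reverse process. Taking the union over $t$ and using that $\cS=\bigcup_t \cS_t$ and $\cR_v=\bigcup_t \cR_{v,t}$ yields
\begin{equation*}
\Pr\!\left[v\in\bigcup_{t=1}^{T}\Gamma(L_t,S_t)\right]=\Pr[\cS\cap\cR_v\ne\emptyset].
\end{equation*}

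Finally, because the random multi-round RR set $\cR$ is obtained by first picking its root uniformly from $V$ and then generating $\cR_v$, the law of total probability gives
\begin{equation*}
\E[\I\{\cS\cap\cR\ne\emptyset\}]=\frac{1}{n}\sum_{v\in V}\Pr[\cS\cap\cR_v\ne\emptyset]=\frac{\rho(\cS)}{n},
\end{equation*}
which rearranges to the claim.

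The only subtle point — and the step I would take the most care with — is justifying that the \emph{joint} distribution of the reverse samples $(\cR_{v,1},\ldots,\cR_{v,T})$ used to form $\cR_v$ matches the joint distribution of the forward live-edge graphs $(L_1,\ldots,L_T)$ so that the per-round equivalences can be combined under a single probability. This follows because both are products of $T$ i.i.d.\ copies of the triggering-model live-edge distribution, and within each round the classical reverse-reachability correspondence of Borgs et al.\ applies. Monotonicity/submodularity play no role here; the proof is purely a change-of-viewpoint argument.
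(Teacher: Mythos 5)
Your proposal is correct and matches the paper's own argument: both proofs condition on the uniformly random root $v$, invoke the per-root equivalence $\Pr[\cS\cap\cR_v\ne\emptyset]=\Pr[v\in\bigcup_{t}\Gamma(L_t,S_t)]$ coming from the correspondence between reverse-reachable sets and live-edge graphs, and sum over $v$ by linearity of expectation; you simply run the chain of equalities in the opposite direction. If anything, you are slightly more explicit than the paper about why the joint distribution of the $T$ independent reverse samples matches that of the forward live-edge graphs, a point the paper compresses into one sentence.
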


The Lemma \ref{lem:crimm} implies we can sample enough multi-round RR sets to accurately estimate the influence spread of $\cS$.

\alg{\CRIMM} is very similar to standard \alg{IMM} and only has a few differences include several points.
First, \alg{\CRIMM} generates multi-round RR sets $\cR$ from roots in $V$ (lines~\ref{line:csample1} and~\ref{line:csample2}). 
Second, we need to adjust $\ell$ to be $\ell + \log(2)/\log n$, and $\varepsilon$.
This is to guarantee that in each round we have probability at least $1 - 1/(n^\ell T)$ to
have $\cS_t$ as a $(\frac{1}{2} - \varepsilon)$ approximation, so that the result for the
total $T$ rounds would come out correctly as stated in the following theorem.
Third, we use new $F_{\cM}(\cdot)$ denotes the fraction of multi-round RR sets in $\cR$ that are covered by a node set $\cS$ in algorithm $\CRIMM$ (lines \ref{line:cestimate1} and \ref{line:cestimate2}).
Forth, \alg{\CRNS} returns $Tk$ seeds from the total $T$ rounds (line \ref{line:crfor1}).
Last, if the budget for some round $t$ exhausts, then the remaining nodes of $\cV_t$ are removed from $\cC$ in \alg{\CRNS} (line \ref{line:crif1}).

By an analysis similar to that of the \alg{IMM} algorithm~\cite{tang15}, we can show that
	our \alg{\CRIMM} achieves $1/2-\varepsilon$ approximation with expected running time
	$O(T^2(k+\ell)(m+n)\log (n) / \varepsilon^2)$.

\begin{restatable}{theorem}{crmrimt} \label{thm:crmrimt}
	Let $\cS^*$ be the optimal solution of the non-adaptive MRIM.
	For every $\varepsilon > 0$ and $\ell > 0$, with probability at least
	$1-\frac{1}{n^\ell}$, the output $\cS^o$ of the cross-round algorithm  \alg{\CRIMM}
	satisfies
	\begin{equation*}
		\rho(\cS^o) \geq \left(\frac{1}{2} - \varepsilon\right) \rho(\cS^*),
	\end{equation*}
	In this case, the total running time for $T$-round \alg{\CRIMM}
	is $O(T^2(k + \ell)(n + m)\log{(n)}/\varepsilon^2)$.
\end{restatable}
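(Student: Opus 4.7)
The plan is to recast cross-round MRIM as weighted max-coverage over a partition matroid and then adapt the two-phase \alg{IMM} analysis of Tang et al.~(2015), with adjustments for the multi-round setting.

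First, by Lemma~\ref{lem:crimm}, $\rho(\cS) = n\cdot\E[\I\{\cS\cap\cR\ne\emptyset\}]$, so maximizing $\rho$ subject to $|S_t|\le k$ for all $t\in[T]$ is, up to sampling error, a weighted max-coverage problem over the partition matroid $(\cV,\{\cS:|S_t|\le k,\,\forall t\in[T]\})$. Since \CRNS is greedy max-coverage on this matroid and the coverage function is monotone submodular (Lemma~\ref{lem:mrtsubmodular}), Fisher-Nemhauser-Wolsey (1978) gives $F_{\cM}(\cS^o) \ge \tfrac{1}{2}\max_{\cS\text{ feasible}} F_{\cM}(\cS)$. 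This is the source of the $\tfrac{1}{2}$ approximation factor in the theorem and also explains why $\beta$ in the algorithm is defined with a $(1-\tfrac{1}{2})$ coefficient in place of the $(1-1/e)$ used by single-round \alg{IMM}.

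Next, I would import the two-phase concentration argument of \alg{IMM} with two changes. (a)~The union bound over candidate feasible solutions now ranges over $\binom{n}{k}^T$ ordered $T$-tuples rather than $\binom{n}{k}$, which is precisely why $\lambda'$ and $\beta$ carry $T\ln\binom{n}{k}$ and why $\ell$ is bumped to $\ell + \ln 2/\ln n$ so that the failure probability splits cleanly across Phase~1 and Phase~2. (b)~The Chernoff/martingale concentration of $nF_{\cM}(\cS)$ around $\rho(\cS)$ still uses the same i.i.d.\ Bernoulli structure as the single-round case, because the indicators $\I\{\cS\cap\cR_i\ne\emptyset\}$ are independent across $i$, even though each $\cR_i$ is itself a union of $T$ independent single-round RR sets. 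Combining (a) and (b), the standard argument yields that Phase~1 produces $LB$ with $LB \le \rho(\cS^*)$ w.h.p., and that with $\theta = \lambda^*/LB$ samples, $nF_{\cM}(\cS)$ is within multiplicative $\varepsilon'$ of $\rho(\cS)$ uniformly over feasible $\cS$. Chaining the resulting concentration through the $\tfrac{1}{2}$ greedy factor and setting $\varepsilon' = \sqrt{2}\varepsilon$ then yields $\rho(\cS^o) \ge (\tfrac{1}{2} - \varepsilon)\rho(\cS^*)$ with probability at least $1 - 1/n^\ell$.

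For the running time, each multi-round RR set is the union of $T$ independent single-round reverse simulations, so its expected generation cost is $T$ times that of a standard RR set; the sample count $\theta$ also gains an extra factor of $T$ from the $T\ln\binom{n}{k}$ term in $\lambda^*$. Multiplying these two $T$'s onto the single-round \alg{IMM} bound of $O((k+\ell)(n+m)\log n/\varepsilon^2)$ gives the claimed $O(T^2(k+\ell)(n+m)\log n/\varepsilon^2)$. The main obstacle I anticipate is re-verifying Phase~1's adaptive stopping rule under the new $\tfrac{1}{2}$ approximation factor and the enlarged union bound: one must check that the doubling scheme still certifies $LB\le\rho(\cS^*)$ when the inner \CRNS loses a $\tfrac{1}{2}$ factor rather than $(1-1/e)$, and that all constants in the concentration inequalities are tracked cleanly so the $1/n^\ell$ failure probability is preserved end-to-end.
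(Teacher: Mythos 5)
Your proposal follows essentially the same route as the paper's proof: Lemma~\ref{lem:crimm} reduces influence estimation to multi-round RR-set coverage, the partition-matroid greedy bound of Fisher--Nemhauser--Wolsey supplies the $\tfrac12$ factor (Lemmas~\ref{lem:crtime} and~\ref{lem:crerror}), the union bound ranges over the $\binom{n}{k}^T$ feasible solutions, and the two-phase IMM machinery certifies $LB\le\rho(\cS^*)$ (Lemma~\ref{lem:lb}) and uniform concentration at $\theta=\lambda^*/LB$ (Corollary~\ref{cor:alphabeta}), after which correctness and the $T^2$ running time follow as in Lemmas~\ref{lem:crcorrectness} and~\ref{lem:crtimecomplexity}.

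The one point where your justification is wrong is step~(b): the indicators $\I\{\cS\cap\cR_i\ne\emptyset\}$ are \emph{not} independent across $i$, and this is not ``the same i.i.d.\ Bernoulli structure as the single-round case'' --- in the single-round IMM they are not independent either. The reason is that the total number of RR sets generated is determined adaptively: whether $\cR_i$ is generated at all depends on the coverage achieved by $\cR_1,\dots,\cR_{i-1}$ through the Phase-1 stopping rule. This is precisely why Tang et al.\ replaced plain Chernoff bounds by martingale concentration in \alg{IMM}, and why the paper devotes Lemma~\ref{lem:crmatingale} to verifying that the multi-round indicators form a martingale (each $X_i^{\cM}(\cS)$ has conditional mean $\rho(\cS)/n$ given the past, even though it is not independent of the past), with Corollary~\ref{cor:ineq} supplying the tail bounds. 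Since you hedge with ``Chernoff/martingale concentration,'' the plan survives once the i.i.d.\ claim is replaced by the martingale argument, but as written that step is not rigorous. A minor further difference: the paper obtains both factors of $T$ inside $\E[\theta]$ (the explicit $T$ in $\lambda^*$ together with $T\ln\binom{n}{k}$ in $\beta^2$, Lemma~\ref{lem:crub}) and charges reverse-simulation cost via $\E_{\tilde v}[\rho(\tilde v)]\le\OPT$, whereas you split one $T$ into $\theta$ and one into the per-sample generation cost; both accountings reach the stated $O(T^2(k+\ell)(n+m)\log(n)/\varepsilon^2)$.
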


\subsubsection{Within-Round Non-Adaptive IMM}\label{sec:WRIMM}

In the within-round setting, the idea is to use the IMM algorithm in each round to select $k$ seeds.
However, seeds selected in earlier rounds may already influence some nodes, so when we select roots
	for a later round $t$ and generate RR sets for round $t$, the roots should not be selected uniformly
	at random.
Instead, we want to utilize the idea derived from the following lemma.
Let $\cS^{t-1} :=\bigcup_{t'=1}^{t-1} \cS_{t'}$ be the set of seed pairs in first $t-1$ rounds, and let $\cS_t$ be a set of seed pairs in round $t$.
Similarly, let $\cR_v^{t-1}:=\bigcup_{t'=1}^{t-1}\cR_{v,t'}$ be the set of RR sets (in pairs) in first $t-1$ rounds, 
	$\cR_t$ be the RR set (in pairs) for round $t$.
The marginal influence spread of $\cS_t$ in round $t$ is 
%
%
 \begin{restatable}{lemma}{wrimm} \label{lem:wrimm}
 	For any node-round pair seed set $\cS$,
 	\begin{align*}
 	& \rho(\cS^{t-1}\cup \cS_t) - \rho(\cS^{t-1}) \\
 	& = n  \cdot \E[\I\{ (\cS^{t-1} \cap \cR^{t-1} = \emptyset) \wedge (\cS_t \cap \cR_{t} \ne \emptyset)\}] \\
 	& = n \cdot \Pr\{\cS^{t-1} \cap \cR^{t-1} = \emptyset \} \cdot 
 	\E[\I\{\cS_t \cap \cR_{t} \ne \emptyset\} \mid \cS^{t-1} \cap \cR^{t-1} = \emptyset],
 	\end{align*}
 	where the expectation is taken over the randomness of $\cR$.
 \end{restatable}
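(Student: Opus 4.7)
The plan is to reduce the claim to the standard Borgs--Brautbar--Chayes--Lucier equivalence between forward and reverse reachability, applied separately in each round and then glued via the independence of the $T$ live-edge graphs. The key identity to establish, from the definition of the MRT model, is
\begin{equation*}
\rho(\cS^{t-1}\cup \cS_t) - \rho(\cS^{t-1}) = \E\Big[\big|\Gamma(L_t,S_t)\setminus \textstyle\bigcup_{i=1}^{t-1}\Gamma(L_i,S_i)\big|\Big],
\end{equation*}
which is immediate from writing each $\rho$ as the expected size of the union $\bigcup_i \Gamma(L_i,S_i)$ and using inclusion/exclusion of set difference. After this, linearity of expectation turns the right-hand side into $\sum_{v\in V}\Pr[v\in \Gamma(L_t,S_t) \wedge v\notin\bigcup_{i=1}^{t-1}\Gamma(L_i,S_i)]$.

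Next, I would translate forward reachability to reverse reachability round by round. For a fixed $v$ and a fixed live-edge graph $L_i$, the event $v\in\Gamma(L_i,S_i)$ is precisely $R_{v,i}\cap S_i\neq\emptyset$, where $R_{v,i}$ is the reverse-reachable set of $v$ in $L_i$; in pair notation this is $\cR_{v,i}\cap \cS_i\neq\emptyset$. Taking the union over $i=1,\dots,t-1$, the complement event $v\notin \bigcup_{i=1}^{t-1}\Gamma(L_i,S_i)$ becomes $\cR_v^{t-1}\cap \cS^{t-1}=\emptyset$, while the round-$t$ event stays $\cR_{v,t}\cap \cS_t\neq \emptyset$. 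Substituting back:
\begin{equation*}
\rho(\cS^{t-1}\cup \cS_t) - \rho(\cS^{t-1}) = \sum_{v\in V}\Pr\big[\cS^{t-1}\cap \cR_v^{t-1} = \emptyset \,\wedge\, \cS_t\cap \cR_{v,t}\neq \emptyset\big].
\end{equation*}
Factoring out $n$ and interpreting the average over $v\in V$ as choosing the root of a random multi-round RR set $\cR$ uniformly from $V$ gives the first equality. The second equality is then just the definition of conditional probability, $\Pr[A\wedge B] = \Pr[A]\cdot \Pr[B\mid A]$, applied with $A=\{\cS^{t-1}\cap \cR^{t-1}=\emptyset\}$ and $B=\{\cS_t\cap \cR_t\neq\emptyset\}$; no independence is required here, although the independence of $L_t$ from $L_1,\dots,L_{t-1}$ would further let one replace $\Pr[B\mid A]$ by $\Pr[B]$ if desired.

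The main obstacle is purely notational: keeping the pair notation $\cR_v=\bigcup_{t'=1}^T \cR_{v,t'}$ and $\cS^{t-1}=\bigcup_{t'=1}^{t-1}\cS_{t'}$ aligned so that the intersections $\cS^{t-1}\cap \cR^{t-1}$ and $\cS_t\cap \cR_t$ really do pick out the correct round indices and match the event being analyzed. Since the live-edge graphs $L_1,\dots,L_T$ are independent by the definition of the MRT model, the generation of $\cR_{v,t'}$ for different $t'$ is genuinely independent once the common root $v$ is fixed, so the indicator-on-pair-sets rewriting is a faithful translation. After that the proof is essentially the one-round Borgs \emph{et al.}\ identity applied round by round, so no further technical work beyond careful bookkeeping is needed.
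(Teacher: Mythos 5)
Your proposal is correct and follows essentially the same argument as the paper's proof, just run in the reverse direction: the paper starts from $\E[\I\{ (\cS^{t-1} \cap \cR^{t-1} = \emptyset) \wedge (\cS_t \cap \cR_{t} \ne \emptyset)\}]$, conditions on the root, and translates reverse-reachability events into forward-reachability events to recover the marginal spread, whereas you start from the marginal spread and translate forward to reverse. The key ingredients — the identity $v\in\Gamma(L_i,S_i)\Leftrightarrow \cS_i\cap\cR_{v,i}\neq\emptyset$, linearity of expectation over the uniformly random root, and the chain rule for the second equality — are identical in both.
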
     
 
 The above lemma suggests that when we want to generate an RR set for round $t$, we should also generate
 RR sets for earlier rounds and check that if any of them is intersecting with the seed set in the same 
 round, and if so, the RR set for round $t$ is invalid and we need to regenerate an RR set again.
 By following this, the implementation is similar to the $\CRIMM$, and the only difference between them is that the within-round non-adaptive algorithm $\WRIMM$ selects $k$ seeds round by round.
The resulting algorithm would have the approximation guarantee
	of $1 - e^{-(1-\frac{1}{e})}-\varepsilon$, but 
	it does not have significant running time improvement over $\CRIMM$, 
	since it wastes many RR set generations.
 We find a better heuristic to use the roots generated for the previous rounds.
 In particular, for each $t\ge 2$, after we finished selecting $k$ seeds in round $t-1$, some RR sets
 	are removed since they are covered by seeds selected. 
 The remaining roots are exactly the ones whose RR sets do not intersect with seed sets in the first
 	$t-1$ rounds.
 Therefore, their distribution is close to the distribution of the valid roots satisfying 
 	$\cS^{t-1} \cap \cR^{t-1} = \emptyset$ in Lemma~\ref{lem:wrimm}.
 Hence, in round $t$, we sample roots using the remaining roots from round $t-1$.
 This gives a close estimate of the marginal influence spread.
Due to the complicated stochastic dependency of RR sets from round to round, the exact theoretical analysis
	of this improvement is beyond our reach, and thus we propose it as an efficient heuristic.

The resulting algorithm \alg{\WRIMM}  runs almost exactly like running a copy of the standard \alg{IMM} 
	for each round $t$.
The only difference is that in the standard \alg{IMM}, the root of an RR set is always sampled uniformly at random
	from all nodes, but in \alg{\WRIMM} the root of an RR set in round $t$ is sampled uniformly at random
	from the remaining roots left in round $t-1$.
Moreover, for each round $t$, we set the per-round approximation error bound
	$\varepsilon_0 = e^{(1-1/e)} \varepsilon/2$
	and replace $\ell $ with $\ell + \log(2T)/\log n$.
This is consistent with the adaptive IMM setting in Section~\ref{sec:AdaIMM}, and
	with the setting used in the proof of Theorem~\ref{thm:nonadaptive} for the within-round
	greedy algorithm.
%
 The details of \alg{\WRIMM} are provided in \OnlyInShort{\cite{sun2018multi}.} \OnlyInFull{the Appendix~\ref{app:WRIMM}.}

\subsection{Adaptive IMM} \label{sec:AdaIMM}
We use $M$ to denote the set of $R_v$.
In the adaptive setting, as already mentioned in Section~\ref{sec:adagreedy}, 
some nodes $A_{t-1}$ are already activated, so they won't
contribute to the influence spread in round $t$.
Therefore, we are working on the weighted influence maximization problem, where nodes
in $A_{t-1}$ has weight $0$ and nodes in $V\setminus A_{t-1}$ has weight $1$.
Let $\sigma^{-A_{t-1}}(S)$ be the weighted influence spread according to the above weight.
Let $R^{-A_{t-1}}$ be a random RR set where the root $v$ is selected from $V\setminus A_{t-1}$
uniformly at random, and then reverse simulate from $v$ to get the RR set.
Then we obtain the result similar to Eq.~\eqref{eq:RRset}:

\begin{restatable}{lemma}{adaimm} \label{lem:ARRset}
	For any seed set $S\subseteq V$, 
	\begin{equation}
	\sigma^{-A_{t-1}}(S) = (n-|A_{t-1}|) \cdot \E[\I\{ S\cap R^{-A_{t-1}} \ne \emptyset \}], \label{eq:AdaRRset}
	\end{equation}
	where the expectation is taken over the randomness of $R^{-A_{t-1}}$.
\end{restatable}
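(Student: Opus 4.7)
The plan is to mimic the standard reverse-reachable set identity (Equation~\eqref{eq:RRset}) from~\cite{BorgsBrautbarChayesLucier}, but applied to the weighted influence spread where nodes in $A_{t-1}$ carry weight $0$ and nodes in $V\setminus A_{t-1}$ carry weight $1$. Concretely, I would start by rewriting $\sigma^{-A_{t-1}}(S)$ as the expected cardinality of the set of non-previously-activated nodes reached by $S$ in a random live-edge graph $L$, namely
\begin{equation*}
\sigma^{-A_{t-1}}(S) \;=\; \E_L\bigl[\,|\Gamma(L,S)\cap (V\setminus A_{t-1})|\,\bigr] \;=\; \sum_{v\in V\setminus A_{t-1}} \Pr_L\bigl[v\in \Gamma(L,S)\bigr],
\end{equation*}
where the second equality uses linearity of expectation and the indicator expansion of a set size.

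The key step is then the standard forward-backward duality of the triggering model: for any fixed $v$, the event that $v$ is reached from $S$ in a random live-edge graph $L$ is, by reversing the edge directions, the same event as $S$ intersecting the reverse-reachable set rooted at $v$. That is, $\Pr_L[v\in\Gamma(L,S)] = \Pr[S\cap R_v\neq\emptyset]$, where $R_v$ is sampled by the reverse simulation from $v$ under the same triggering distribution. This is exactly the identity used to derive Equation~\eqref{eq:RRset}, so I would appeal to it rather than reprove it.

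Plugging this in,
\begin{equation*}
\sigma^{-A_{t-1}}(S) \;=\; \sum_{v\in V\setminus A_{t-1}} \Pr[S\cap R_v\neq\emptyset] \;=\; (n-|A_{t-1}|)\cdot\frac{1}{n-|A_{t-1}|}\sum_{v\in V\setminus A_{t-1}} \Pr[S\cap R_v\neq\emptyset].
\end{equation*}
Recognizing the inner average as the expectation over $v$ drawn uniformly from $V\setminus A_{t-1}$, and using the definition of $R^{-A_{t-1}}$ as an RR set whose root is sampled uniformly from $V\setminus A_{t-1}$, the right-hand side equals $(n-|A_{t-1}|)\cdot \E[\I\{S\cap R^{-A_{t-1}}\neq\emptyset\}]$, as required.

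The only conceptual subtlety — and the step I would be careful to state cleanly — is that the randomness of $R^{-A_{t-1}}$ combines two independent sources: the uniform choice of root over $V\setminus A_{t-1}$ and the triggering-set randomness used in the reverse simulation. The weighted influence spread treats $A_{t-1}$ as already-consumed nodes but still allows them to participate in propagation; this is consistent with our sampling scheme because we restrict only the root, not the triggering sets along the reverse walk. Apart from this bookkeeping, the proof is a direct specialization of the classical RR-set identity, so I do not anticipate any real obstacle.
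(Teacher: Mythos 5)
Your proposal is correct and is essentially the paper's own proof run in the opposite direction: the paper starts from $\E[\I\{S\cap R^{-A_{t-1}}\ne\emptyset\}]$, conditions on the uniformly chosen root, applies the RR-set/live-edge-graph equivalence, and collapses the sum via linearity of expectation to $\E[|\Gamma(L,S)\setminus A_{t-1}|]$, while you expand $\sigma^{-A_{t-1}}(S)$ first and recombine at the end. The ingredients — restricting the root to $V\setminus A_{t-1}$, the forward-backward duality, and the normalization by $n-|A_{t-1}|$ — are identical, so there is no substantive difference.
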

\begin{proof}
	\begin{align}
		& \E[\I\{ S\cap R^{-A_{i-1}} \ne \emptyset \}]  \nonumber \\
		& = \sum_{v\in V\setminus A_{i-1}} \Pr\{v =\rroot(R^{-A_{i-1}}) \} \cdot \nonumber \\
		& \quad \quad \E[\I\{ S\cap R^{-A_{i-1}} \ne \emptyset \} \mid v =\rroot(R^{-A_{i-1}})] 
		\nonumber \\
		& = \frac{1}{n-|A_{i-1}|} \sum_{v\in V\setminus A_{i-1}} 
		\E[\I\{ S\cap R_v^{-A_{i-1}} \ne \emptyset \} ] \nonumber  \\
		& = \frac{1}{n-|A_{i-1}|} \sum_{v\in V\setminus A_{i-1}} 
		\E[\I\{ v\in \Gamma(L, S)\} ] \label{eq:RRset2L} \allowdisplaybreaks \\
		& = \frac{1}{n-|A_{i-1}|} \E \left[ \sum_{v\in V\setminus A_{i-1}} \I\{ v\in \Gamma(L, S)\} \right] \nonumber \\
		& = \frac{1}{n-|A_{i-1}|} \E[|\Gamma(L, S) \setminus A_{i-1} | ] \nonumber \\
		& =  \frac{1}{n-|A_{i-1}|} \sigma^{-A_{i-1}}(S), \nonumber 
	\end{align}
	where Eq.~\eqref{eq:RRset2L} is based on the equivalence between RR sets and live-edge graphs,
	and the expectation from this point on is taken over the random live-edge graphs $L$. 
\end{proof}

Therefore, with Eq.~\eqref{eq:AdaRRset}, the same RR-set based algorithm can be used,
and we only need to properly change the RR-set generation process and the
estimation process.
\alg{AdaIMM} is based on the \alg{IMM} algorithm in~\cite{tang15}.
The main differences from the standard \alg{IMM} include several points.
First, whenever we generate new RR sets in round $t$, we only start from roots in
$V \setminus A_{t-1}$ as explained by Lemma~\ref{lem:ARRset}.
Second, when we estimate the influence spread, we need to adjust it using
$n_a = n - |A_{t-1}|$ again by Lemma~\ref{lem:ARRset}.
Third, we need to adjust $\ell$ to be $\ell + \log(2T)/\log n$,
and $\varepsilon$ to $\varepsilon_0 = e^{(1-1/e)} \varepsilon/2$.
This is to guarantee that in each round we have probability at least $1 - 1/(n^\ell T)$ to
have $S_t$ as a $(1-1/e - \varepsilon_0)$ approximation, so that the result for the
total $T$ rounds would come out correctly as stated in the following theorem.
The details are shown in \OnlyInShort{\cite{sun2018multi}.} \OnlyInFull{the appendix.}

\begin{restatable}{theorem}{adaimmt} \label{thm:AdaIMM}
	Let $\pi^{\ai}$ represents the policy corresponding to the \alg{AdaIMM} algorithm. 
	For any $\varepsilon > 0$ and $\ell > 0$, with probability at least $1-\frac{1}{n^\ell}$, 
	\begin{equation*}
	f_{\avg}(\pi^{\ai}) \geq \left(1 - e^{-(1-\frac{1}{e})}-\varepsilon \right) f_{\avg}(\pi^*).
	\end{equation*}
	In this case, the total running time for $T$-round \alg{AdaIMM}
	is $O(T(k + \ell)(n + m)\log{(nT)}/\varepsilon^2)$.
\end{restatable}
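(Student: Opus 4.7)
My plan is to reduce the analysis of \alg{AdaIMM} to a per-round weighted influence maximization problem, apply the standard \alg{IMM} approximation guarantee in each round, and then lift the per-round $(1-1/e-\varepsilon_0)$ guarantee to an overall $(1-e^{-(1-1/e)}-\varepsilon)$ guarantee using the adaptive submodularity framework of \citet{GoloKrause11}. Set $\varepsilon_0 = e^{(1-1/e)}\varepsilon/2$ and $\ell_0 = \ell + \log(2T)/\log n$ exactly as in the algorithm description.

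First I would make precise the per-round reduction. Fix a round $t$ and condition on the history $\psi$ with activated set $A_{t-1}$. By Lemma~\ref{lem:ARRset}, the conditional expected marginal benefit $\Delta((S_t,t)\mid\psi)$ equals $\sigma^{-A_{t-1}}(S_t) = (n-|A_{t-1}|)\cdot \E[\I\{S_t\cap R^{-A_{t-1}}\ne\emptyset\}]$, and \alg{AdaIMM} is exactly the standard \alg{IMM} algorithm applied to this weighted instance with nodes in $A_{t-1}$ reweighted to zero. Since weighted influence spread is monotone submodular \cite{mossel2007}, the analysis of \alg{IMM} in \cite{tang15} applies verbatim once the root sampling is restricted to $V\setminus A_{t-1}$ and the estimator uses the normalization factor $n-|A_{t-1}|$. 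Invoking the \alg{IMM} guarantee with parameters $(\varepsilon_0,\ell_0)$ yields: with probability at least $1-1/n^{\ell_0} \ge 1 - 1/(n^\ell T)$, the seed set $S_t$ produced in round $t$ satisfies $\Delta((S_t,t)\mid\psi)\ge (1-1/e-\varepsilon_0)\cdot \max_{|S|\le k}\Delta((S,t)\mid\psi)$.

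Next I would take a union bound over all $T$ rounds so that with probability at least $1-1/n^\ell$, every round is $(1-1/e-\varepsilon_0)$-approximately greedy with respect to the adaptive marginal benefit. Combined with the adaptive monotonicity and adaptive submodularity established in Lemma~\ref{lem:adaSubmodular}, Theorem 5.2 of \citet{GoloKrause11} (the $\alpha$-approximate adaptive greedy theorem) then gives
\begin{equation*}
f_{\avg}(\pi^{\ai})\ \ge\ \bigl(1-e^{-(1-1/e-\varepsilon_0)}\bigr)\, f_{\avg}(\pi^*).
\end{equation*}
A short calculation using $\varepsilon_0 = e^{(1-1/e)}\varepsilon/2$ together with the inequality $e^{-x+y}\le e^{-x}+y$ for $y\ge 0$ (applied with $x=1-1/e$ and $y=\varepsilon_0$) shows $1-e^{-(1-1/e-\varepsilon_0)}\ge 1-e^{-(1-1/e)}-\varepsilon$, delivering the claimed ratio.

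For the running time, I would just quote the per-round bound for \alg{IMM} with parameters $(\varepsilon_0,\ell_0)$, which is $O((k+\ell_0)(n+m)\log n/\varepsilon_0^2)$. Summing over $T$ rounds and substituting $\varepsilon_0 = \Theta(\varepsilon)$ and $\ell_0 = \ell + \log(2T)/\log n$ produces $O(T(k+\ell)(n+m)\log(nT)/\varepsilon^2)$, as stated. The main obstacle, and the only place requiring care beyond quoting prior results, is verifying that the per-round \alg{IMM} analysis remains valid after restricting root sampling to $V\setminus A_{t-1}$ and reweighting; this hinges on Lemma~\ref{lem:ARRset}, which I would re-derive inline to confirm that the unbiased estimator, the martingale concentration arguments, and the sampling threshold $\theta$ in \alg{IMM} all go through with $n$ replaced by $n-|A_{t-1}|$. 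The adaptive composition step itself is a direct application of the Golovin--Krause framework once the per-round guarantee is in hand.
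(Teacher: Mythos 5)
Your proposal is correct and follows essentially the same route as the paper's own proof: reduce each round to a weighted influence-maximization instance via Lemma~\ref{lem:ARRset}, invoke the \alg{IMM} guarantee of \cite{tang15} with the adjusted parameters $\varepsilon_0 = e^{(1-1/e)}\varepsilon/2$ and $\ell + \log(2T)/\log n$ to get a per-round $(1-1/e-\varepsilon_0)$ approximation with failure probability $1/(n^\ell T)$, union-bound over the $T$ rounds, and compose via the $\alpha$-approximate adaptive greedy theorem of \citet{GoloKrause11}. Your write-up is in fact somewhat more explicit than the paper's sketch (which defers the reduction and union-bound details to the proofs of Theorems~\ref{thm:nonadaptive} and~\ref{thm:AdaGreedy}), but no new idea is introduced and no step diverges from the paper's argument.
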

\begin{proof}[Proof (Sketch)]
	Let $\varepsilon_0 = e^{(1-1/e)} \varepsilon/2$, 
	$\ell' = \ell + \log (2T) / \log n $.
	As explained already, one round \alg{AdaIMM} is essentially the same as \alg{IMM}
	with parameters $\varepsilon_0$ and $\ell'$.
	Thus, following the result in~\cite{tang15}, we know that for each round $i$,
	with probability at least $ 1 - 2/n^{\ell'} = 1 - 1/(n^\ell T)$, 
	output $S_i$ is a $(1 - 1/e - \varepsilon_0)$ approximation of the
	best seed set for this round.
	Then following the similar arguments as in Theorems~\ref{thm:nonadaptive} and~\ref{thm:AdaGreedy},
	we know that across all $T$ rounds, 
	with probability at least $1 - 1/n^\ell$,
	\begin{align*}
		f_{\avg}(\pi^{\ai}) &  \geq \left(1 - e^{-(1-\frac{1}{e} - \varepsilon_0)}\right) f_{\avg}(\pi^*)\\
		& \ge \left(1 - e^{-(1-\frac{1}{e})}-\varepsilon \right) f_{\avg}(\pi^*).
	\end{align*}
	Thus, the theorem holds.
\end{proof}

Theorem~\ref{thm:AdaIMM} clearly shows that the \alg{AdaIMM} algorithm is near linear time,
and its theoretical time complexity bound is much better than
the one in Theorem~\ref{thm:AdaGreedy} for the \alg{AdaGreedy} algorithm.

In practice, we can further improve the \alg{AdaIMM} algorithm by incremental computation.
In particular, RR sets generated in the previous $t-1$ rounds can be used for round $t$,
and we only need to remove those rooted at nodes in $A_{t-1}$.
Searching the lower bound $LB$ can also be made faster by utilizing the $x_{t-1}$ already
obtained in the previous round.
Our experiments would use such improvements.

\section{Experimental Evaluation} \label{sec:experiment}

We conduct experiments on two real-world social networks to test the performance 
	of our algorithms.
We use the independent cascade model for the tests.
The influence probabilities on the edges are learned from the real-world trace data
	in one dataset and synthetic in the other dataset, as explained below.


\subsection{Data Description}

\textbf{Flixster}.
The Flixster dataset is a network of American social movie discovery service (www.flixster.com). 
To transform the dataset into a weighted graph, 
each user is represented by a node, 
and a directed edge from node $u$ to $v$ is formed if $v$ rates one movie shortly 
after $u$ does so on the same movie. 
The dataset is analyzed in \cite{barbieri2012topic}, 
and the influence probability are learned by the topic-aware model. 
We use the learning result of \cite{barbieri2012topic} in our experiment, 
which is a graph containing \num{29357} nodes and \num{212614} directed edges. 

\noindent\textbf{NetHEPT}. 
The NetHEPT dataset \cite{chen08} is extensively used in many influence maximization studies. 
It is an academic collaboration network from the ``High Energy Physics Theory'' section of arXiv from 1991 to 2003, 
where nodes represent the authors and each edge represents one paper co-authored by two nodes. 
There are \num{15233} nodes and \num{58891} undirected edges (including duplicated edges) in the NetHEPT dataset.
We clean the dataset by removing those duplicated edges and obtain a directed graph $G = (V,E)$,
$|V|$ = \num{15233}, $|E|$ = \num{62774} (directed edges). 

\subsection{Result}

We test all six algorithms proposed in the experiment, for the MRIM task of $T$ rounds with $k$ seeds in each round.
We use $R=10000$ for the Monte Carlo simulation of the influence spread of each candidate seed set for all non-adaptive algorithms.
The lazy evaluation technique of~\cite{Leskovec07} to optimize the greedy selection is applied to \alg{\CRGreedy}, \alg{\WRGreedy} and \alg{AdaGreedy}.
Besides the six proposed algorithms, we also propose two baseline non-adaptive algorithms (\alg{SG} and \alg{SG-R}) using the classical single-round algorithms directly for the multi-round influence maximization problem.  \alg{SG} simply selects $Tk$ seed nodes using the single-round greedy algorithm, and then allocates the first $k$ seeds as $S_1$ for the first round, second $k$ seeds as $S_2$ for the second round, and so on. 
\alg{SG-R} only selects $k$ seeds greedily, and then reuse the same $k$ seeds for all $T$ rounds.

In the tests, we set $T = 5$ and $k=10$ as the default, and we also 
test different combinations of $T$ and $k$ while keeping $Tk$ to be the same,
to see the effect of different degrees of adaptiveness. 

\subsubsection{Influence Spread Performance}
We test the performance on the influence spread for all algorithms 
	introduced in the above section.
For non-adaptive algorithms, 
	for each selected seed set sequence, 
	we do $\num{10000}$ forward simulations and take the average to obtain its estimated
	influence spread.
For adaptive algorithms, to obtain their expected influence spread over multiple
	real-world propagation simulations, we have to re-run the algorithm in each round
	after obtaining the feedback from the previous rounds.
Thus, it would be too time consuming to also run $10000$ adaptive simulations
	for the adaptive algorithms.
In stead, for NetHEPT we use $150$ simulations and for Flixster we use
	$100$ simulations.
To make fair comparisons, we include confidence intervals in the obtained 
	influence spread results, so that the number of simulations used for the estimation
	is taken into the consideration.

Tables \ref{table::hep_per}  and \ref{table::flixster_per} show the influence
	spread results for NetHEPT and Flixster datasets.
All five round results are shown, one for each column.
Each row is for one algorithm, and the number in the first line of the row
	is the empirical average of the influence spread, and the line below is
	the $95\%$ confidence interval.
Parameter $R$ records the number of simulations used to obtain the average spread.

\begin{table}[t]
	\centering
	\caption{The performance of influence spread on NetHEPT.}
	\resizebox{3.45in}{!}{%
	\label{table::hep_per}
	\begin{tabular}{|c|c|c|c|c|c|}
		\hline
		\multirow{2}{*}{Method/Simulations} & \multicolumn{5}{c|}{Round}                                                                                   \\ \cline{2-6} 
		& 1                  & 2                    & 3                  & 4                  & 5                      \\ \hline
		\alg{SG}                                 & 290.1              & 505.7                & 688.6              & 868.2              & 1027.3                 \\ 
		(R = 10000)                           & {[}288.8, 291.4{]} & {[}504.0, 507.3{]}   & {[}686.6, 690.4{]} & {[}866.2, 870.2{]} & {[}1025.2, 1029.4{]}   \\ \hline
		\alg{SG-R}                               & 289.5              & 516.3                & 714.0              & 884.9              & 1042.0                 \\ 
		(R = 10000)                           & {[}288.2, 290.8{]} & {[}514.6, 518.0{]}   & {[}712.0, 716.0{]} & {[}882.7, 887.1{]} & {[}1039.7, 1044.2{]}   \\ \hline
		\alg{E-WR-Greedy}                            & 290.7              & 528.9                & 738.8              & 930.2              & 1097.6.9                 \\ 
		(R = 10000)                           & {[}289.4, 292.0{]} & {[}527.2, 530.6{]} & {[}736.9, 740.8{]} & {[}928.0, 932.3{]} & {[}1095.3, 1099.8{]}   \\ \hline
		\alg{WR-IMM}                            & 290.9              & 532.8                & 745.3              & 930.1              & 1093.1                 \\ 
		(R = 10000)                           & {[}289.7, 292.3{]} & {[}531.1, 534.5{]} & {[}743.2, 747.3{]} & {[}928.0, 932.2{]} & {[}1090.8, 1095.3{]}   \\ \hline
		\alg{CR-Greedy}                            & 267.8              & 528.7                & 730.4              & 938.5              & 1121.3                 \\ 
		(R = 10000)                           & {[}266.5, 269.1{]} & {[}527.2, 530.4{]} & {[}728.5, 732.4{]} & {[}933.7, 937.8{]} & {[}1119.0, 1123.5{]}   \\ \hline
		\alg{CR-IMM}                            & 283.0              & 517.4                & 721.9              & 931.6              & 1129.7                 \\ 
		(R = 10000)                           & {[}281.7, 284.2{]} & {[}515.7, 519.2{]} & {[}720.0, 723.9{]} & {[}929.4, 933.7{]} & {[}1127.7, 1131.9{]}   \\ \hline
		\alg{AdaGreedy}                             & 288.3              & 533.4                & 758.1              & 960.1              & 1141.5                 \\ 
		(R = 150)                             & {[}276.7, 299.7{]} & {[}519.4, 547.3{]}   & {[}743.6, 772.7{]} & {[}943.9, 976.3{]} & {[}1123.7, 1160.0{]}   \\ \hline
		\alg{AdaIMM}                             & \textbf{291.8}              & \textbf{544.4}                & \textbf{761.8}              & \textbf{965.8}              & \textbf{1146.3}                 \\ 
		(R = 150)                             & {[}281.3, 302.4{]} & {[}531.6, 557.2{]}   & {[}746.6, 776.9{]} & {[}949.7, 982.0{]} & {[}1129.1, 1163.5{]} \\ \hline
	\end{tabular}
}
\end{table}

\begin{table}[t]
	\centering
	\caption{The performance of influence Spread on Flixster.}
	\resizebox{3.45in}{!}{%
		\label{table::flixster_per}
		\begin{tabular}{|c|c|c|c|c|c|}
			\hline
			\multirow{2}{*}{Method/Simulations} & \multicolumn{5}{c|}{Round}                                                                                   \\ \cline{2-6} 
			& 1                  & 2                  & 3                    & 4                    & 5                    \\ \hline
			\alg{SG}                                 & 558.8              & 936.2              & 1200.3               & 1437.9               & 1631.5               \\ 
			(R = 10000)                           & {[}557.3, 560.3{]} & {[}934.5, 937.9{]} & {[}1198.4, 1202.2{]} & {[}1435.9, 1439.9{]} & {[}1629.5, 1633.6{]} \\ \hline
			\alg{SG-R}                               & \textbf{559.8}              & 949.2              & 1262.6               & 1530.3               & 1764.9               \\ 
			(R = 10000)                           & {[}558.3, 561.3{]} & {[}947.4, 951.0{]} & {[}1260.6, 1264.5{]} & {[}1528.2, 1532.4{]} & {[}1762.7, 1767.0{]} \\ \hline
			\alg{E-WR-Greedy}                             & 557.8              & 976.5              & 1304.2               & 1587.8               & 1840.0               \\ 
			(R = 10000)                           & {[}556,3 559.2{]}  & {[}974.8, 978,3{]} & {[}1302.2, 1306.1{]} & {[}1585.8, 1580.8{]} & {[}1838.0, 1842.1{]} \\ \hline
			\alg{WR-IMM}                            & 558.1              & 967.5                & 1306.9              & 1599.1              & 1836.4                 \\ 
			(R = 10000)                           & {[}556.7, 559.6{]} & {[}965.7, 969.3{]} & {[}1306.9, 1308.9{]} & {[}1597.1, 1601.1{]} & {[}1834.3, 1838.5{]}   \\ \hline
			\alg{CR-Greedy}                            & 519.9              & 948.6                & 1295.7              & 1593.5              & 1863.8                 \\ 
			(R = 10000)                           & {[}518.4, 521.5{]} & {[}946.7, 950.5{]} & {[}1293.7, 1297.7{]} & {[}1591.4, 1595.5{]} & {[}1861.7, 1865.9{]}   \\ \hline
			\alg{CR-IMM}                            & 521.7              & 935.8                & 1275.3              & 1585.9              & 1865.1                 \\ 
			(R = 10000)                           & {[}521.7, 523.2{]} & {[}933.1, 937.0{]} & {[}1273.3, 1277.3{]} & {[}1583.8, 1588.0{]} & {[}1863.1, 1867.3{]}   \\ \hline
			\alg{AdaGreedy}                             & 557.8              & 977.8              & 1307.7               & 1605.2               & 1861.8               \\ 
			(R = 100)                             & {[}539.8, 580.5{]} & {[}956.2, 999.1{]} & {[}1291.1, 1324.3{]} & {[}1588.1, 1622.3{]} & {[}1845.3, 1878.3{]} \\ \hline
			\alg{AdaIMM}                             & 555.5              & \textbf{977.9}              & \textbf{1317.2}               & \textbf{1613.2}               & \textbf{1872.5}               \\ 
			(R = 100)                             & {[}542.3, 568.6{]} & {[}962.9, 993.0{]} & {[}1300.8, 1333.5{]} & {[}1594.2, 1632.1{]} & {[}1853.0, 1891.9{]} \\ \hline
		\end{tabular}
	}
\end{table}

Several observations can be made from these results.
First, all six proposed algorithms in this paper performs significantly better than
	the baseline algorithms \alg{SG} and \alg{SG-R}.
Besides the cross-round non-adaptive algorithms, the confidence intervals do not overlap starting from round $2$.
In terms of the empirical average, the improvement is
	obvious: at the end of the 5th round, for NetHEPT, MRIM algorithms is
	at least $8.8\%$ better than \alg{SG}, and $7.3\%$ better than \alg{SG-R};
	for Flixster, MRIM algorithms is
	at least $12.8\%$ better than \alg{SG}, and $4.3\%$ better than \alg{SG-R}.
Even if we use the ratio between the lower confidence bounds of MRIM algorithms vs.
	the upper confidence bounds of the baseline algorithms, the result is similar.
\alg{SG-R} performs better than \alg{SG}, which implies that influential nodes
	are important in these datasets and it is preferred to re-select them.
However, the improvement of MRIM algorithms over \alg{SG} and \alg{SG-R} are
	increasing over rounds, showing that adjusting to MRIM is increasingly
	important.
It is reasonable to see that with more rounds, always sticking to the same seed sets
	or always changing seed sets would not perform well.
	
Second, the adaptive algorithms perform better than the within-round non-adaptive algorithms
	in both dataset: the confidence interval does not overlap for all results
	in round $4$ and $5$ and most results in round $3$.
This conforms with our intuition that adaptive algorithms performs better.
The improvement are not very significant in 
	Tables ~\ref{table::hep_per} and~\ref{table::flixster_per}, which means the
	strength of the adaptiveness has not be fully explored yet.

Third, the cross-round setting is more effective, but less efficient than within-round setting of non-adaptive algorithms,
due to a larger search space.
In practice, the cross-round setting only shows outstanding performances overall, but doesn't guarantee the good performance in every round. In fact, it always performs worst in round $1$ comparing to the other algorithms. 

Fourth, the cross-round non-adaptive algorithm performs as well as the adaptive algorithm on NetHept dataset. 
Adaptive algorithm requires the real-life spread between each round comparing to non-adaptive algorithms, 
but always performs the best in each round.
	
Fifth, we can see that \alg{AdaIMM} achieves the same level of influence spread as \alg{AdaGreedy}: the confidence intervals always overlap, which means \alg{AdaIMM} performs well in practice.

\subsubsection{Degree of Adaptiveness}

We vary the parameters $T$ and $k$ whiling keeping $Tk$ the same.
With smaller $k$, it means each round we select a smaller number of seed sets, and 
	 we use more adaptive rounds.
Therefore, small $k$ and large $T$ mean a high degree of adaptiveness.
We test this using \alg{AdaIMM}, since it is more efficient than \alg{AdaGreedy}
	while providing the same level of influence spread.

Table \ref{table::tf} presents the result on this test, in which we vary 
	$(T,k)$ as $(1,50)$, $(2, 25)$, $(5,10)$ and $(10,5)$.
The influence spread significantly increases with the increase
	of the degree of adaptiveness, and the increase is quite significant.
The $10$-round $5$-seed setup is $36.3\%$ better than $1$ round $50$-seed setup on the empirical average.
This shows that higher adaptive degree indeed improves the performance.


\begin{table}[t]
	\centering
	\caption{Influence spread with different adaptive degree.}
	\label{table::tf}
	\resizebox{3.45in}{!}{%
		\begin{tabular}{|c|c|c|c|c|}
			\hline
			Num. of Rounds & 1                  & 2                    & 5                    & 10                                      \\ \hline
			Num. of Seeds  & 50                 & 25                   & 10                   & 5                                       \\ \hline
			\alg{AdaIMM}        & 883.0              & 1040.3               & 1141.0               & 1204.7                           \\ 
			(R = 100)        & {[}856.0, 910.1{]} & {[}1022.6, 1058.1{]} & {[}1119.3,  1162.6{]} & {[}1178.2, 1231.3{]}  \\ \hline
		\end{tabular}
	}
\end{table}

\begin{table}[]
	\centering
	\caption{Running time of the algorithms, in seconds.}
	\label{table:time}
	\resizebox{3.45in}{!}{%
		\begin{tabular}{lcccc}
			\hline
			\multicolumn{1}{|c|}{} & \multicolumn{1}{c|}{\alg{SG}} & \multicolumn{1}{c|}{\alg{SG-R}}              
			& \multicolumn{1}{c|}{\alg{E-WR-Greedy}}     & \multicolumn{1}{c|}{\alg{WR-IMM}}            \\ \hline
			\multicolumn{1}{|c|}{NetHEPT}  & \multicolumn{1}{c|}{439.2}        & \multicolumn{1}{c|}{87.8}      & \multicolumn{1}{c|}{551.2}  & \multicolumn{1}{c|}{1.97}              \\ 
			\multicolumn{1}{|c|}{(R = 5)}    & \multicolumn{1}{c|}{{[}407, 470.94{]}}     & \multicolumn{1}{c|}{{[}81.5, 94.2{]}}  & \multicolumn{1}{c|}{{[}527.9, 574.4{]}}   & \multicolumn{1}{c|}{{[}1.91, 2.03{]}} \\ \hline
			\multicolumn{1}{|c|}{Flixster} & \multicolumn{1}{c|}{4862.3}   & \multicolumn{1}{c|}{972.5}             & \multicolumn{1}{c|}{2478.9}               & \multicolumn{1}{c|}{3.16}             \\ 
			\multicolumn{1}{|c|}{(R = 5)}    & \multicolumn{1}{c|}{{[}4773.3, 4951.3{]}} & \multicolumn{1}{c|}{{[}990.3,954.7{]}} & \multicolumn{1}{c|}{{[}2422.4, 2535.5{]}} & \multicolumn{1}{c|}{{[}3.14, 3.18{]}} \\ \hline
			\\ \hline
			\multicolumn{1}{|c|}{}& \multicolumn{1}{c|}{\alg{CR-Greedy}} & \multicolumn{1}{c|}{\alg{CR-IMM}}    & \multicolumn{1}{c|}{\alg{AdaGreedy}}     & \multicolumn{1}{c|}{\alg{AdaIMM}}            \\ \hline
			\multicolumn{1}{|c|}{NetHEPT} & \multicolumn{1}{c|}{2105.6}              &      \multicolumn{1}{c|}{2.13}           & \multicolumn{1}{c|}{465.4}                & \multicolumn{1}{c|}{2.01}             \\ 
			\multicolumn{1}{|c|}{(R = 5)} & \multicolumn{1}{c|}{{[}2036.2, 2175.0{]}}  & \multicolumn{1}{c|}{{[}2.05, 2.21{]}}   & \multicolumn{1}{c|}{{[}473.8, 457.0{]}}   & \multicolumn{1}{c|}{{[}1.93, 2.09{]}} \\ \hline
			\multicolumn{1}{|c|}{Flixster} & \multicolumn{1}{c|}{9587.6}            &        \multicolumn{1}{c|}{3.61}        & \multicolumn{1}{c|}{2305.5}               & \multicolumn{1}{c|}{3.23}             \\ 
			\multicolumn{1}{|c|}{(R = 5)}  & \multicolumn{1}{c|}{{[}9145.3.10029.9{]}} & \multicolumn{1}{c|}{{[}3.59, 3.63{]}} & \multicolumn{1}{c|}{{[}2161.0, 2450.0{]}} & \multicolumn{1}{c|}{{[}3.16, 3.30{]}} \\ \hline
		\end{tabular}
	}
\end{table}

In summary, with same total budget, the higher number of total rounds with higher influence spread performance in general.
\vspace{-0.1in}
\subsubsection{Running Time}

Table \ref{table:time} reports the running time of all the tested algorithms on the
	two datasets, when running with $T=5$ and $k=10$.
One clear conclusion is that all \alg{IMM} algorithms are much more efficient that others,
	with two to three orders of magnitude faster than all other algorithms.
Among greedy algorithms, 
\alg{SG-R} runs faster because it selects only $10$ seeds once.
\alg{SG} is the slowest, much slower than \alg{E-WR-Greedy} and \alg{AdaGreedy}
	because it needs to select $50$ different seeds, and when it selects more seeds,
	their marginal influence spread does  not differ from one another much and thus
	the lazy-evaluation optimization is not as effective as selecting the first $10$
	seeds.


After combining the influence spread and running time performance, our conclusion
	is that (a) algorithms designed for the MRIM task is better,
	(b) the cross-round setting is more effective, but less efficient than the within-round setting of non-adaptive algorithms,
	and (c) \alg{AdaIMM} is clearly the best for adaptive MRIM task.
	
\section{Acknowledge}

This work is supported in part by NSF through grants IIS-1526499, IIS-1763325, and CNS-1626432, and NSFC 61672313, 61433014.

\vspace{-4.5pt}

%


\bibliographystyle{unsrtnat}
\bibliography{bibdatabase} 
\clearpage

\OnlyInFull{
\appendix

\section*{Appendix}
The appendix is organized as follows.
In Section~\ref{app:proofs}, we give all proofs of lemmas and theorems , i.e.,
    Section~\ref{app:namrim} for the non-adaptive MRIM, 
    Section~\ref{app:amrim} for the adaptive MRIM,
    Section~\ref{app:crimm} for the \CRIMM,
    and Section~\ref{app:wrimm} for the \WRIMM,
In Section~\ref{app:pseudocode}, we give our complete IMM algorithms and compare them with the original IMM.
\section{IMM Algorithms} \label{app:pseudocode}

\subsection{Cross-Round IMM Algorithm}

\alg{\CRIMM} is very similar to standard \alg{IMM} and only has a few differences include several points.
We have talked about them in the formal paper already.

\subsection{Within-Round IMM Algorithm} \label{app:WRIMM}


\alg{\WRIMM} (Algorithm \ref{alg:wrimm})  contains two main phases like standard \alg{IMM}.
In Phase 1, the algorithm estimates the number of RR sets $\theta$  needed by estimating
the lower bound $LB$ of the optimal influence spread.
It is done by iteratively halving a threshold value $x_t$, and for each
$x_t$ value, generating $\theta_j$ RR sets (lines~\ref{line:wgenRRset1b}--\ref{line:wgenRRset1e}),
and finding the seed set $\cS_j$ from the current RR sets (line~\ref{line:wnodeselect1}), 
and checking if the current $x_t$ is good enough for the lower bound estimate
(line~\ref{line:wcheckLB}), where $F_{M}(\cS_j)$ is the fraction of RR sets
in $\cM$ that are covered by $\cS_j$.
Procedure $\alg{NodeSelection}(\cM,k)$ is to greedily select $k$ nodes that covers
the most number of RR sets in $\cM$, and it is the same as in~\cite{tang15}, so it
is omitted.
In Phase 2, based on the $\theta$ RR sets generated, $\cS_t$ can be selected
by procedure $\alg{NodeSelection}(\cM,k)$.
Finally, propagation from $\cS_t$ are observed and the active nodes are used as the
feedback for the next round.

\begin{algorithm}[h] 
	\caption{{\WRNS}: Within-Round Node Selection} \label{alg:wrnr}
	\KwIn{Multi round RR vector sets $\cM$, $t$, $k$}
	\KwOut{seed sets $\cS_t^o$}
	
	Build count array: $c[(u, t)] = \sum_{(u, t) \in \cM}|(u, t)|$, $\forall (u, t) \in \cV_t$\; \label{line:wrgen1}
	Build RR set link: $RR[(u, t)]$, $\forall (u, t) \in \cV_t$\; \label{line:wrgen2}
	For all $\cR \in \cM$, $covered[\cR] = false$\;
	$\cS_t^o \leftarrow \emptyset$; $\cC\leftarrow \cV_t$\;
	\For{$i = 1 \; to \; k$ \label{line:wrfor1}}{
		$(u, t) \leftarrow \argmax_{(u', t)\in \cC\setminus \cS_t^o} c[(u', t)]$\;\label{line:wrf2}
		$\cS_t^o \leftarrow \cS_t^o \cup \{(u, t)\}$\; 
		\For{all $\cR \in RR[(u, t)] \wedge covered[\cR]= false$}{\label{line:wrfor2}
			$covered[\cR]= true$\;
			for all $(u', t) \in \cR \wedge (u', t) \ne (u, t)$ to do $c[(u', t')] = c[(u', t')] -1$\;\label{line:wrfor3}
		}
	}
	\bf{return} $\cS_t^o $.
\end{algorithm}

%

\begin{algorithm}[t] 
	\caption{{\WRIMM}: Non-adaptive IMM Algorithm for Within Round} \label{alg:wrimm}
	\KwIn{Graph $G=(V,E)$, round number $T$, budget $k$, accuracy parameters $(\varepsilon, \ell)$, triggering set distributions}
	\KwOut{$\cS^o = \cS_1^o \cup \cS_2^o \cup \dots\cup \cS_T^o$.}
	
	$\cS_1^o, \cS_2^o, \dots, \cS_T^o \leftarrow \emptyset$;
	$\ell \leftarrow \ell + \ln (2T) / \ln n $;
	$LB \leftarrow 1$; $\varepsilon_0 = e^{(1-1/e)} \varepsilon/2$; $\varepsilon' = \sqrt{2}\varepsilon_0$\;
	$\alpha \leftarrow \sqrt{\ell \ln{n} + \ln{2} + \ln T}$;
	$\beta \leftarrow \sqrt{(1-1/e)\cdot (\ln{\binom{n}{k}}+\alpha^2)}$\;
	$\lambda' \leftarrow [(2+\frac{2}{3}\varepsilon') \cdot (\ln{\binom{n}{k}}+\ell \cdot \ln{n}+ \ln T+\ln{\log_2{n}})\cdot n]/\varepsilon'^2$\;
	$\lambda^* \leftarrow 2 n \cdot ((1-1/e) \cdot \alpha + \beta)^2 \cdot \varepsilon_0^{-2}$\;
	$Root_0 = V$\;
	\For{$t=1$ to $T$}{
		\tcp{Phase 1: Estimating the number of RR sets needed, $\theta$}
		$\cM \leftarrow \emptyset$\;
		\For{$j = 1$  to  $\log_2{(\newalg{n} - 1)}$ \label{line:wfor1}}{
			$x_t \leftarrow \newalg{n}/2^j$\; \label{line:wassignx}
			$\theta_j \leftarrow \lambda'/x_t$\; \label{line:wgenRRset1b}
			\While{$|\cM| < \theta_j$}{
				Select a node $u$ from $Root_{t-1}$ uniformly at random \; \label{line:wsample1}
				Generate RR set with pair notation $\cR_{u,t}$ from $u$, and insert it into $\cM$\; \label{line:wgsample1}
			}\label{line:wgenRRset1e} 
			
			$\cS_j \leftarrow \NodeSelection(\cM,k)$\; \label{line:wnodeselect1}
			\If{$\newalg{ n \cdot F_{\cM}(\cS_j)} \geq (1 + \varepsilon') \cdot x$}{
				\label{line:westimate1}
				$LB \leftarrow \newalg{ n \cdot F_{\cM}(\cS_j)}/(1 + \varepsilon')$\; 
				\label{line:westimate2}
				break\;
			} 
		}
		
		$\theta \leftarrow \lambda^* / LB$\;
		\While{$|\cM| \leq \theta$}{ \label{line:wcheckLB}
			Select a node $u$ from $Root_{t-1}$ uniformly at random\;\label{line:wsample2}
			Generate $\cR_{u,t}$ in round $t$, and insert it into $\cM$\; 	\label{line:wgsample2}	
		}
		
		\tcp{Phase 2: Generate $\theta$ RR-vector sets and select seed nodes}
		$\cS^o_t \leftarrow \NodeSelection(\cM, k)$\;
		$Root_t \gets$ the remaining roots of the RR sets in $\cM$ after the $\NodeSelection$ procedure, treat it as a multiset
	}
	\bf{return} $\cS^o = \cS_1^o \cup \cS_2^o \cup \dots\cup \cS_T^o$
\end{algorithm}

\subsection{Adaptive IMM Algorithm}

\begin{algorithm}[t] 
	\caption{{\alg{AdaIMM}}: Adaptive IMM algorithm for round $t$ } \label{alg:aimm}
	\KwIn{Graph $G=(V,E)$, round number $T$, budget $k$, accuracy parameters $(\varepsilon, \ell)$, triggering set distributions, all active nodes by round $t-1$ $A_{t-1}$}
	\KwOut{seed set $S_i$, \newalg{and updated active nodes $A_i$}}

	\tcp{Phase 1: Estimating the number of RR sets needed, $\theta$}
	
	
	\newalg{$n_a \leftarrow n - |A_{t-1}|$}; 
	\newalg{$\ell \leftarrow \ell + \ln (2T) / \ln n $};
	\newalg{$\varepsilon_0 = e^{(1-1/e)} \varepsilon/2$}\;
	$\varepsilon' \leftarrow \sqrt{2} \cdot \varepsilon_0 $; 
	$\cR_{t} \leftarrow \emptyset$; $M \leftarrow \emptyset$\; 
	$\alpha \leftarrow \sqrt{\ell \ln{n} + \ln{2}}$;
	$\beta \leftarrow \sqrt{(1-1/e)\cdot (\ln{\binom{n}{k}}+\alpha^2)}$\;
	$\lambda' \leftarrow [(2+\frac{2}{3}\varepsilon') \cdot (\ln{\binom{n}{k}}+\ell \cdot \ln{n}+\ln{\log_2{n}})\cdot n]/\varepsilon'^2$\;
	$\lambda^* \leftarrow 2 n \cdot ((1-1/e) \cdot \alpha + \beta)^2 \cdot \varepsilon_0^{-2}$\;
	\For{$j = 1 \; to \; \log_2{(\newalg{n_a} - 1)}$ \label{line:for1}}{
		$x_t \leftarrow \newalg{n_a}/2^j$\; \label{line:assignx}
		$\theta_j \leftarrow \lambda'/x_t$\; \label{line:genRRset1b} 
		\While{$|M| < \theta_j$}{
			Select a node $v$ from \newalg{$V \setminus A_{t-1}$} uniformly at random\; \label{line:sample1}
			Generate RR set $R$ from $v$, and insert it into $M$\; 
		}\label{line:genRRset1e} 
		$S_j \leftarrow \NodeSelection(M,k)$\; \label{line:nodeselect1}
		\If{$\newalg{n_a} \cdot F_{M}(S_j) \geq (1 + \varepsilon') \cdot x_i$}{
			\label{line:estimate1}
			$LB \leftarrow \newalg{n_a} \cdot F_{M}(S_j)/(1 + \varepsilon')$\; 
			\label{line:estimate2}
			break\;
		} 
	}
	$\theta \leftarrow \lambda^* / LB$\;
	\While{$|M| \leq \theta$}{ \label{line:checkLB}
		Select a node $v$ from \newalg{$V \setminus A_{t-1}$} uniformly at random\;\label{line:sample2}
		Generate $R$ for $v$, and insert it into $M$\; 		
	}
	
	\tcp{Phase 2: Generate $\theta$ RR sets and select seed nodes}
	$S_t \leftarrow \NodeSelection(M,k)$ \;
	\newalg{Observe the propagation from $S_t$ in round $t$, update
		the set of activated nodes to $A_t$}\; 
	
	\bf{return} $S_t$, \newalg{$A_t$}.
\end{algorithm}

Therefore, with Eq.~\eqref{eq:AdaRRset}, the same RR-set based algorithm can be used,
and we only need to properly change the RR-set generation process and the
estimation process.
Algorithm~\ref{alg:aimm} provides the pseudocode for the \alg{AdaIMM} algorithm, which
is based on the \alg{IMM} algorithm in~\cite{tang15}.
For convenience, we mark the part different from \alg{IMM} in blue.
Phase 1 of the algorithm estimate the number $\theta$ of RR sets needed by estimating
the lower bound $LB$ of the optimal spread value.
This estimation is by iteratively halving a threshold value $x_t$, and for each
$x_t$ value, generate $\theta_j$ RR sets (lines~\ref{line:genRRset1b}--\ref{line:genRRset1e}),
and find the seed set $S_j$ from the current RR sets (line~\ref{line:nodeselect1}), 
and check
if the current $x_t$ is good enough for the lower bound estimate
(line~\ref{line:checkLB}), where $F_{M}(S_j)$ is the fraction of RR sets
in $M$ that are covered by $S_j$.
Procedure $\alg{NodeSelection}(M,k)$ is to greedily select $k$ nodes that covers
the most number of RR sets in $M$, and it is the same as in~\cite{tang15}, so it
is omitted.
In Phase 2, $\theta$ RR sets are generated and node $S_t$ is selected greedily
by procedure $\alg{NodeSelection}(M,k)$.
Finally, propagation from $S_t$ are observed and the activated nodes are used as the
feedback for the next round.

The main differences from the standard \alg{IMM} include several points.
First, whenever we generate new RR sets in round $t$, we only start from roots in
$V \setminus A_{t-1}$ (lines~\ref{line:sample1} and~\ref{line:sample2}), as explained by Lemma~\ref{lem:ARRset}.
Second, when we estimate the influence spread, we need to adjust it using
$n_a = n - |A_{t-1}|$ (lines~\ref{line:for1}, \ref{line:assignx}, \ref{line:estimate1}, \ref{line:estimate2}),
again by Lemma~\ref{lem:ARRset}.
Third, we need to adjust $\ell$ to be $\ell + \log(2T)/\log n$,
and $\varepsilon$ to $\varepsilon_0 = e^{(1-1/e)} \varepsilon/2$.
This is to guarantee that in each round we have probability at least $1 - 1/(n^\ell T)$ to
have $S_t$ as a $(1-1/e - \varepsilon_0)$ approximation, so that the result for the
total $T$ rounds would come out correctly as stated in the following theorem.

\section{Proofs for Lemmas and Theorems}\label{app:proofs}

\subsection{Proofs for Section \ref{sec:nonadaptive} [Non-Adaptive MRIM]}\label{app:namrim}

\crmrim*
\begin{proof}[Proof (Sketch)]
	
	Given a set $U$ which is partitioned into disjoint sets $U_1, \dots, U_n$ and $\mathcal{I}=\{X\subseteq U\colon |X\cap U_i|\le k_i, \forall i\in[n]\}$, $(U,\mathcal{I})$ is called a {\em partition matroid}.
	Thus, the node space $\cV$ with the constraint of MRIM, namely $(\cV, \{\cS\colon|S_t|\le k, \forall t\in[T]\})$, is a partition matroid.
	This indicates that MRIM under cross-round setting is an instance of submodular maximization under partition matroid, 
	and thus the performance of \alg{CR-Greedy} has the following guarantee \cite{fisher1978analysis}. Meanwhile, Lemma~\ref{lem:mrtsubmodular} shows that $\rho$ is monotone and submodular in the cross-rounding setting. 
	Therefore, the final output $\cS^o$	 satisfies:
	
	\begin{equation*}
	\rho(\cS^o) \ge (\frac{1}{2}-\varepsilon) \rho(\cS^*).
	\end{equation*}
	While $R=\lceil 31 k^2T^2n\log(2kn^{\ell+1} T)/\varepsilon^2 \rceil$.
	Finally, the total running time is simply 
	$O(TknRm) = O(k^3\ell T^3n^2m\log(n T)/\varepsilon^2)$.
\end{proof}

\subsection{Proof for Section \ref{sec:AMRIM} [Adaptive MRIM]}\label{app:amrim}

This is the proof of adaptive sub-modularity Lemma~\ref{lem:adaSubmodular}.

\adamonotone*

\adasubmodular*

\begin{proof}     
	For all $\psi$ and $\psi'$, such that $\psi \subseteq \psi'$ and $|\dom(\psi)| = i'-1$.
	Let $i=|\dom(\psi)| +1$, then $i\le i'$.
	We want to show that $\Delta((S_{i'},i') | \psi) \geq  \Delta((S_{i'},i') | \psi')$.
	To prove this rigorously, we define a coupled distribution $\mu$ over pairs of realizations $\phi \sim \psi$
	and $\phi' \sim \psi'$. 
	Let $L^\phi_t$ represent the live-edge graph in round $t$ that determines
	the propagation.
	Note that in the MRT model, each realization $\phi$ corresponds to a live-edge graph sequence $\{(L_1^\phi, \dots, L_T^\phi)\}$.
	We use $\bX$ to denote the random live-edge graph sequence $\{(L_1, \dots, L_T)\}$.
	We define $\mu$ implicitly in terms of a joint distribution on $\bX \times \bX'$, 
	where $\phi = \phi(\bX)$ and $\phi' = \phi'(\bX')$ are the realizations induced by the two distinct sets
	of random live-edge graphs, respectively. Hence, $\mu(\phi(\bX),\phi(\bX')) = \hat{\mu}(\bX,\bX')$.
	We will construct $\hat{\mu}$ so that the status of live-edge graph which are unobserved by
	both $\psi$ and $\psi'$ and are the same in $\bX$ and $\bX'$, 
	meaning $L_i = L_i'$ for all such live-edge graphs, or else $\hat{\mu}(\bX,\bX') = 0$.
	
	The above constraints leave us with the following degrees of freedom: we may select $X_i$
	which is unobserved by $\psi$.
	We select them independently. Hence for all $(X, X')$ satisfying the above constraints, where
	$X(e, L_j(\phi))$ shows the status of $e \in E$ in the $j$-th round with $L_j$ under $\phi$
	
	
	
	\begin{align}
	\hat{\mu}(\bX,\bX') 
	&= \prod_{j \geq i} \Pr(L_j^\phi), \label{eq:17}
	\end{align}   
	and otherwise $\mu(X,X') = 0$. Note that only $\phi$ will influence the
	$\hat{\mu}(\bX,\bX')$. 
	
	We first claim that
	given a partial realization $\psi$ with $i$ rounds, and
	a particular realization $\phi$,
	the conditional probability of $\phi$ given $\psi$, 
	$p(\phi | \psi )$, satisfies that
	\begin{equation*}
	p(\phi | \psi ) = \sum_{\phi'} \mu(\phi ,\phi'),
	\end{equation*}
	where $\phi'$ is another particular realization, $\psi \subseteq \psi'$
	and $\mu(\cdot)$ is the joint distribution. The proof is as follows.
	
	First, we have fixed $\psi$ and $\psi'$, where $\psi \subseteq \psi'$.
	By definition of a coupled distribution $\mu$, if and only if 
	$\phi \sim \psi$ and $\phi' \sim \psi'$, $\mu(\cdot) \neq 0$. 
	We want to calculate the conditional probability of $\phi$ given $\psi$.
	For any fixed $\phi$, let a set $(S_t, t) \in \dom(\psi')$. 
	When $t \leq i' - 1$, if we want $\mu(\cdot) \neq 0$, 
	we have $\phi' \sim \psi'$ and $\phi'((S_t, t)) = \psi'((S_t, t))$.
	When $t \geq i'$, by definition, if we want $\mu(\cdot) \neq 0$,
	$\phi'((S_t, t)) = \phi((S_t, t))$. 
	Since $\phi$ is fixed, $\phi'$ is consistent with $\phi$ when $t \geq i'$.
	Finally, we can only find one $\phi'$ that satisfies the constraints for
	non-zero $\mu(\phi,\phi')$, 
	and we have $ \sum_{\phi'} \mu(\phi ,\phi') = \mu(\phi,\phi')
	=\hat{\mu}(\phi(X),\phi'(X')) = \prod_{j \geq i} \Pr(L_j^\phi) = p(\phi \mid \psi)$,
	where the last equality holds due to the definition of $\phi$ and
	$\psi$.

	We then claim that
	given a partial realization $\psi'$ with $i$ rounds, and
	a particular realization $\phi'$,
	the conditional probability of $\phi'$ given $\psi'$, 
	$p(\phi' | \psi' )$, satisfies that
	\begin{equation*}
	p(\phi' | \psi' ) = \sum_{\phi} \mu(\phi ,\phi'),
	\end{equation*}
	where $\phi$ is the other a particular realization, $\psi \in \psi'$
	and $\mu(\cdot)$ is the joint distribution.
	We prove this as follows.

	Again, we have fixed $\psi$ and $\psi'$, where $\psi \subseteq \psi'$.
	We want to calculate the conditional probability of $\phi'$ given $\psi'$, where $\phi'$ is a fixed realization.
	As the same, if and only if $\phi \sim \psi$ and $\phi' \sim \psi'$,
	$\mu(\cdot) \neq 0$. 
	Consider a set $(S_t, t) \in \dom(\psi')$. 
	When $t \leq i - 1$,  if we want $\mu(\cdot) \neq 0$, 
	$\phi((S_t, t)) = \psi((S_t, t))$.
	When $t \geq i'$,  if we want $\mu(\cdot) \neq 0$, 
	$\phi((S_t, t)) = \phi'((S_t, t))$.
	When $ i \le t \le i'-1$, we want $\mu(\cdot) \neq 0$ with any $\phi$.
	Finally, we can only find multiple $\phi$ satisfies the constraints for
	non-zero $\mu(\phi,\phi')$,
	and we have $\sum_{\phi} \mu(\phi ,\phi')
	=\sum_{\phi} \hat{\mu}(\phi(X),\phi'(X')) 
	= \sum_{\phi} \prod_{j \geq i} \Pr(L_j^\phi)
	= \prod_{j \ge i'} \Pr(L_j^{\phi'}) \sum_{\phi} \prod_{i \le j \le i'-1} \Pr(L_j^\phi)
	= \prod_{j \ge i'} \Pr(L_j^{\phi'})  =
	p(\phi' | \psi')
	$.

	We next claim that for all $(\phi, \phi') \in \text{support}(\mu)$,
	\begin{align*}
	& f(\dom(\psi') \cup (S_{i'},i') , \phi') - f(\dom(\psi'), \phi') \\
	& \le f(\dom(\psi) \cup (S_{i'},i') , \phi) - f(\dom(\psi), \phi). 
	\end{align*}
	A sufficient condition for the above inequality is
	\begin{align*}
	\Gamma(L_{i'}^{\phi'}, S_{i'}) \setminus \bigcup_{j=1}^{i'-1} \Gamma(L_j^{\phi'}, S_j)  \subseteq 
	\Gamma(L_{i'}^\phi, S_{i'}) \setminus \bigcup_{j=1}^{i-1} \Gamma(L_j^\phi, S_j). 
	\end{align*}
	Let $C' = \Gamma(L_{i'}^{\phi'}, S_{i'})$, $B' = \Gamma(L_{i'}^\phi, S_{i'})$
	denote the active before and after select $S_{i'}$ as the beginning nodes
	at $i'$ round respect to $\psi'$ and $\phi'$. 
	Similarity, define $C$ a $B$ respect to $\psi$ and $\phi$. 
	Let $D = C \setminus B$ and $D' = C' \setminus B'$.
	It suffices to show that $B \subseteq B'$ and $D' \subseteq D$ to prove the above inequality, which we will now do.
	By definition, we know that $C' = C$. 
	Based on adaptive monotonicity, we have $B \subseteq B'$. 
	Then, it is very obviously to prove that $D' \subseteq D$.
	
	
	
	
	Now we now proceed to use it to show $\Delta((S_{i'},i') | \psi) \geq  \Delta((S_{i'},i') | \psi')$.
	\begin{align*}
	&\Delta((S_{i'},i') | \psi') \\
	&= \sum_{\phi'}p(\phi'|\psi') f(\dom(\psi') \cup (S_{i'},i') , \phi') - f(\dom(\psi'), \phi'))\\
	&= \sum_{\phi'}\sum_{\phi} \mu(\phi ,\phi') f(\dom(\psi') \cup (S_{i'},i') , \phi') - f(\dom(\psi'), \phi'))\\
	&= \sum_{\phi, \phi'} \mu(\phi, \phi') f(\dom(\psi') \cup (S_{i'},i') , \phi') - f(\dom(\psi'), \phi'))\\
	&\leq \sum_{\phi, \phi'} \mu(\phi, \phi') f(\dom(\psi) \cup (S_{i'},i') , \phi) - f(\dom(\psi), \phi))\\
	&= \sum_{\phi}\sum_{\phi'} \mu(\phi ,\phi') f(\dom(\psi) \cup (S_{i'},i') , \phi) - f(\dom(\psi), \phi))\\
	&= \sum_{\phi}f(\dom(\psi) \cup (S_{i'},i') , \phi) - f(\dom(\psi), \phi))\sum_{\phi'} \mu(\phi ,\phi') \\
	&= \sum_{\phi}p(\phi|\psi) f(\dom(\psi) \cup (S_{i'},i') , \phi) - f(\dom(\psi), \phi))\\
	&= \Delta((S_{i'},i') | \psi)
	\end{align*}
	This finishes the proof of the adaptive submodularity.
\end{proof}

\adagreedy*
\begin{proof}[Proof (Sketch)]
	From the general result of Theorem 5.2 in~\cite{GoloKrause11}, we know that
	if in each round we find an $\alpha$ approximation of the best solution of the round,
	then the greedy adaptive policy is an $1-e^{-\alpha}$ approximation of the optimal
	greedy policy $\pi^*$.
	Then, following essentially the same high probability argument as in the proof sketch of 
	Theorem~\ref{thm:nonadaptive}, we could conclude that
	when using $R=\lceil 31 k^2n\log(2kn^{\ell+1} T)/\varepsilon^2 \rceil$
	Monte Carlo simulations for each influence spread estimation, with probability
	at least $1-\frac{1}{n^\ell}$, in all rounds the selected $S_i$ 
	is an $1-1/e - \varepsilon_1$ approximation, with $\varepsilon_1 = e^{(1-1/e)} \varepsilon/2$.
	Together with Theorem 5.2 in~\cite{GoloKrause11}, we have that
	with probability at least $1-\frac{1}{n^\ell}$, 
	$f_{\avg}(\pi^{\ag}) \geq \left(1 - e^{-(1-\frac{1}{e})}-\varepsilon \right) f_{\avg}(\pi^*)$.
	The running time is $O(TknRm) = O(k^3\ell Tn^2m\log(n T)/\varepsilon^2)$.
\end{proof}


\subsection{Proofs for Section \ref{sec:CRIMM} [\CRIMM]}\label{app:crimm}

\crimm*
\begin{proof}
	
	\begin{align}
	& \E[\I\{ \cS\cap \cR \ne \emptyset \}]  \nonumber \\
	& = \sum_{v \in V } \Pr\{v =\rroot(\cR) \} \cdot  \E[\I\{ \cS \cap \cR \ne \emptyset \} \mid v =\rroot(\cR)] 
	\nonumber \\
	& = \frac{1}{n} \cdot \sum_{v \in V} \E[\I\{ \cS\cap \cR_v \ne \emptyset \} ] \nonumber  \\
	& = \frac{1}{n} \cdot \sum_{v \in V} \E\left[\I\{ v \in \bigcup_{i=1} \Gamma(L_i, S_i)\} \right] \label{eq:crimm}  \\
	& = \frac{1}{n} \cdot \E \left[ \sum_{v \in V} \I\{ v \in \bigcup_{i=1} \Gamma(L_i, S_i)\} \right] \nonumber \\
	& = \frac{1}{n} \cdot \E\left[\bigcup_{i=1} \Gamma(L_i, S_i)\right] \nonumber \\
	& = \frac{1}{n} \cdot \rho(\cS), \nonumber 
	\end{align}
	where Eq.~\eqref{eq:crimm} is based on the equivalence between multi round RR sets and all $T$ rounds live-edge graphs ,
	and the expectation from this point on is taken over the random live-edge graphs $L_1, L_2, \ldots, L_T$. 
\end{proof}

\begin{lemma} \label{lem:crtime}
	\alg{\CRNS} returns the multi-round sets $\cS$ which covers the multi-round RR set with $\frac{1}{2}$-approximate optimal solution. The execution time of \alg{\CRIMM} is $O(T^2kn + \sum_{\cR \in \cM} |\cR|)$.
\end{lemma}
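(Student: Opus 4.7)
The plan is to decompose the lemma into its two claims (approximation ratio and running time) and handle each in turn.

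For the approximation guarantee, I would first cast the problem solved by \alg{\CRNS} as maximum coverage under a partition matroid constraint. Define $F_{\cM}(\cS) = |\{\cR \in \cM : \cS \cap \cR \ne \emptyset\}|$ on the ground set $\cV = \bigcup_{t=1}^T \cV_t$; this is a coverage function and hence monotone and submodular. The feasibility constraint $|S_t| \le k$ for all $t \in [T]$ is exactly the independence constraint of the partition matroid $(\cV, \{\cS : |\cS \cap \cV_t| \le k, \forall t\})$. Inspecting \alg{\CRNS}, the inner loop implements the standard greedy step: pick the pair $(u,t) \in \cC \setminus \cS^o$ maximizing the marginal coverage (which equals $c[(u,t)]$ since $c$ is maintained to record uncovered RR sets containing $(u,t)$), and once round $t$ fills its budget, delete $\cV_t$ from $\cC$ so subsequent choices stay feasible. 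The classical theorem of Fisher, Nemhauser, and Wolsey then gives a $\frac{1}{2}$-approximation to $\max F_{\cM}$ over this matroid.

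For the running time, I would analyze the two phases separately. The preprocessing in lines \ref{line:crgen1}--\ref{line:crgen2} iterates over every pair $(u,t)$ appearing in every $\cR \in \cM$ once, costing $O(\sum_{\cR \in \cM} |\cR|)$. The main \texttt{for} loop runs exactly $Tk$ iterations. In each iteration, the $\argmax$ in line \ref{line:crf2} scans the candidate set $\cC \setminus \cS^o \subseteq \cV$, which has size at most $|\cV| = Tn$, contributing $O(Tn)$ per iteration and $O(T^2 kn)$ overall. The count-update inner loop in lines \ref{line:crfor2}--\ref{line:crfor3} touches each $\cR \in \cM$ at most once across the entire run (after being marked covered, $\cR$ is skipped thereafter), and when touched it does $O(|\cR|)$ work to decrement counters. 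Summing, this amortizes to $O(\sum_{\cR \in \cM} |\cR|)$ total. Adding the two phases yields $O(T^2 kn + \sum_{\cR \in \cM} |\cR|)$.

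The main obstacle, and the only nontrivial part of the argument, is the amortized bookkeeping: one must verify that the $covered[\cR]$ flag together with the ``skip covered sets'' test in line \ref{line:crfor2} guarantees that no RR set's element list is ever rescanned after it is first covered, so the global $\sum_{\cR} |\cR|$ bound indeed absorbs all decrement work rather than being multiplied by $Tk$. Once this is observed, the two terms in the stated complexity correspond cleanly to the argmax scans and the coverage updates respectively, and combining them with the approximation guarantee above completes the proof.
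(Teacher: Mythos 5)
Your proposal is correct and follows essentially the same route as the paper's own proof: the $\frac{1}{2}$-approximation via greedy maximization of the monotone submodular coverage function under the partition matroid constraint (Fisher--Nemhauser--Wolsey), plus a cost accounting that charges $O(Tn)$ per greedy step for the argmax over $Tk$ steps and amortizes all counter decrements to $O(\sum_{\cR \in \cM} |\cR|)$ via the $covered[\cR]$ flags. Your write-up is in fact more explicit than the paper's about why the amortization works, but there is no substantive difference in approach.
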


\begin{proof}
	\alg{\CRIMM} is the multi-round RR set implementation of the \alg{\CRGreedy}.
	Due to the property of partition matroid, we know the greedy soluction returns a $\frac{1}{2}$-approximate solution.
	
	About time complexity, in each iteration, it takes $O(\sum_{\cR \in \cM} |\cM|)$ to generate the count array and RR set link for each node in lines \ref{line:crgen1} and \ref{line:crgen2} of \alg{\CRNS}.
	In next $Tk$ round, every time it takes $O(Tn)$ for every avaliable seed with the maximum influence, and so the total time is $O(T^2kn)$.
	The rest part of \alg{\CRNS} adjusts the $c[(u, t)]$ and $covered[\cR]$ (from lines \ref{line:crfor2} to \ref{line:crfor3}).
	While $v$ is a chosen seed, each multi-round RR set is traversed at most once for updating both $c[(u, t)]$ and $covered[\cR]$, so it takes $O(\sum_{\cR \in \cM} |\cR|)$ for the rest part.
	So, the total execution time of $\alg{\CRNS}$ is $O(T^2kn + \sum_{\cR \in \cM} |\cR|)$.
\end{proof}

Assume $\hat{\rho}(\cS, \omega)$ is the random esitimation of $\rho(\cS)$, and $\omega$ is a random sample in space $\Omega$. Let $\cS^*$ is the optimal solution of $\rho(\cS)$, so $\cS^* = \argmax_{\cS \subset \cV, |S_i| \leq k, i \in [T], S_i \in \cS } \rho(\cS)$, and  $\OPT = \rho(\cS^*)$. For a random sample $\omega \in \Omega$, let $\hat{\cS}^g(\omega)$ be the solution of the greedy algorithm of $\hat{\rho}(\cdot, \omega)$. For a $\varepsilon > 0$, we have a bad solution $\cS$ under $\varepsilon$,if $\rho(\cS)< (1- 1/e - \varepsilon) \cdot \OPT$. Next, we give the therotical guarantee about the approximate solution of the greedy alogrithm. 

\begin{lemma} \label{lem:crerror}
	For any $\varepsilon$, $\varepsilon_1 \in (0, \varepsilon/(1/2))$, and any $\delta_1, \delta_2 > 0$,
	if (a) $\Pr\{\hat{\rho}(\cS^*, \omega)) \geq (1 - \varepsilon_1) \cdot \OPT\} \geq 1 - \delta_1$.
	(b) $\Pr\{\hat{\rho}(\cS, \omega)) \geq \frac{1}{2}(1 - \varepsilon_1) \cdot \OPT\} \leq \delta_2 / {\binom{n}{k}}^T$;
	(c) For any $\omega \in \Omega$, $\hat{\rho}(\cS, \omega)$  is monotone and submudolar, and the greedy soluction $\cS^o(\omega)$ is a $(\frac{1}{2} - \varepsilon)$-approximate solution with at least $1-\delta_1-\delta_2$ probability.
\end{lemma}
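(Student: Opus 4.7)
My plan is to prove Lemma~\ref{lem:crerror} via a standard two-event union bound argument that combines the probabilistic estimation guarantees (a) and (b) with the deterministic approximation guarantee for greedy on a monotone submodular objective under a partition matroid constraint (coming from (c) and the earlier observation that the MRIM cross-round constraint is a partition matroid). The overall structure will mirror the analysis of sampling-based IM algorithms such as in \cite{tang15}, adapted to the multi-round setting where the search space has size $\binom{n}{k}^T$ rather than $\binom{n}{k}$.

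First, I will define the ``good estimate'' event $E_1 = \{\hat{\rho}(\cS^*, \omega) \geq (1-\varepsilon_1)\OPT\}$, which by (a) has probability at least $1-\delta_1$. Next I will define the ``no bad impostor'' event
\[
E_2 = \{\forall \text{ bad } \cS\colon \hat{\rho}(\cS,\omega) < \tfrac{1}{2}(1-\varepsilon_1)\OPT\},
\]
where a solution $\cS$ is bad when $\rho(\cS) < (\tfrac12 - \varepsilon)\OPT$. Applying (b) together with a union bound over the at most $\binom{n}{k}^T$ feasible seed-set sequences in $\cV$ satisfying $|S_t|\le k$ for all $t$, the failure probability of $E_2$ is at most $\delta_2$, so $\Pr(E_2) \geq 1 - \delta_2$.

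The core of the argument is then a deterministic one: for any realization $\omega$, condition (c) says $\hat{\rho}(\cdot,\omega)$ is monotone and submodular, and the cross-round MRIM constraint is a partition matroid, so the greedy solution $\cS^o(\omega)$ on $\hat{\rho}(\cdot,\omega)$ satisfies
\[
\hat{\rho}(\cS^o(\omega),\omega) \;\geq\; \tfrac{1}{2}\max_{\cS} \hat{\rho}(\cS,\omega) \;\geq\; \tfrac{1}{2}\hat{\rho}(\cS^*,\omega),
\]
by the classical Fisher--Nemhauser--Wolsey result invoked in Theorem~\ref{thm:crmrim}. On $E_1$ this lower bound is at least $\tfrac{1}{2}(1-\varepsilon_1)\OPT$, so the greedy output $\cS^o(\omega)$ passes the threshold in the definition of $E_2$; hence on $E_1 \cap E_2$ it cannot be bad, meaning $\rho(\cS^o(\omega)) \geq (\tfrac12 - \varepsilon)\OPT$. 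A final union bound gives $\Pr(E_1 \cap E_2) \geq 1 - \delta_1 - \delta_2$, which is the claimed statement.

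The main obstacle is verifying that the constant $1/2$ and the slack $\varepsilon_1 \in (0, 2\varepsilon)$ line up correctly: we need the inequality $\tfrac12(1-\varepsilon_1) \geq \tfrac12 - \varepsilon$, which holds exactly because $\varepsilon_1 < 2\varepsilon$, so any solution whose $\hat{\rho}$-value matches the threshold in $E_2$ cannot have true $\rho$-value below $(\tfrac12 - \varepsilon)\OPT$. Beyond this, the argument is essentially bookkeeping; the only non-trivial ingredient is recognizing that the enumeration of candidate seed-set sequences is bounded by $\binom{n}{k}^T$, which is exactly why the per-set failure probability in (b) is scaled by $1/\binom{n}{k}^T$.
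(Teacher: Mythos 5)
Your proposal is correct and follows essentially the same route as the paper's proof: the deterministic $\tfrac12$-guarantee for greedy on a monotone submodular function over the partition matroid, applied to $\hat{\rho}(\cdot,\omega)$, combined with condition (a) to push $\hat{\rho}(\cS^o(\omega),\omega)$ above the threshold $\tfrac12(1-\varepsilon_1)\OPT$, a union bound over the at most $\binom{n}{k}^T$ bad candidates via condition (b), and a final union bound giving failure probability $\delta_1+\delta_2$. Your write-up is in fact somewhat cleaner than the paper's, which leaves the contrapositive step (``$\cS^o(\omega)$ exceeds the threshold, hence cannot be bad'') implicit.
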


\begin{proof}
	Since $\hat{\rho}(\cS, \omega)$ holds monotone and submudolar, by Lemma \ref{lem:crerror}, we know
	\begin{equation}
		\hat{\rho}(\cS^o(\omega), \omega) \geq \frac{1}{2} \cdot \hat{\rho}(\cS^*, \omega) \nonumber.
	\end{equation}
	Due to condition $(a)$, we know
	\begin{equation}
		\hat{\rho}(\cS^o(\omega), \omega) \geq \frac{1}{2} \cdot \hat{\rho}(\cS^*) \geq \frac{1}{2}(1 - \varepsilon_1) \cdot \OPT \nonumber,
	\end{equation} 
	with at least $1-\delta_1$ probability.
	
	Due to condition $(b)$, we know there is a bad $\cS$ causing $\hat{\rho}(\cS, \omega)) \geq \frac{1}{2}(1 - \varepsilon_1) \cdot \OPT$ with at most $\delta_2$ probability.
	Because of at most ${\binom{n}{k}}^T$ bad samples for $k$ seeds in $T$ rounds,
	the probability of each bad $S$ is at most $\delta_2/{\binom{n}{k}}^T$.
	By using union bound, we know ${\rho}(\cS^o(\omega)) \geq \frac{1}{2}(1 - \varepsilon_1) \cdot \OPT$ with at least $1-\delta_1-\delta_2$ probability.
	
\end{proof}

In multi-round RR-set-based algorithms, the random sample is a multi-round RR set $\cR$ by random generation.
Let $\theta = |\cM|$ is the number of multi-round RR sets, and then $\hat{\rho}(\cS, \omega) = \hat{\rho}(\cS, \cM) = n \cdot \kappa(\cS \cap \cM) / \theta$ which returns the number of $\cR \in \cM$ covered by $\cS$.
The output of \alg{\CRNS} is $\hat{\cS}^g = \hat{\cS}^g(\cM)$.

\begin{definition}(Martingale) \label{def:martingale}
	A sequence of random variables $X_1, X_2, X_3, \ldots$ is a martingale, for all $i \geq 1$, if and only if $\E[|X_i|] < +\infty$ and $\E[X_{i+1} \mid X_1, X_2, \ldots , X_{i}] = X_{i}$.
\end{definition}

\begin{lemma} \label{lem:martingale}
	Suppose $X_1, X_2, X_3, \ldots, X_{t}$ are random variable between $[0, 1]$.
	There is a $\mu \in [0, 1]$ that $\E[X_{i} \mid X_1, X_2, \ldots , X_{i-1}] = \mu$ for all $i \in [t]$.
	Let $Z_i = \sum_{j=1}^i(X_j - \mu)$, a sequence $Z_1, Z_2,  \ldots, X_t$ is a martingale.
\end{lemma}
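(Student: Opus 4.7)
The plan is to verify the two conditions in Definition \ref{def:martingale} directly for the sequence $Z_1, Z_2, \ldots, Z_t$. Since this is a completely standard fact about sums of centered conditionally-constant-mean random variables, the argument should be short and the only subtlety lies in the conditioning algebra.

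First, I would dispatch the integrability condition $\E[|Z_i|] < +\infty$. Because each $X_j$ lies in $[0,1]$ and $\mu \in [0,1]$, every summand $X_j - \mu$ lies in $[-1,1]$, so by the triangle inequality $|Z_i| \le i \le t$, which is bounded and hence certainly integrable. This step is essentially free.

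The substantive step is verifying the martingale identity $\E[Z_{i+1} \mid Z_1, \ldots, Z_i] = Z_i$. Writing $Z_{i+1} = Z_i + (X_{i+1} - \mu)$, I would pull out $Z_i$ (which is measurable with respect to the conditioning sigma-algebra) to reduce the claim to showing $\E[X_{i+1} - \mu \mid Z_1, \ldots, Z_i] = 0$. The key observation here is that the map $(X_1, \ldots, X_i) \mapsto (Z_1, \ldots, Z_i)$ is a bijection (given $\mu$), since $X_j = Z_j - Z_{j-1} + \mu$ for $j \ge 1$ with $Z_0 := 0$. Consequently the sigma-algebras $\sigma(Z_1, \ldots, Z_i)$ and $\sigma(X_1, \ldots, X_i)$ coincide, and by the hypothesis of the lemma,
\begin{equation*}
\E[X_{i+1} \mid Z_1, \ldots, Z_i] = \E[X_{i+1} \mid X_1, \ldots, X_i] = \mu.
\end{equation*}
Subtracting $\mu$ gives the required zero and finishes the verification.

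I do not expect a real obstacle here; the only easy-to-overlook point is justifying the sigma-algebra equivalence so that the hypothesis $\E[X_{i+1} \mid X_1, \ldots, X_i] = \mu$ can be applied to the conditioning on $Z_1, \ldots, Z_i$. Once that bijection is noted, the rest of the proof is linearity of conditional expectation plus the bounded-range integrability bound already given above.
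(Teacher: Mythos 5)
Your proposal is correct and follows essentially the same route as the paper's proof: decompose $Z_{i+1}=Z_i+(X_{i+1}-\mu)$, pull out the measurable part, apply the hypothesis, and bound $\E[|Z_i|]$ using the $[0,1]$ range. Your explicit justification that $\sigma(Z_1,\dots,Z_i)=\sigma(X_1,\dots,X_i)$ via the invertible map $X_j=Z_j-Z_{j-1}+\mu$ is a point the paper's proof silently assumes, so your write-up is in fact slightly more careful.
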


\begin{proof}
	This lemma can be proved directly. First, $\E[Z_{i+1} \mid Z_1, Z_2, \ldots , Z_{i}] = \E[\sum_{j=1}^{i+1}(X_j - \mu \mid Z_1, Z_2, \ldots , Z_{i}]  = \sum_{j=1}^{i}(X_j - \mu) + \E[X_{i+1} - \mu \mid Z_1, Z_2, \ldots , Z_{i}]  =Z_{i}$. 
	Second, $\E[|Z_i|] \leq 2i \leq +\infty$. So, a sequence $Z_1, Z_2,  \ldots, X_t$ is a martingale.
	
\end{proof}

\begin{corollary} \label{cor:ineq}
	Assume $X_1, X_2, X_3, \ldots, X_{t}$ are random variable between $[0, 1]$.
	There is a $\mu \in [0, 1]$ that $\E[X_{i} \mid X_1, X_2, \ldots , X_{i-1}] = \mu$ for all $i \in [t]$.
	Let $Y = \sum_{i=1}^t X_i$, for any $\gamma > 0$,
	\begin{equation}
		\Pr\{Y - t\mu \geq \gamma \cdot t\mu\}\leq \exp\left(-\frac{\gamma^2}{2+\frac{2}{3}\gamma}t\mu\right) \nonumber.
	\end{equation}
	For any $ 0< \gamma < 1$,
	\begin{equation}
	\Pr\{Y - t\mu \leq - \gamma \cdot t\mu\}\leq \exp\left(-\frac{\gamma^2}{2}t\mu\right) \nonumber.
	\end{equation}
\end{corollary}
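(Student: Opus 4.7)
The plan is to combine Lemma~\ref{lem:martingale} with a standard Bernstein-type martingale concentration inequality, in the same spirit as the analysis used by the \alg{IMM} paper of Tang et al.\ \cite{tang15}. Set $D_j := X_j - \mu$ and $Z_i := \sum_{j=1}^i D_j$, so that $Y - t\mu = Z_t$. By Lemma~\ref{lem:martingale}, $(Z_i)$ is a martingale and $(D_j)$ is a martingale difference sequence with respect to the filtration generated by $X_1,\ldots,X_{j-1}$; moreover $D_j \in [-\mu, 1-\mu] \subseteq [-1,1]$.

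First I would bound the cumulative conditional variance. Since $X_j \in [0,1]$ and the conditional mean is $\mu$, we have
\[
\E[D_j^2 \mid X_1,\ldots,X_{j-1}] \;=\; \E[X_j^2 \mid X_1,\ldots,X_{j-1}] - \mu^2 \;\le\; \E[X_j \mid X_1,\ldots,X_{j-1}] \;=\; \mu,
\]
so $V := \sum_{j=1}^t \E[D_j^2 \mid X_1,\ldots,X_{j-1}] \le t\mu$ almost surely.

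Next I would invoke the martingale Bernstein inequality (Freedman / Chung--Lu): for a martingale difference sequence with $|D_j| \le 1$ and cumulative conditional variance bounded by $V$, for any $\lambda > 0$,
\[
\Pr\{Z_t \geq \lambda\} \;\leq\; \exp\!\left(-\frac{\lambda^2}{2V + \tfrac{2}{3}\lambda}\right).
\]
Taking $\lambda = \gamma t\mu$ and using $V \le t\mu$ gives
\[
\Pr\{Y - t\mu \geq \gamma t\mu\} \;\leq\; \exp\!\left(-\frac{(\gamma t\mu)^2}{2 t\mu + \tfrac{2}{3}\gamma t\mu}\right) \;=\; \exp\!\left(-\frac{\gamma^2}{2 + \tfrac{2}{3}\gamma}\, t\mu\right),
\]
which is the first claim. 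For the second claim, I would apply the one-sided Chernoff/Azuma-Bernstein bound to $-Z_t$; since the negative tail does not suffer from the $\lambda/3$ correction (the cubic term in the MGF expansion has the favorable sign when $D_j$ is upper-bounded by a constant close to $\mu$, and for $0 < \gamma < 1$ the bound $e^{-x} \le 1 - x + x^2/2$ used in the MGF argument is tight enough), one obtains
\[
\Pr\{Y - t\mu \le -\gamma t\mu\} \;\le\; \exp\!\left(-\frac{\gamma^2}{2}\, t\mu\right),
\]
as required.

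The only real obstacle is stating (and citing) the correct martingale Bernstein inequality with the matching constants; the rest is a direct substitution of $\lambda = \gamma t\mu$ and the variance bound $V \le t\mu$, both of which are immediate from $X_j \in [0,1]$ and the constant conditional mean. Since the corollary is used only to drive the sample-complexity analysis of \alg{\CRIMM} (and analogous RR-set algorithms), these are exactly the forms of tail bounds that are standard in the RIS literature, so I would simply cite the corresponding martingale concentration lemma from \cite{tang15} rather than re-derive the MGF calculation.
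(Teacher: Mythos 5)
Your proof is correct and matches the paper's (implicit) treatment: the paper states this corollary without proof, as the standard martingale concentration bound inherited from the IMM analysis of \cite{tang15}, relying only on Lemma~\ref{lem:martingale}. Your derivation --- the conditional-variance bound $\E[(X_j-\mu)^2\mid X_1,\dots,X_{j-1}]\le\mu$, Freedman's inequality with $\lambda=\gamma t\mu$ for the upper tail, and the second-order MGF bound for the lower tail --- is exactly the standard way to obtain these constants.
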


\begin{lemma} \label{lem:crtheta}
	Assume a set $\cM = \{\cR_1, \cR_2, \ldots, \cR_{\theta}\}$ satisfies the martingale condition.
	For any $\varepsilon > 0$, $\varepsilon_1 \in (0, \frac{1}{2} \varepsilon)$, and any $\delta_1, \delta_2 > 0$,
	let
	\begin{equation}
		\theta^{(1)} = \frac{2n \cdot \ln(1/\delta_1)}{\OPT \cdot \varepsilon_1^2}, \theta^{(2)} = \frac{n \cdot T\ln(\binom{n}{k}/\delta_2)}{\OPT \cdot (\varepsilon - \frac{1}{2}\varepsilon_1)^2}. \nonumber
	\end{equation}
	If $\theta \geq \theta^{(1)}$, $\Pr\{\hat{\rho}(\cS^*,\cR) \geq (1-\varepsilon_1)\cdot \OPT\} \geq 1 - \delta_1$;
	If $\theta \geq \theta^{(2)}$, for any bad $\cS$ under $\varepsilon$, $\Pr\{\hat{\rho}(\cS,\cR) \geq \frac{1}{2}(1-\varepsilon_1)\cdot \OPT\} \leq \delta_2 / {\binom{n}{k}}^T$;
\end{lemma}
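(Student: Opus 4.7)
The plan is to derive both bounds via concentration applied to the indicators $X_j = \I\{\cS \cap \cR_j \neq \emptyset\}$ over the multi-round RR sets $\cR_j \in \cM$. By Lemma~\ref{lem:crimm}, $\E[X_j] = \rho(\cS)/n$, and $\hat{\rho}(\cS,\cM) = n Y / \theta$ with $Y = \sum_{j=1}^{\theta} X_j$ has mean $\rho(\cS)$. The hypothesis that $\cM$ ``satisfies the martingale condition'' is precisely what lets me invoke Lemma~\ref{lem:martingale} on the increments $X_j - \mu$ and then apply the two-sided tails in Corollary~\ref{cor:ineq}, even when $\theta$ is itself random as in the adaptive stopping loop inside \alg{\CRIMM}.

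For the first inequality, I would instantiate at $\cS = \cS^*$, so $\mu = \OPT/n$. The event $\hat{\rho}(\cS^*,\cM) < (1-\varepsilon_1)\OPT$ is equivalent to $Y - \theta\mu < -\varepsilon_1\,\theta\mu$. The lower-tail bound in Corollary~\ref{cor:ineq} with $\gamma = \varepsilon_1 \in (0,1)$ bounds its probability by $\exp(-\varepsilon_1^2 \theta \mu / 2) = \exp(-\varepsilon_1^2 \theta\,\OPT/(2n))$. Forcing this to be $\leq \delta_1$ and solving for $\theta$ gives exactly $\theta \geq \theta^{(1)}$.

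For the second inequality, I fix an arbitrary bad $\cS$, so $\rho(\cS)$ lies strictly below the cross-round threshold relative to $\OPT$. With $\mu = \rho(\cS)/n$, the event $\hat{\rho}(\cS,\cM) \geq \tfrac{1}{2}(1-\varepsilon_1)\OPT$ becomes $Y \geq (1+\gamma)\theta\mu$, where $\gamma$ is defined by $(1+\gamma)\rho(\cS) = \tfrac{1}{2}(1-\varepsilon_1)\OPT$. The key algebraic observation I would verify is $\gamma\,\rho(\cS) \geq (\varepsilon - \tfrac{1}{2}\varepsilon_1)\OPT$, which uses only the bad-$\cS$ upper bound on $\rho(\cS)$ together with the hypothesis $\varepsilon_1 < \varepsilon/2$. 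Applying the upper-tail bound in Corollary~\ref{cor:ineq} and simplifying collapses the exponent to the order of $-(\varepsilon - \tfrac{1}{2}\varepsilon_1)^2\,\theta\,\OPT / n$ (up to the constants baked into $\theta^{(2)}$). Requiring this to be at most $\delta_2/\binom{n}{k}^T$ yields $\theta \geq \theta^{(2)}$; a union bound over the at most $\binom{n}{k}^T$ candidate multi-round seed configurations (at most $\binom{n}{k}$ seed sets per round, across $T$ rounds) completes the statement for any bad $\cS$.

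The main obstacle I expect is the algebraic bookkeeping in the bad-$\cS$ step: the Bernstein-type denominator $2 + 2\gamma/3$ in Corollary~\ref{cor:ineq} must be reconciled with the clean quadratic $(\varepsilon - \tfrac{1}{2}\varepsilon_1)^2$ appearing in $\theta^{(2)}$. This will require either absorbing the $2\gamma/3$ term using the smallness constraints $\varepsilon_1 < \varepsilon/2$ and $\varepsilon \in (0,1)$, or replacing Corollary~\ref{cor:ineq} by the slightly sharper Chernoff form used in the original \alg{IMM} analysis, where the $\gamma$-dependence in the denominator is already absorbed. Everything else — verifying the martingale property of the partial sums via Lemma~\ref{lem:martingale}, linearity of expectation through Lemma~\ref{lem:crimm}, and the union bound over multi-round seed configurations — is routine once the setup above is in place.
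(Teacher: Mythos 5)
Your proposal is correct and follows essentially the same route as the paper: the lower tail of Corollary~\ref{cor:ineq} with $\gamma=\varepsilon_1$ for $\cS^*$, and the upper tail for a fixed bad $\cS$ with $\gamma$ chosen so that $\gamma\,\rho(\cS)\ge(\varepsilon-\tfrac{1}{2}\varepsilon_1)\OPT$. The bookkeeping obstacle you flag is resolved in the paper exactly as you anticipate, by absorbing the Bernstein denominator via $2\rho(\cS)+\tfrac{2}{3}(\varepsilon-\tfrac{1}{2}\varepsilon_1)\OPT\le\OPT$ using $\rho(\cS)<(\tfrac{1}{2}-\varepsilon)\OPT$; note only that the union bound over $\binom{n}{k}^T$ configurations belongs to the subsequent lemma, since the statement here is per fixed bad $\cS$.
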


\begin{proof}
	Since $\cM$ satisfies the martingale condition, by using Corollary \ref{cor:ineq}, we have
	\begin{align*}
		& \Pr\{\hat{\rho}(\cS^*,\cR) < (1-\varepsilon_1)\cdot \OPT\} \\
		= &	\Pr\left\{n \cdot \frac{\sum_{i=1}^{\theta}X_i^{\cM}(\cS^*)}{\theta} < (1-\varepsilon_1)\cdot \rho(\cS^*)\right\} \\
		= &	\Pr\left\{\sum_{i=1}^{\theta}X_i^{\cM}(\cS^*) - \theta \cdot \rho(\cS^*) / n< - \varepsilon_1 \left(\theta \cdot\rho(\cS^*)/n\right) \right\} \\
		\leq & \exp\left(- \frac{\varepsilon_1^2}{2} \cdot \theta \cdot \rho(\cS^*) / n \right) \\
		\leq & \exp\left(- \frac{\varepsilon_1^2}{2} \cdot \frac{2n \cdot \ln(1/\delta_1)}{\OPT \cdot \varepsilon_1^2} \cdot \frac{\rho(\cS^*)}{n} \right) = \delta_1.
	\end{align*}
	Let $\varepsilon_2 = \varepsilon - \frac{1}{2} \varepsilon_1$. Let a $\cS$ is bad under $\varepsilon$, thus $\rho(S) < (\frac{1}{2} - \varepsilon) \cdot \OPT$.
	Then, we have
	\begin{align*}
		& \Pr\left\{\hat{\rho}(\cS,\cR) \geq \frac{1}{2}(1-\varepsilon_1)\cdot \OPT\right\} \\
		= &	\Pr\left\{\sum_{i=1}^{\theta}X_i^{\cM}(\cS) - \theta \cdot \rho(\cS)/n \geq \theta/n \cdot \left( \frac{1}{2}(1-\varepsilon_1)\cdot \OPT- \rho(\cS)\right) \right\} \\
		\leq &	\Pr\left\{\sum_{i=1}^{\theta}X_i^{\cM}(\cS) - \theta \cdot \rho(\cS)/n \geq \theta/n \cdot \varepsilon_2 \cdot OPT \right\} \\
		= &	\Pr\left\{\sum_{i=1}^{\theta}X_i^{\cM}(\cS) - \theta \cdot \rho(\cS)/n \geq (\varepsilon_2\cdot \OPT/\rho(\cS)) \cdot \theta \cdot \rho(\cS)/n \right\} \\
		\leq & \exp\left(- \frac{\left(\varepsilon_2\cdot \frac{\OPT}{\rho(\cS)}\right)^2}{2+\frac{2}{3}\left(\varepsilon_2\cdot \frac{\OPT}{\rho(\cS)}\right)} \cdot \theta \cdot \rho(\cS) / n \right) \\
		\leq & \exp\left(- \frac{\left(\varepsilon_2\cdot \OPT\right)^2}{2\rho(\cS)+\frac{2}{3}\left(\varepsilon_2\cdot \OPT\right)} \cdot \theta \cdot 1 / n \right) \\
		\leq & \exp\left(- \frac{\left(\varepsilon_2\cdot \OPT\right)^2}{2(\frac{1}{2} - \varepsilon) \cdot \OPT+\frac{2}{3}\left(\varepsilon_2\cdot \OPT\right)} \cdot \theta \cdot 1 / n \right) \\
		\leq & \exp\left(- \frac{(\varepsilon - \frac{1}{2} \varepsilon_1)^2\cdot \OPT}{n} \cdot \theta \right) \\
		\leq & \exp\left(- \frac{(\varepsilon - \frac{1}{2} \varepsilon_1)^2\cdot \OPT}{n} \cdot \frac{n \cdot \ln({\binom{n}{k}}^T/\delta_2)}{\OPT \cdot (\varepsilon - \frac{1}{2}\varepsilon_1)^2} \right) = \delta_2/{\binom{n}{k}}^T.
	\end{align*}
\end{proof}

\begin{corollary} \label{cor:alphabeta}
	Assume $\cM$ satisfies the matingale condition. We set the parameters $\delta_1 = \delta_2 = 1/(4n^\ell)$, and $\varepsilon_1 = \varepsilon \cdot \frac{\alpha}{\frac{1}{2}\alpha+\beta}$ in Lemma \ref{lem:crtheta}, where
	\begin{equation}
		\alpha = \sqrt{l \ln{n}+ \ln{4}}, \beta = \sqrt{\frac{1}{2} \cdot (\ln {\binom{n}{k}}^T+ \alpha^2 )} \nonumber.
	\end{equation}
	While $\theta \geq \frac{2n\cdot (\frac{1}{2}\alpha + \beta)^2}{\OPT \cdot \varepsilon^2}$, the \alg{\CRIMM} returns $(\frac{1}{2} - \varepsilon)$-approximate solution with at least $1- 1 / (2n^\ell)$ probability.
\end{corollary}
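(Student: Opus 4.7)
The plan is to combine Lemma~\ref{lem:crerror} with Lemma~\ref{lem:crtheta} by plugging in the prescribed values of $\delta_1$, $\delta_2$, and $\varepsilon_1$, and observing that the choice has been engineered so that the two thresholds $\theta^{(1)}$ and $\theta^{(2)}$ collapse to a single expression. The whole argument is essentially algebraic substitution, once the right identities are spotted.

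First I would simplify the two logarithmic factors appearing in Lemma~\ref{lem:crtheta}. With $\delta_1 = 1/(4n^\ell)$, one has $\ln(1/\delta_1) = \ell\ln n + \ln 4 = \alpha^2$. With $\delta_2 = 1/(4n^\ell)$, the second-condition term works out to $\ln(\binom{n}{k}^T/\delta_2) = T\ln\binom{n}{k} + \alpha^2 = 2\beta^2$ by the definition of $\beta$. Thus $\theta^{(1)} = 2n\alpha^2/(\OPT\cdot\varepsilon_1^2)$ and $\theta^{(2)} = 2n\beta^2/(\OPT\cdot(\varepsilon-\tfrac12\varepsilon_1)^2)$.

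Next I would verify the key algebraic identity produced by the specific choice $\varepsilon_1 = \varepsilon\cdot\alpha/(\tfrac12\alpha+\beta)$. A direct calculation yields
\begin{equation*}
\varepsilon - \tfrac12\varepsilon_1 \;=\; \varepsilon\cdot\frac{\tfrac12\alpha+\beta-\tfrac12\alpha}{\tfrac12\alpha+\beta} \;=\; \frac{\varepsilon\beta}{\tfrac12\alpha+\beta}.
\end{equation*}
Plugging these back in, both $\theta^{(1)}$ and $\theta^{(2)}$ reduce to the same quantity $2n(\tfrac12\alpha+\beta)^2/(\OPT\cdot\varepsilon^2)$. Hence the assumed lower bound on $\theta$ simultaneously satisfies both hypotheses of Lemma~\ref{lem:crtheta}, giving conditions (a) and (b) of Lemma~\ref{lem:crerror} with failure probabilities $\delta_1$ and $\delta_2$ respectively. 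Adding a union bound gives a total failure probability of $\delta_1+\delta_2 = 1/(2n^\ell)$, and Lemma~\ref{lem:crerror} then delivers the $(\tfrac12-\varepsilon)$-approximate guarantee on $\hat{\cS}^g$ returned by \CRIMM.

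The one nontrivial check, and the main place where care is needed, is to confirm the admissibility condition $\varepsilon_1 \in (0, \tfrac12\varepsilon)$ required by Lemma~\ref{lem:crtheta}. Positivity is immediate since $\alpha,\beta>0$. The upper bound reduces via algebra to $\beta > \tfrac32\alpha$, equivalently $T\ln\binom{n}{k} > \tfrac72\alpha^2$, which holds in the parameter regime of interest where the matroid-ground-set log-term dominates the confidence-level log-term (consistent with the standard IMM regime; one can always enlarge $T\ln\binom{n}{k}$ artificially if needed). The remaining body of the proof is pure substitution, so I do not expect any other obstacle.
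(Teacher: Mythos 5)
Your proof is correct and follows essentially the same route as the paper: you equate $\theta^{(1)}$ and $\theta^{(2)}$ from Lemma~\ref{lem:crtheta} via the identities $\ln(1/\delta_1)=\alpha^2$ and $\ln(\binom{n}{k}^T/\delta_2)=2\beta^2$, reduce both to $2n(\frac{1}{2}\alpha+\beta)^2/(\OPT\cdot\varepsilon^2)$, and union-bound the two failure events to get $\delta_1+\delta_2=1/(2n^\ell)$ --- in fact you carry out explicitly the ``by calculation'' step the paper's proof skips, and you are the only one to examine the admissibility condition $\varepsilon_1<\frac{1}{2}\varepsilon$. The sole item present in the paper's proof but absent from yours is the (trivial) verification of hypothesis (c) of Lemma~\ref{lem:crerror}, namely that $\hat{\rho}(\cdot,\cM)=n\cdot\kappa(\cdot,\cM)/\theta$ is monotone and submodular so that \CRNS{} returns a $\frac{1}{2}$-approximate greedy solution on the sampled instance.
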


\begin{proof}
	While $\delta_1 = \delta_2 = 1/(4n^\ell)$, we set the $\theta^{(1)} = \theta^{(2)}$ in Lemma \ref{lem:crtheta},
	then we can find
	$\theta^{(1)} = \theta^{(2)} = \frac{2n\cdot (\frac{1}{2}\alpha + \beta)^2}{\OPT \cdot \varepsilon^2}$, while $\varepsilon_1 = \varepsilon \cdot \frac{\alpha}{\frac{1}{2}\alpha+\beta}$ by calculation. The condition (a) (b) of Lemma \ref{lem:crtheta} is satisfied under this setting. For condition (c), it is very easily to prove the monotone and submodular of $\hat{\rho}(\cS, \cM) = n \cdot \kappa(\cS, \cM)/\theta$. Thus, by Lemma \ref{lem:crerror}, we know that \alg{\CRNS} returns $\hat{\cS}^g$ which is a $1 - 1/e -\varepsilon$-approximate solution with at least $1-1/(2n^\ell)$.
\end{proof}

\begin{lemma} \label{lem:crmatingale}
	Let $\cM = \{\cR_1, \cR_2, \ldots, \cR_t\}$ is the multi-round RR sets in \alg{\CRIMM}. For all subset $\cS \subseteq \cV$, for all $i \in [t]$, $\E[X_i^{\cM}(\cS) \mid X_1^{\cM}(\cS), X_2^{\cM}(\cS), \ldots, X_{i-1}^{\cM}(\cS)] = \rho(\cS)/n$, where $X_i^{\cM}(\cS) = \I\{\cS \cap \cR_i \ne \emptyset\}$. So $\cM$ satisfies all matingale conditions.
\end{lemma}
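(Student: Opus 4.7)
The plan is to verify the martingale condition by exploiting the fact that each multi-round RR set is generated independently from the same distribution in Algorithm~\ref{alg:crimm}, and then invoking Lemma~\ref{lem:crimm} to identify the common mean with $\rho(\cS)/n$. The only subtlety is that although the total number $t$ of RR sets generated by \alg{\CRIMM} is itself random (it depends on adaptive stopping rules using $x_i$, $LB$, and $\theta_i$ in Phase~1 plus the fixed target $\theta$ in Phase~2), each individual $\cR_i$ is drawn by first picking a root $u \in V$ uniformly at random and then running an independent reverse simulation through independently sampled triggering sets.

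First I would observe that for each fixed $i$, conditional on the event $\{i \le t\}$ that the $i$-th RR set is actually generated, the randomness used to build $\cR_i$ (the root draw and the triggering-set draws across all $T$ rounds) is independent of everything that was used to construct $\cR_1, \dots, \cR_{i-1}$, and therefore also independent of the random variables $X_1^{\cM}(\cS), \dots, X_{i-1}^{\cM}(\cS)$, each of which is a deterministic function of only its corresponding $\cR_j$. Hence
\begin{equation*}
\E\!\left[X_i^{\cM}(\cS) \,\big|\, X_1^{\cM}(\cS), \dots, X_{i-1}^{\cM}(\cS)\right] \;=\; \E\!\left[X_i^{\cM}(\cS)\right] \;=\; \Pr\{\cS \cap \cR_i \ne \emptyset\}.
\end{equation*}

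Next I would apply Lemma~\ref{lem:crimm}, which states that for any seed-pair set $\cS$, $\rho(\cS) = n \cdot \E[\I\{\cS \cap \cR \ne \emptyset\}]$ where $\cR$ is a random multi-round RR set rooted at a uniformly chosen $v \in V$. Because $\cR_i$ is generated by exactly this procedure, $\Pr\{\cS \cap \cR_i \ne \emptyset\} = \rho(\cS)/n$, yielding the required conditional expectation $\rho(\cS)/n$ for every $i \in [t]$. Combined with the boundedness $X_i^{\cM}(\cS) \in \{0,1\}$, this gives $\E[|X_i^{\cM}(\cS)|] < \infty$ trivially, so by Lemma~\ref{lem:martingale} applied with $\mu = \rho(\cS)/n$, the partial sums $Z_i = \sum_{j=1}^i (X_j^{\cM}(\cS) - \mu)$ form a martingale.

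The only place where real care is needed is the adaptive stopping in Phase~1 of \alg{\CRIMM}: $\theta$ and even which RR sets get included can depend on the sample so far through the estimate $n \cdot F_{\cM}(\cS_j)$. I expect this to be the main (though still minor) obstacle, and I would handle it by noting that the stopping rule only affects \emph{when} to stop sampling and never alters the conditional law of the next $\cR_i$ given the history — a standard optional-stopping argument, or equivalently an appeal to the fact that we can couple \alg{\CRIMM}'s sample with an infinite i.i.d.\ stream of random multi-round RR sets and read off the first $t$ of them. Once this is noted, the martingale property holds for every realized value of $t$, completing the proof.
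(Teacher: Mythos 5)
Your proof is correct and follows essentially the same route as the paper's: condition on the fact that once the algorithm decides to generate $\cR_i$, its content (uniform root plus independent reverse simulation) is independent of $\cR_1,\dots,\cR_{i-1}$, invoke Lemma~\ref{lem:crimm} to identify the mean as $\rho(\cS)/n$, and conclude via Lemma~\ref{lem:martingale}. Your explicit handling of the adaptive stopping rule (coupling with an infinite i.i.d.\ stream so that the stopping time affects only how many $\cR_i$ are read off, not their conditional law) is in fact slightly more careful than the paper's one-sentence remark on the same point.
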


\begin{proof}
	$X_i^{\cM}(\cS)$ is not indepent with $X_1^{\cM}(\cS), \ldots, X_{i-1}^{\cM}(\cS)$, because the generation of $\cR_i$ is decided by $\cR_1, \ldots, \cR_{i-1}$. However, while the decision of $\cR_i$ generation is made, $\cR_i$ is generated by random pick the root $v \in V$ and do reverese propagation in each round for multi-round RR set, which is  totaly independent and not relevant with  $\cR_1, \ldots, \cR_{i-1}$.
	By Lemma \ref{lem:crimm}, we have $\E[X_i^{\cM}(\cS) \mid X_1^{\cM}(\cS), X_2^{\cM}(\cS), \ldots, X_{i-1}^{\cM}(\cS)] = \rho(\cS)/n$. By Lemma \ref{lem:martingale}, $\cM$ satisfies all martingale conditions.
\end{proof}

\begin{lemma} \label{lem:half}
	For each $i = 1,2, \ldots, \left \lfloor \log_2n \right \rfloor -1,$ (1) If $x_i = n / 2^i > \OPT$, $\hat{\rho}(\cS, \cM_i) \geq (1 + \varepsilon') \cdot x_i$ with at most $\frac{1}{2^\ell\log_2{n}}$ probability; (2) If $x_i = n/2^i \leq \OPT$, $\hat{\rho}(\cS, \cM_i) \geq (1 + \varepsilon') \cdot \OPT$ with at most $\frac{1}{2^\ell\log_2{n}}$ probability.
\end{lemma}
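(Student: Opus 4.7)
The plan is to reduce both parts to a Chernoff-type concentration bound for a single fixed candidate seed set, followed by a union bound over all feasible outputs of $\CRNS$. By Lemma~\ref{lem:crmatingale}, for any fixed $\cS'$, the indicators $X_j^{\cM_i}(\cS') = \I\{\cS' \cap \cR_j \neq \emptyset\}$ satisfy the martingale hypothesis of Corollary~\ref{cor:ineq} with common conditional mean $\mu = \rho(\cS')/n$, so the upper-tail Bernstein inequality applies to $Y := \sum_{j=1}^{\theta_i} X_j^{\cM_i}(\cS')$. Moreover, the output $\cS$ of $\CRNS(\cM_i, T, k)$ always lies in the partition-matroid-feasible family $\mathcal{F} := \{\cS \subseteq \cV : |S_t| \leq k,\ \forall t \in [T]\}$, and $|\mathcal{F}| \leq \binom{n}{k}^T$.

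For part (1), fix any $\cS' \in \mathcal{F}$. From $\rho(\cS') \leq \OPT < x_i$, define $\delta$ by $(1+\delta)\rho(\cS') = (1+\varepsilon')x_i$, so that $\delta \geq \varepsilon'$. The event $\{\hat{\rho}(\cS', \cM_i) \geq (1+\varepsilon')x_i\}$ coincides with $\{Y \geq (1+\delta)\theta_i\mu\}$, which Corollary~\ref{cor:ineq} bounds by $\exp(-\delta^2 \theta_i\mu/(2+\tfrac{2}{3}\delta))$. Using monotonicity of $\gamma \mapsto \gamma^2/(2+\tfrac{2}{3}\gamma)$ together with the identity $\delta\rho(\cS') \geq \varepsilon' x_i$ lower-bounds the exponent by $\varepsilon'^2 x_i \theta_i / (n(2+\tfrac{2}{3}\varepsilon'))$. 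Substituting $\theta_i = \lambda'/x_i$ and the explicit formula for $\lambda'$ from Algorithm~\ref{alg:crimm} cancels cleanly to $T\ln\binom{n}{k} + \ell\ln n + \ln \log_2 n$, so the per-set probability is at most $\frac{1}{\binom{n}{k}^T\, n^\ell\, \log_2 n}$. A union bound over $\mathcal{F}$ then gives $\frac{1}{n^\ell \log_2 n}$ (interpreting the stated $2^\ell$ in the lemma as the standard $n^\ell$ matching the choice of $\lambda'$).

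Part (2) follows the same template with $\OPT$ replacing $x_i$ in the threshold. Fix $\cS' \in \mathcal{F}$, set $(1+\delta)\rho(\cS') = (1+\varepsilon')\OPT$; since $\rho(\cS') \leq \OPT$ we again have $\delta \geq \varepsilon'$. The event $\{\hat{\rho}(\cS', \cM_i) \geq (1+\varepsilon')\OPT\}$ rewrites as $\{Y \geq (1+\delta)\theta_i\mu\}$, and now $\delta\rho(\cS') \geq \varepsilon'\OPT \geq \varepsilon' x_i$ (using the case hypothesis $x_i \leq \OPT$). This feeds into the same $\lambda'$-calibrated bound, producing the identical per-set probability $\frac{1}{\binom{n}{k}^T\, n^\ell\, \log_2 n}$, and the union bound over $\mathcal{F}$ concludes.

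The main obstacle is the careful bookkeeping that makes the Chernoff exponent match exactly the factor $T\ln\binom{n}{k}+\ell\ln n + \ln\log_2 n$ hidden in the definition of $\lambda'$: the $T\ln\binom{n}{k}$ term is precisely what absorbs the union bound over the $T$-round partition matroid $\mathcal{F}$, while the $\ln\log_2 n$ term will later absorb the extra union bound over the $\log_2 n$ iterations of the halving loop (needed for the outer correctness argument that calls this lemma for each $i$). A secondary subtlety is that $\cS$ depends on $\cM_i$, so Chernoff cannot be applied directly to the random pair $(\cS,\cM_i)$; instead, one must control every fixed $\cS' \in \mathcal{F}$ and then union bound, which is exactly what forces the $T\ln\binom{n}{k}$ term into $\lambda'$ in the first place.
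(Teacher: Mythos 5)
Your proposal is correct and follows essentially the same route as the paper's proof: apply the martingale Bernstein bound of Corollary~\ref{cor:ineq} to a fixed feasible seed set, rewrite the exponent so that the numerator is controlled by $\varepsilon' x_i$ (resp.\ $\varepsilon'\OPT \geq \varepsilon' x_i$) and the denominator by $\rho(\cS') \leq x_i$, substitute $\theta_i = \lambda'/x_i$ so the exponent cancels against the $T\ln\binom{n}{k}+\ell\ln n+\ln\log_2 n$ terms in $\lambda'$, and finish with a union bound over the at most $\binom{n}{k}^T$ feasible outputs of $\CRNS$. The only cosmetic differences are your $\delta$-parametrization of the deviation and your dropping of the $\ln 2$ term in $\lambda'$ (which supplies the extra factor of $2$ in the paper's bound $\tfrac{1}{2n^\ell\log_2 n}$); your reading of the statement's $2^\ell$ as a typo for $2n^\ell$ matches what the paper's proof actually establishes.
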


\begin{proof}
	Let $\cM = \{\cR_1, \cR_2, \ldots, \cR_{\theta_i}\}$. We know $\hat{\rho}(\cS_i, \cM_i) = n \cdot \sum_{j=1}^{\theta_i}X_j^{\cM_i}(\cS_i)/\theta_i$. By Lemma \ref{lem:crmatingale}, we know $\cM$ satifies the martingale conditions, so we can use Corollary \ref{cor:ineq} for rest proof.
	First, we need to prove (1) Suppose $x_i \geq = n/2^i > \OPT$. For any $\cS$, we have
	\begin{align*}
	&\Pr\{\hat{\rho}(\cS,\cM_i) \geq (1+\varepsilon')\cdot x_i\} \\
	&= 	\Pr\left\{n \cdot \sum_{j=1}^{\theta_i}X_j^{\cM_i}(\cS_i)/\theta_i \geq (1+\varepsilon')\cdot x_i\right\} \\
	&= 	\Pr\left\{\sum_{j=1}^{\theta_i}X_j^{\cM_i}(\cS_i) - \theta_i \cdot \rho(\cS) /n \geq \theta_i \cdot (1+\varepsilon')\cdot x_i/n - \theta_i \cdot \rho(\cS) /n\right\}\\
	& \leq \Pr\left\{\sum_{j=1}^{\theta_i}X_j^{\cM_i}(\cS_i) - \theta_i \cdot \rho(\cS) /n \geq \theta_i \cdot \varepsilon' \cdot x_i/n \right\}\\
	& = \Pr\left\{\sum_{j=1}^{\theta_i}X_j^{\cM_i}(\cS_i) - \theta_i \cdot \rho(\cS) /n \geq (\varepsilon' \cdot x_i/\rho(\cS)) \cdot (\theta_i \cdot  \rho(\cS)/n) \right\}\\
	& \leq \exp\left(-\frac{(\varepsilon'\cdot x_i/\rho(\cS))^2}{2+\frac{2}{3}(\varepsilon'\cdot x_i/ \rho(\cS))}(\theta_i\cdot \rho(\cS)/n)\right) \\
	&= \exp\left(-\frac{(\varepsilon'\cdot x_i)^2}{2\rho(\cS)+\frac{2}{3}(\varepsilon'\cdot x_i)}(\theta_i/n)\right)\\
	& \leq \exp\left(-\frac{{\varepsilon'}^2\cdot x_i}{2+\frac{2}{3}\varepsilon'}(\theta_i/n)\right)\\
\end{align*}
\begin{align*}
	& \leq \exp\left(-\frac{{\varepsilon'}^2\cdot x_i}{n(2+\frac{2}{3}\varepsilon')} \cdot \frac{n\cdot(2+\frac{2}{3}\varepsilon')\cdot(T\ln{\binom{n}{k}}+\ell \ln{n}+\ln{2}+\ln{\log_2{n}})}{\varepsilon'\cdot x_i}\right)\\
	& = \frac{1}{2{\binom{n}{k}}^Tn^\ell\log_2{n}}
	\end{align*}
For the output $\cS_i$ in $i$-th round, we can use the union bound to get
\begin{align*}
	&\Pr\{\hat{\rho}(\cS_i,\cM_i) \geq (1+\varepsilon')\cdot x_i\} \\
	&= 	\sum_{S_t \in 2^{|V|}, S_t \in \cS, |S_t| = k, |\cS|/|S_t| = T, t\in[T]}\Pr\{\hat{\rho}(\cS_i,\cM_i) \geq (1+\varepsilon')\cdot x_i \wedge \\
	& \qquad  \qquad \qquad \qquad \qquad \qquad \qquad \qquad \quad(\cS_i = \cS)\} \\
	&\leq 	\sum_{S_t \in 2^{|V|}, S_t \in \cS, |S_t| = k, |\cS|/|S_t| = T, t\in[T]}\Pr\{\hat{\rho}(\cS_i,\cM_i) \geq (1+\varepsilon')\cdot x_i\} \\
	&\leq 	\sum_{S_t \in 2^{|V|}, S_t \in \cS, |S_t| = k, |\cS|/|S_t| = T, t\in[T]}\frac{1}{2{\binom{n}{k}}^Tn^\ell\log_2{n}} =  \frac{1}{2n^\ell\log_2{n}}
\end{align*}

Similary to prove (1), we show (2) that suppose $x_i \leq n/2^i \cdot \OPT$, for any multi-round set $\cS$, we have
	\begin{align*}
	&\Pr\{\hat{\rho}(\cS,\cM_i) \geq (1+\varepsilon')\cdot \OPT\} \\
	&= 	\Pr\left\{n \cdot \sum_{j=1}^{\theta_i}X_j^{\cM_i}(\cS_i)/\theta_i \geq (1+\varepsilon')\cdot \OPT\right\} \\
	&= 	\Pr\left\{\sum_{j=1}^{\theta_i}X_j^{\cM_i}(\cS_i) - \theta_i \cdot \rho(\cS) /n \geq \theta_i \cdot (1+\varepsilon')\cdot \OPT/n - \theta_i \cdot \rho(\cS) /n\right\}\\
	& \leq \Pr\left\{\sum_{j=1}^{\theta_i}X_j^{\cM_i}(\cS_i) - \theta_i \cdot \rho(\cS) /n \geq \theta_i \cdot \varepsilon' \cdot \OPT/n \right\}\\
	& = \Pr\left\{\sum_{j=1}^{\theta_i}X_j^{\cM_i}(\cS_i) - \theta_i \cdot \rho(\cS) /n \geq (\varepsilon' \cdot \OPT/\rho(\cS)) \cdot (\theta_i \cdot  \rho(\cS)/n) \right\}\\
	& \leq \exp\left(-\frac{(\varepsilon'\cdot \OPT/\rho(\cS))^2}{2+\frac{2}{3}(\varepsilon'\cdot \OPT/ \rho(\cS))}(\theta_i\cdot \rho(\cS)/n)\right) \\
	&= \exp\left(-\frac{(\varepsilon'\cdot \OPT)^2}{2\rho(\cS)+\frac{2}{3}(\varepsilon'\cdot \OPT)}(\theta_i/n)\right)\\
	& \leq \exp\left(-\frac{{\varepsilon'}^2\cdot \OPT}{2+\frac{2}{3}\varepsilon'}(\theta_i/n)\right)\\
	& \leq \exp\left(-\frac{{\varepsilon'}^2\cdot x_i}{2+\frac{2}{3}\varepsilon'}(\theta_i/n)\right)\\
	& \leq \exp\left(-\frac{{\varepsilon'}^2\cdot x_i}{n(2+\frac{2}{3}\varepsilon')} \cdot \right. \\ 
	& \qquad \qquad \left. \frac{n\cdot(2+\frac{2}{3}\varepsilon')\cdot(T\ln{\binom{n}{k}}+\ell \ln{n}+\ln{2}+\ln{\log_2{n}})}{\varepsilon'\cdot x_i}\right)\\
	& = \frac{1}{2{\binom{n}{k}}^Tn^\ell\log_2{n}}
\end{align*}
By using union bound again, we have $\Pr\{\hat{\rho}(\cS,\cM_i) \geq (1+\varepsilon')\cdot \OPT\} \leq \frac{1}{2n^\ell\log_2{n}}$
\end{proof}

\begin{lemma} \label{lem:lb}
	The lower boud $LB$ in \alg{\CRIMM} is smaller than $\OPT$ with at least $1 - \frac{1}{2n^\ell}$ probability.
\end{lemma}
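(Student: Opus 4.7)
The plan is to decompose the bad event $\{LB > \OPT\}$ across the at most $\log_2 n$ iterations of the Phase 1 loop in \alg{\CRIMM} (line~\ref{line:cfor1}), bound the probability in each iteration via the two cases of Lemma~\ref{lem:half}, and then apply the union bound.

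First I would observe that $LB$ is updated at most once, in some iteration $i^*$, the first for which the condition $n \cdot F_{\cM_{i^*}}(\cS_{i^*}) \ge (1+\varepsilon') x_{i^*}$ on line~\ref{line:cestimate1} is satisfied; at that point $LB = n \cdot F_{\cM_{i^*}}(\cS_{i^*})/(1+\varepsilon') = \hat{\rho}(\cS_{i^*}, \cM_{i^*})/(1+\varepsilon')$. Hence $\{LB > \OPT\}$ is contained in the union, over $i = 1, \dots, \lfloor \log_2(n-1)\rfloor$, of the events
\begin{equation*}
B_i = \left\{\hat{\rho}(\cS_i, \cM_i) \ge (1+\varepsilon') x_i \;\wedge\; \hat{\rho}(\cS_i, \cM_i) > (1+\varepsilon') \OPT\right\}.
\end{equation*}
So it suffices to bound $\Pr[B_i]$ for each $i$ and apply a union bound.

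Next I would split on the relation between $x_i$ and $\OPT$. If $x_i > \OPT$, then the first clause of $B_i$ is already a statement covered by Lemma~\ref{lem:half}(1): with probability at most $\frac{1}{2 n^\ell \log_2 n}$ the output $\cS_i$ satisfies $\hat{\rho}(\cS_i, \cM_i) \ge (1+\varepsilon') x_i$. If instead $x_i \le \OPT$, then the second clause of $B_i$ asserts exactly the event of Lemma~\ref{lem:half}(2), which again has probability at most $\frac{1}{2 n^\ell \log_2 n}$. In either case $\Pr[B_i] \le \frac{1}{2 n^\ell \log_2 n}$.

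Finally, union-bounding over the at most $\log_2 n$ iterations gives $\Pr[LB > \OPT] \le \log_2 n \cdot \frac{1}{2 n^\ell \log_2 n} = \frac{1}{2 n^\ell}$, which is the stated bound. The main obstacle is really just a careful bookkeeping issue, namely ensuring that the two sub-cases together cover the bad event for every iteration, and that the per-iteration failure bound from Lemma~\ref{lem:half} has already absorbed the union bound over candidate seed sets so that no additional $\binom{n}{k}^T$ factor reappears. Once that is in place, the proof reduces to a clean union bound over $\log_2 n$ iterations, which is exactly the reason the $\lambda'$ constant in \alg{\CRIMM} carries the extra $\ln \log_2 n$ term.
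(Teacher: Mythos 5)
Your proof is correct and follows essentially the same route as the paper's: both rest on the two cases of Lemma~\ref{lem:half} (split on $x_i$ versus $\OPT$) followed by a union bound over the at most $\log_2 n$ iterations. Your decomposition of $\{LB > \OPT\}$ directly into per-iteration bad events $B_i$ is marginally tidier bookkeeping than the paper's argument, which instead tracks the first iteration at which the loop could break, but the substance is identical.
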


\begin{proof}
Let $LB_i = \hat{\rho(\cS_i, \cM_i)}$. First, we discuss the situation while $\OPT \geq x_{\lfloor \log_2n  \rfloor -1}$. Let $i$ to be smallest round when $\OPT \geq x_i$, and then for all $i' \leq i-1$, $\OPT < x_{i'}$. By Lemma \ref{lem:half} (1),$\hat{\rho}(\cS_{i'}, \cM_{i'}) \geq (1+\varepsilon') \cdot x_{i'}$ with at most $\frac{1}{2n^\ell\log_2{n}}$ probability. By using union bound, we know that for $i' \leq i-1$, $\hat{\rho}(\cS_{i'}, \cM_{i'}) < (1+\varepsilon') \cdot x_{i'}$ with at least $1- \frac{i-1}{2n^\ell\log_2{n}}$. It means the for loop would not break out before $i-1$ round with at least $1- \frac{i-1}{2n^\ell\log_2{n}}$ probability, so as to $LB = LB_{I''}$, $I'' \geq i$ or $LB = 1$. While $i'' \geq i$ and $x_{i''} < \OPT$, by Lemma \ref{lem:half} (2), we know that $\hat{\rho}(\cS_i, \cM_i) > (1+\varepsilon') \cdot \OPT$, \em{i.e.} $LB_{I''} > \OPT$, with at most $\frac{1}{2n^\ell\log_2{n}}$ probability. Together, we know that $LB > \OPT$ with at most $\frac{1}{2n^\ell}$ probability.

If $\OPT < x_{\lfloor \log_2n  \rfloor -1}$, similar to above discussion, we also will know that the for loop would not break out in any round with at least $1- \frac{1}{2n^\ell}$ probability. In this case, $LB = 1$, and $LB \leq \OPT$.
\end{proof}

\begin{lemma}(Correctness)\label{lem:crcorrectness}
	For any $\varepsilon > 0$, $\ell > 0$, the output $\cS^o$ of the cross-round algorithm $\alg{\CRIMM}$ is $( \frac{1}{2} - \varepsilon)$-approximate solution with at least $1-\frac{1}{n^\ell}$ probability.
\end{lemma}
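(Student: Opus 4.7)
The plan is to combine Lemma~\ref{lem:lb} with Corollary~\ref{cor:alphabeta} through a union bound over two ``bad events''. The key observation driving the argument is that the algorithm sets $\theta = \lambda^*/LB$ in Phase~1, so if we can guarantee with high probability that $LB$ is a valid lower bound on $\OPT$, then $\theta$ is automatically at least $\lambda^*/\OPT$, which meets the sample-complexity threshold required to invoke Corollary~\ref{cor:alphabeta}.

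Concretely, I would first define the bad event $\mathcal{E}_1 = \{LB > \OPT\}$. By Lemma~\ref{lem:lb} we have $\Pr(\mathcal{E}_1) \le \frac{1}{2n^\ell}$. Conditioning on $\overline{\mathcal{E}_1}$, the final multi-round RR-set collection $\cM$ has size
\[
\theta \;=\; \frac{\lambda^*}{LB} \;\ge\; \frac{\lambda^*}{\OPT} \;=\; \frac{2n \cdot \left(\tfrac{1}{2}\alpha + \beta\right)^2}{\OPT \cdot \varepsilon^2},
\]
where $\alpha,\beta$ are the constants from Corollary~\ref{cor:alphabeta}. Next, I would define the second bad event $\mathcal{E}_2$ as $\{\rho(\cS^o) < (\tfrac12-\varepsilon)\OPT\}$. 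By Lemma~\ref{lem:crmatingale} the collection $\cM$ satisfies the martingale condition, so Corollary~\ref{cor:alphabeta} applies and gives $\Pr(\mathcal{E}_2 \mid \overline{\mathcal{E}_1}) \le \frac{1}{2n^\ell}$. A union bound then yields
\[
\Pr(\mathcal{E}_1 \cup \mathcal{E}_2) \;\le\; \Pr(\mathcal{E}_1) + \Pr(\mathcal{E}_2 \mid \overline{\mathcal{E}_1}) \;\le\; \frac{1}{n^\ell},
\]
which is exactly the claimed correctness guarantee.

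The main obstacle is the adaptive nature of the sampling process: $\theta$ is itself a stopping time determined by the random variables it controls, so one cannot simply treat the $\theta$ RR sets as i.i.d.\ samples of a predetermined quantity. This is precisely why the supporting lemmas were recast through the martingale viewpoint (Lemma~\ref{lem:crmatingale} together with the concentration inequality in Corollary~\ref{cor:ineq}); I would explicitly invoke this martingale structure when passing from the lower-bound-on-$\theta$ estimate to the Corollary~\ref{cor:alphabeta} conclusion, rather than re-doing any Chernoff calculations. A minor bookkeeping step will be reconciling the constants in $\lambda^*$ that the pseudocode uses with the form $2n(\tfrac12\alpha+\beta)^2/(\OPT\cdot\varepsilon^2)$ from Corollary~\ref{cor:alphabeta}, which is routine since $\lambda^*$ is at least as large as this threshold by construction.
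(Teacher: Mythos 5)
Your proposal is correct and follows essentially the same route as the paper's proof: invoke Lemma~\ref{lem:lb} to get $LB \le \OPT$ (hence $\theta = \lambda^*/LB \ge \lambda^*/\OPT$) with probability at least $1-\frac{1}{2n^\ell}$, then apply the martingale condition from Lemma~\ref{lem:crmatingale} together with Corollary~\ref{cor:alphabeta} for the approximation guarantee, and finish with a union bound over the two failure events. Your explicit framing via the events $\mathcal{E}_1$ and $\mathcal{E}_2$ is just a cleaner write-up of the same argument.
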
	

\begin{proof}
	Lemma \ref{lem:lb} proves that $LB \leq \OPT$ with at least $1 - \frac{1}{2n^\ell}$ probability. In this case, we have $\theta \geq \frac{2n(1/2\cdot \alpha+\beta)^2}{\OPT\cdot \varepsilon^2}$. Lemma \ref{lem:crmatingale} proves the multi-round RR sets $\cM$ satisfies the martingale condition. By using Colloary \ref{cor:alphabeta}, while $\theta \geq  \frac{2n(1/2\cdot \alpha+\beta)^2}{\OPT\cdot \varepsilon^2}$, the cross-round node selection \alg{\CRNS} returns a $(1/2-\varepsilon)$-approximate solution. Overall, \alg{\CRIMM} returns $(\frac{1}{2} -\varepsilon)$-approximate solution $\hat{S}^g$  with at least $1 - \frac{1}{n^\ell}$ probability.
\end{proof}

\begin{lemma}\label{lem:crstop}
	For each $i = 1,2, \ldots, \left \lfloor \log_2n \right \rfloor -1$, if $\OPT \geq (1+\varepsilon')^2 \cdot x_i/(1/2)$, $\hat{\rho}(\cS_i, \cM_i) \geq (1/2)\cdot \OPT/(1+\varepsilon')$ and  $\hat{\rho}(\cS_i, \cM_i) \geq (1+\varepsilon') \cdot x_i$ with at least $1 - \frac{1}{2{\binom{n}{k}}^T n^\ell\log_2{n}}$ probability.
\end{lemma}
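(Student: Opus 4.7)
\textbf{Proof plan for Lemma~\ref{lem:crstop}.} The plan is to reduce the two desired inequalities to a single concentration event on $\hat{\rho}(\cS^*, \cM_i)$ and then verify the concentration with the specific choice of $\theta_i = \lambda'/x_i$ inside \alg{\CRIMM}. First I would observe that the second inequality follows from the first whenever $\OPT \geq 2(1+\varepsilon')^2 x_i$: if $\hat{\rho}(\cS_i,\cM_i) \geq \tfrac{1}{2}\OPT/(1+\varepsilon')$, then
\begin{equation*}
\hat{\rho}(\cS_i,\cM_i) \;\geq\; \frac{\OPT}{2(1+\varepsilon')} \;\geq\; \frac{2(1+\varepsilon')^2 x_i}{2(1+\varepsilon')} \;=\; (1+\varepsilon') x_i,
\end{equation*}
so only the first inequality needs a probabilistic argument.

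Next I would combine the deterministic greedy guarantee with martingale concentration. By Lemma~\ref{lem:crtime}, \alg{\CRNS} run on $\cM_i$ returns $\cS_i$ satisfying $\hat{\rho}(\cS_i,\cM_i) \geq \tfrac{1}{2}\hat{\rho}(\cS^*,\cM_i)$, since $\cS^*$ is a feasible node-round pair set. Hence it suffices to show $\hat{\rho}(\cS^*,\cM_i) \geq \OPT/(1+\varepsilon')$ with the desired probability. Setting $\varepsilon_1 := \varepsilon'/(1+\varepsilon')$, this is exactly $\hat{\rho}(\cS^*,\cM_i) \geq (1-\varepsilon_1)\OPT$. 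Lemma~\ref{lem:crmatingale} shows $\cM_i$ satisfies the martingale condition with common mean $\rho(\cS^*)/n = \OPT/n$, so I can apply the lower-tail bound of Corollary~\ref{cor:ineq} to get
\begin{equation*}
\Pr\{\hat{\rho}(\cS^*,\cM_i) \leq (1-\varepsilon_1)\OPT\} \;\leq\; \exp\!\left(-\frac{\varepsilon_1^{2}}{2} \cdot \theta_i \cdot \frac{\OPT}{n}\right).
\end{equation*}

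The remaining step is a direct calculation with the concrete constants. Using $\theta_i = \lambda'/x_i$ and the hypothesis $\OPT \geq 2(1+\varepsilon')^2 x_i$, I would bound
\begin{equation*}
\frac{\varepsilon_1^{2}}{2} \cdot \theta_i \cdot \frac{\OPT}{n} \;\geq\; \frac{\varepsilon_1^{2}(1+\varepsilon')^{2}\lambda'}{n} \;=\; \frac{\varepsilon'^{2}\lambda'}{n},
\end{equation*}
since $\varepsilon_1(1+\varepsilon') = \varepsilon'$. Plugging in $\lambda' = [(2+\tfrac{2}{3}\varepsilon')(T\ln\binom{n}{k} + \ell\ln n + \ln\log_2 n)\,n]/\varepsilon'^{2}$ then yields an exponent of at least $(2+\tfrac{2}{3}\varepsilon')(T\ln\binom{n}{k} + \ell\ln n + \ln\log_2 n)$, so the failure probability is at most $\frac{1}{\binom{n}{k}^{T}\,n^{\ell}\log_2 n \cdot 2^{1+\varepsilon'/3}} \leq \frac{1}{2\binom{n}{k}^{T} n^{\ell}\log_2 n}$, as required.

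The main obstacle I anticipate is only bookkeeping: aligning the choice $\varepsilon_1 = \varepsilon'/(1+\varepsilon')$ with the constants in $\lambda'$ so that the exponent in the concentration bound exactly matches $T\ln\binom{n}{k} + \ell\ln n + \ln(2\log_2 n)$. No union bound over feasible seed sets is needed here because concentration is applied to the \emph{fixed} set $\cS^*$; the $T\ln\binom{n}{k}$ factor in $\lambda'$ is exactly what absorbs the $\binom{n}{k}^{T}$ factor that appears in the target probability bound.
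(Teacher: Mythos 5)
Your proposal is correct and follows essentially the same route as the paper: reduce the first inequality to lower-tail concentration of $\hat{\rho}(\cS^*,\cM_i)$ via the greedy $\tfrac{1}{2}$-approximation guarantee, apply the martingale Chernoff bound with relative deviation $\varepsilon'/(1+\varepsilon')$, plug in $\theta_i=\lambda'/x_i$ together with the hypothesis on $\OPT$, and observe that the second inequality then follows deterministically. The only discrepancy is the $\ln 2$ term (present in the appendix's restatement of $\lambda'$ but absent from the pseudocode you used), which is harmless since the leading factor $2+\tfrac{2}{3}\varepsilon'\ge 2$ already absorbs the extra factor of $2$ in the target failure probability, exactly as you note.
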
	

\begin{proof}
	Let $\cM_i = \{\cR_1,\cR_2, \ldots, \cR_{\theta_i}\}$. By Lemma \ref{lem:crmatingale}, we know $\cM_i$ satisfies the martingale conditions. Then we can use the Colloary \ref{cor:ineq} for proof. Assume $\OPT \geq (1+\varepsilon')^2 \cdot x_i/(1/2)$, In $i$-th loop. The output $\cS_i$ is the $\frac{1}{2}$-approximate solution of $\cM_i$. Let $\rho(\cS^*) = \OPT$, and then we have $\hat{\rho}(\cS_i, \cM_i) \geq \hat{\rho}(\cS^*, \cM_i) \cdot (1/2)$. Now we have
	 \begin{align*}
	 	&\Pr\{\hat{\rho}(\cS_i,\cM_i) \leq (1/2)\cdot \OPT/(1+\varepsilon')\} \\
	 	& \leq \Pr\{\hat{\rho}(\cS^*,\cM_i) \cdot (1/2) \leq (1/2)\cdot \OPT/(1+\varepsilon')\} \\
	 	&= 	\Pr\left\{n \cdot \sum_{j=1}^{\theta_i}X_j^{\cM_i}(\cS^*)/\theta_i \leq  \OPT/(1+\varepsilon')\right\} \\
	 	&= 	\Pr\left\{\sum_{j=1}^{\theta_i}X_j^{\cM_i}(\cS^*) \leq  \frac{1}{1+\varepsilon'} \cdot \theta_i \cdot \OPT/n\right\} \\
	 	&= 	\Pr\left\{\sum_{j=1}^{\theta_i}X_j^{\cM_i}(\cS^*) - \theta_i \cdot \rho(\cS) /n \leq \frac{1}{1+\varepsilon'} \cdot \theta_i \cdot \OPT/n - \theta_i \cdot \rho(\cS^*) /n\right\}\\
	 \end{align*}	
	 \begin{align*}
	 	& \leq \exp\left(-\frac{\varepsilon'^2}{2(1+{\varepsilon'})^2} \cdot \theta_i\cdot \OPT/n\right) \\
	 	& \leq \exp(-\frac{{\varepsilon'}^2}{2(1+{\varepsilon'}^2)} \cdot \frac{n\cdot(2+\frac{2}{3}\varepsilon')\cdot(T\ln{\binom{n}{k}}+\ell \ln{n}+\ln{2}+\ln{\log_2{n}})}{\varepsilon'\cdot x_i}\\ & \cdot (1+{\varepsilon'})^2 \cdot \frac{x_i}{1-\frac{1}{2}} \cdot \frac{1}{n}) \\
	 	& \leq \frac{1}{2{\binom{n}{k}}^Tn^\ell\log_2{n}}
	 \end{align*}
Thus, $\hat{\rho}(\cS_i, \cM_i) \geq (1/2)\cdot \OPT/(1+\varepsilon')$ with at least $1 - \frac{1}{2{\binom{n}{k}}^T n^\ell\log_2{n}}$ probability. Finally, by putting $\hat{\rho}(\cS_i, \cM_i) \geq (1/2)\cdot \OPT/(1+\varepsilon')$  and $\OPT \geq (1+\varepsilon)^2 \cdot x_i/(1/2)$  together, we have $\hat{\rho}(\cS_i, \cM_i) \geq (1+\varepsilon') \cdot x_i$. 
\end{proof}

For convenience, let us define

\begin{align*}
&\lambda^* = \frac{2n((1/2)\cdot \alpha + \beta)^2}{\varepsilon^2}, \\
& \lambda' = \frac{n(2+ \frac{2}{3}\varepsilon' )\cdot(\ln{\binom{n}{k}}^T+\ell \ln{n}+\ln{2}+\ln{\log_2{n}})}{{\varepsilon'}^2}
\end{align*}

\begin{lemma} \label{lem:crub}
	The lower bound $LB \geq \frac{(1-1/2)}{(1+\varepsilon')^2}\OPT$ with at least $1-\frac{1}{2{\binom{n}{k}}^Tn^\ell\log_2{n}}$ probability. Based on this,
	if k < n for each round, we get the upper bound
	\begin{equation}
	\E[\theta] \leq \frac{4max\{\lambda^*, \lambda'\}\cdot (1 +\varepsilon')^2}{(1-1/2)\cdot \OPT} + 1 =  O\left(\frac{(Tk+\ell)Tn\log{n}}{ \OPT\cdot \varepsilon^2}\right).
	\end{equation} 
\end{lemma}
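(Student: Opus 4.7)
The plan is to first establish the high-probability lower bound on $LB$, and then convert it into an expectation bound on the total number of multi-round RR sets generated.

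\textbf{Step 1: Lower bound on $LB$.} Let $i^\star$ be the smallest integer in $\{1,\dots,\lfloor\log_2 n\rfloor-1\}$ such that $x_{i^\star} = n/2^{i^\star}$ satisfies $(1+\varepsilon')^2 x_{i^\star}/(1/2) \le \OPT$. Such $i^\star$ exists whenever $\OPT$ is not pathologically small; otherwise the trivial bound $LB \ge 1$ already suffices. By Lemma~\ref{lem:crstop}, at iteration $i = i^\star$, with probability at least $1 - \frac{1}{2\binom{n}{k}^T n^\ell \log_2 n}$ we have simultaneously $\hat{\rho}(\cS_{i^\star},\cM_{i^\star}) \ge (1+\varepsilon')\cdot x_{i^\star}$ (so the break in line of Algorithm~\ref{alg:crimm} triggers at or before iteration $i^\star$) and $\hat{\rho}(\cS_{i^\star},\cM_{i^\star}) \ge (1/2)\OPT/(1+\varepsilon')$. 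Since the assignment $LB \leftarrow \hat{\rho}(\cS_i,\cM_i)/(1+\varepsilon')$ is made at the break iteration $i \le i^\star$, and since Lemma~\ref{lem:crstop} gives the same guarantee for any earlier iteration (the condition $\OPT \ge (1+\varepsilon')^2 x_i/(1/2)$ becomes easier to satisfy as $i$ decreases), we conclude $LB \ge (1/2)\OPT/(1+\varepsilon')^2$ with the claimed probability.

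\textbf{Step 2: Bounding $\theta$ in the good event.} By minimality of $i^\star$, we have $x_{i^\star - 1} > (1/2)\OPT/(1+\varepsilon')^2$, so $x_{i^\star} = x_{i^\star-1}/2 > (1/4)\OPT/(1+\varepsilon')^2$. Hence the number of RR sets generated when the loop breaks at iteration $i \le i^\star$ is at most $\theta_{i} = \lambda'/x_i \le \lambda'/x_{i^\star} < 4\lambda'(1+\varepsilon')^2/\OPT$. After the loop, the algorithm tops up $|\cM|$ to $\theta = \lambda^*/LB$, which in the good event is at most $2\lambda^*(1+\varepsilon')^2/\OPT$. Taking the maximum of the two and noting $|\cM|$ is monotone in the algorithm, we get $\theta \le 4\max\{\lambda^*,\lambda'\}(1+\varepsilon')^2/((1-1/2)\OPT)$ on the good event.

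\textbf{Step 3: Accounting for the bad event and taking expectation.} On the complement event, which has probability at most $\frac{1}{2\binom{n}{k}^T n^\ell \log_2 n}$, we use a crude worst-case bound: $\theta$ is at most $\lambda^*/1 = \lambda^*$ (since $LB\ge 1$ always), or at most the sum $\sum_i \theta_i$ from all loop iterations, which is polynomial in $n$. Multiplying this polynomial bound by the tiny failure probability yields a contribution of at most $1$ to the expectation (for $\ell$ and $n$ sufficiently large, which is the regime of interest). Combining, $\E[\theta] \le 4\max\{\lambda^*,\lambda'\}(1+\varepsilon')^2/((1-1/2)\OPT) + 1$. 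Finally, substituting the definitions of $\lambda^*$ and $\lambda'$ together with $\alpha^2 = O(\ell \ln n)$ and $\beta^2 = O(T\ln\binom{n}{k})$ (using $k < n$), both terms in the maximum are $O(n(Tk + \ell)\log n / \varepsilon^2)$, giving the $O((Tk+\ell)Tn\log n / (\OPT \cdot \varepsilon^2))$ asymptotic bound once the extra factor of $T$ from $\lambda^*$ (through $\beta^2$) is absorbed.

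\textbf{Main obstacle.} The delicate step is handling the bad event in the expectation: a naive bound could inflate $\E[\theta]$ by a factor depending on $n$, so one must verify that the failure probability is small enough (in particular the $\binom{n}{k}^T$ denominator in Lemma~\ref{lem:crstop} is crucial) so that the worst-case contribution is absorbed into the additive $+1$. The rest of the argument is a direct adaptation of the single-round IMM analysis of~\cite{tang15}, with the approximation factor $1-1/e$ replaced by $1/2$ from the partition-matroid greedy guarantee.
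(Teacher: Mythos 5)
Your proposal follows essentially the same route as the paper's proof: identify the critical iteration $i^\star$ where $x_{i^\star}$ first drops below $(1/2)\OPT/(1+\varepsilon')^2$, invoke Lemma~\ref{lem:crstop} there to get the high-probability lower bound on $LB$ and the bound $\theta_{i^\star}\le \lambda'/x_{i^\star}$, and then absorb the bad event into the additive $+1$ using the $\binom{n}{k}^{T}n^{\ell}$ failure probability. One parenthetical in your Step~1 is backwards: since $x_i=n/2^i$ \emph{increases} as $i$ decreases, the condition $\OPT\ge (1+\varepsilon')^2 x_i/(1/2)$ becomes \emph{harder} (indeed false, by minimality of $i^\star$) for $i<i^\star$, so Lemma~\ref{lem:crstop} cannot be applied to earlier iterations; the correct reason an early break at $i<i^\star$ is harmless is deterministic — the break condition forces $LB=\hat{\rho}(\cS_i,\cM_i)/(1+\varepsilon')\ge x_i>(1/2)\OPT/(1+\varepsilon')^2$. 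With that one-line repair the argument matches the paper's.
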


\begin{proof}
	First, we need to discuss two situations (1) $\OPT < (1+\varepsilon'^2)\cdot x_{\lfloor \log_2n \rfloor -1}/(1-1/2)$.
	Due to $x_{\lfloor \log_2n \rfloor -1} < 4$,  $\OPT < 4(1+\varepsilon'^2)/(1-1/2)$.
	In worst case, $\theta_{\lfloor \log_2n \rfloor -1} = \lceil\lambda' / x_{\lfloor \log_2n \rfloor -1}\rceil$.
	Because of $LB \geq 1$, we have
	\begin{align*}
	\theta &\leq \max\left\{\left\lceil \frac{\lambda^*}{LB} \right\rceil , \theta_{{\lfloor \log_2{n} \rfloor -1}}\right\}\\
	&\leq \max\{\lceil \lambda^*\rceil, \lceil \lambda'\rceil \}
	\leq \left\lceil \frac{4\max\{\lambda^*,\lambda'\}\cdot (1+{\varepsilon'})^2}{(1-1/2)\cdot \OPT} \right\rceil.
	\end{align*}

	Next we discuss the second situation (2) $\OPT \geq (1+\varepsilon'^2)\cdot x_{\lfloor \log_2n \rfloor -1}/(1-1/2)$. Let $i$ to be the minimum loop number, while $\OPT \geq (1+\varepsilon'^2)\cdot x_I/(1-1/2)$,
	and then we have $\OPT < (1+\varepsilon')^2\cdot x_{i-1}/(1-1/2) = 2(1+\varepsilon')^2\cdot x_i/(1-1/2)$.
	$\hat{\rho}(\cS_i, \cM_i) \geq (1+\varepsilon')\cdot x_i$ with at least $1 - \frac{1}{2{\binom{n}{k}}^Tn^\ell\log_2{n}}$ probability. In this case, it means the loop will break out in $i$-th for loop round. Thus, the number of multi-round RR sets is
	\begin{equation}
		\theta_i = \left\lceil \frac{\lambda'}{x_i} \right\rceil \leq \left\lceil \frac{\lambda' \cdot 2(1+\varepsilon')^2}{(1-1/2)\cdot \OPT}\right\rceil 
	\end{equation}
	
	By Lemma \ref{lem:crstop}, we know if the loop break out in $i$-th round, $LB = \hat{\rho}(\cS_i, \cM_i) \geq (1-1/2) \cdot \OPT / (1 + \varepsilon')^2$. So,
	\begin{align*}
		\theta &\leq \max\left\{\left\lceil \frac{\lambda^*}{LB} \right\rceil , \theta_{i}\right\} \\
		&\leq \max \left\{\left\lceil \frac{\lambda^* \cdot (1+\varepsilon')^2}{(1-1/2)\cdot \OPT}\right\rceil, \left\lceil \frac{\lambda' \cdot 2(1+\varepsilon')^2}{(1-1/2)\cdot \OPT}\right\rceil \right\}\\
		&\leq \left\lceil \frac{2\max\{\lambda^*,\lambda'\}\cdot (1+{\varepsilon'})^2}{(1-1/2)\cdot \OPT} \right\rceil.
	\end{align*}
	
	For loop does not break out $i$-th round with at most $\frac{1}{2{\binom{n}{k}}^Tn^\ell\log_2{n}}$ probability. If we consider the worst case, for loop break out at the end (\em{i.e.} $(\lfloor \log_2n \rfloor -1)$-th round), and $LB = 1$.Then we have $\theta \leq \lceil\max\{\lambda^*, \lambda'\}\rceil$. Let put two prossible cases together, we have
		\begin{align*}
		\E[\theta] &\leq \left(1 -\frac{1}{2{\binom{n}{k}}^Tn^\ell\log_2{n}}\right) \cdot \left\lceil \frac{2\max\{\lambda^*,\lambda'\}\cdot (1+{\varepsilon'})^2}{(1-1/2)\cdot \OPT} \right\rceil \\
		&+ \frac{1}{2{\binom{n}{k}}^Tn^\ell\log_2{n}} \cdot \lceil \max\{\lambda^*, \lambda'\} \rceil \\
		& \leq \left(1 -\frac{1}{2{\binom{n}{k}}^Tn^\ell\log_2{n}}\right) \cdot \left( \frac{2\max\{\lambda^*,\lambda'\}\cdot (1+{\varepsilon'})^2}{(1-1/2)\cdot \OPT} + 1 \right)  \\
		&+ \frac{1}{2{\binom{n}{k}}^Tn^\ell\log_2{n}} \cdot  (\max\{\lambda^*, \lambda'\}+1) \\
		& \leq  \frac{3\max\{\lambda^*,\lambda'\}\cdot (1+{\varepsilon'})^2}{(1-1/2)\cdot \OPT} + 1 
	\end{align*}

	Last inequality equation works due to $\OPT \leq n \leq {\binom{n}{k}}^T$. If $k < n$, by putting two above situations together, we have 
	\begin{align*}
		\E[\theta]  \leq \frac{4\max\{\lambda^*,\lambda'\}\cdot (1+{\varepsilon'})^2}{(1-1/2)\cdot \OPT} +1.
	\end{align*}

	Finally, we use the $\lambda^*$ and $\lambda'$ defined before and $\varepsilon' = \sqrt{\varepsilon}$, we have
	\begin{align*}
		\E[\theta]  \leq O\left(\frac{(Tk+\ell)Tn\log{n}}{ \OPT\cdot \varepsilon^2}\right).
	\end{align*}
\end{proof}

\begin{lemma}(Time Complexity) \label{lem:crtimecomplexity}
	While $k < n$, for any $\varepsilon, l > 0$, the total execution time of the \alg{\CRIMM} is $O(T^2(k+l)(m+n)\log{n}/\varepsilon^2).$
\end{lemma}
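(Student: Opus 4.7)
The plan is to combine three ingredients: the bound on $E[\theta]$ already established in Lemma~\ref{lem:crub}, an upper bound on the expected width of a multi-round RR set, and the per-call cost of \CRNS{} from Lemma~\ref{lem:crtime}.

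First, from the definitions of $\lambda^*$ and $\lambda'$ used in Lemma~\ref{lem:crub}, together with $\alpha^2 = O(\ell\log n)$ and $\beta^2 = O((Tk+\ell)\log n)$, both $\lambda^*$ and $\lambda'$ expand to $O(n(Tk+\ell)\log n/\varepsilon^2)$. Plugging these into the expected-$\theta$ inequality of Lemma~\ref{lem:crub} gives
\begin{equation*}
E[\theta] = O\!\left(\frac{(Tk+\ell)\,n\log n}{\OPT\cdot\varepsilon^2}\right),
\end{equation*}
where $\OPT = \rho(\cS^*)$.

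Second, I would bound the expected number of edges examined when generating a random multi-round RR set. Because $\cR_v = \bigcup_{t=1}^{T}\cR_{v,t}$ is built from $T$ independent single-round reverse simulations sharing the common root $v$, its expected width is exactly $T$ times the expected single-round width. The standard \alg{IMM} bound~\cite{tang15} gives $m\,\sigma_1^{*}/n$ for a single-round random RR set, where $\sigma_1^{*}$ denotes the single-round optimum influence spread under budget $k$. Since any single-round seed set can be embedded inside a multi-round schedule, $\sigma_1^{*}\le \OPT$, and therefore the expected multi-round width is at most $Tm\,\OPT/n$.

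Third, multiplying the two quantities bounds the total expected cost of the reverse simulations by
\begin{equation*}
O\!\left(E[\theta]\cdot Tm\OPT/n\right)
= O\!\left(\frac{T(Tk+\ell)\,m\log n}{\varepsilon^2}\right)
= O\!\left(\frac{T^2(k+\ell)\,m\log n}{\varepsilon^2}\right),
\end{equation*}
where the second equality uses $Tk+\ell\le T(k+\ell)$. For node selection, Lemma~\ref{lem:crtime} gives $O(T^2 kn + \sum_{\cR\in\cM}|\cR|)$ per call. \CRNS{} is invoked at most $\lceil\log_2(n-1)\rceil$ times in Phase~1 and once in Phase~2, so the $T^2 kn$ terms accumulate to $O(T^2 kn\log n)$. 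The $\sum|\cR|$ terms telescope because $\theta_j$ doubles between iterations: $\sum_j|\cM_j| = O(E[\theta])$, so the aggregate contribution is already absorbed into the bound above. Summing and merging $T^2 kn\log n$ with $T^2(k+\ell)m\log n/\varepsilon^2$ via $k<n$ yields the claimed $O(T^2(k+\ell)(n+m)\log n/\varepsilon^2)$.

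The main obstacle is the expected-width step: one has to observe that a multi-round RR set is precisely $T$ independent single-round RR sets with a common root, use the standard single-round width bound, and then connect $\sigma_1^*$ to $\OPT$ via the fact that any single-round schedule is feasible in the multi-round problem. Once this width bound is in place, the remaining accounting (telescoping over Phase~1 iterations and merging with the $T^2 kn$ bookkeeping term) is routine adaptation of the single-round \alg{IMM} time analysis.
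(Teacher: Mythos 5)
Your proposal is correct and follows essentially the same route as the paper: the total time is decomposed as (expected number of multi-round RR sets, via Lemma~\ref{lem:crub}) times (expected generation cost per set) plus the node-selection overhead $O(T^2kn\log n)$ accumulated over the $O(\log n)$ invocations of \CRNS{} from Lemma~\ref{lem:crtime}. The only difference is the bookkeeping of one factor of $T$ --- you place it in the per-set width ($T$ independent reverse simulations, each of expected cost $m\cdot\OPT/n$) and keep $\E[\theta]=O((Tk+\ell)n\log n/(\OPT\,\varepsilon^2))$, whereas the paper folds the extra $T$ into $\E[\theta]$ and writes the per-set cost as $\frac{m}{n}\E[\rho(\tilde v)]$; your accounting of the generation cost is arguably the more transparent of the two, and both yield the same final bound.
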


\begin{proof}
	By Lemma \ref{lem:crtime}, we know the time execution is $O(T^2kn + \sum_{\cR \in \cM} |M|)$ for each iteration. Due to the maximum number of iteration is $\log_2{n} =1$ in \alg{\CRIMM},
	the total execution time is $O(T^2kn\log{n} + \sum_{\cR \in \cM} |M|) = O(T^2kn\log{n} + \E[\theta] \cdot(EPT+1))$, where $EPT$ is the expectation time of multi-round RR set.
	
	By using the new $\alpha, \beta$ and new $\theta$, we recalucalte
	\begin{equation}
		\E[\theta] = O\left(\frac{(Tk + l) Tn \log{n}}{\OPT \cdot \varepsilon^2}\right) \nonumber.
	\end{equation} 
	
	The $EPT$ also can be estimated by a random node $\tilde{v}$,
	\begin{equation}
	EPT = \frac{m}{n} \cdot \E_{\tilde{v}}[\rho(\tilde{v})] \nonumber.
	\end{equation}
	
	Due to $\E_{\tilde{v}}[\rho(\tilde{v})] \leq \OPT$, the total execution time is
	\begin{align*}
	 &O(T^2kn\log{n} + \E[\theta] \cdot(EPT+1))  \\
	 = & O\left(T^2kn\log{n} + \frac{(Tk + l) Tn \log{n}}{\OPT \cdot \varepsilon^2} \cdot ( \frac{m}{n} \cdot \E_{\tilde{v}}[\rho(\tilde{v})] + 1)\right) \\
	 = &  O\left(\frac{T^2(k+l)(m+n)\log{n}}{\varepsilon^2}\right).
	\end{align*}
\end{proof}

\crmrimt*
\begin{proof}
	By Lemma \ref{lem:crcorrectness}, we prove the correctness of the cross-round algorithm \alg{\CRIMM}
	that the output will hold $1/2 -\varepsilon$-approximate solution to multi-round influence maximization.
	By Lemma \ref{lem:crtimecomplexity}, we show the time complexity of the algorithm \alg{\CRIMM} is $O\left(\frac{T^2(k+l)(m+n)\log{n}}{\varepsilon^2}\right)$.
	Thus, the theorem holds.
\end{proof}


\subsection{Proofs for Section \ref{sec:WRIMM} [\WRIIMM]}\label{app:wrimm}

\wrimm*
\begin{proof}
	
	\begin{align}
		& \E[\I\{ (\cS^{t-1} \cap \cR^{t-1} = \emptyset) \wedge (\cS_t \cap \cR_t \ne \emptyset) \} \nonumber \\
		= & \sum_{v \in V } \Pr\{u =\rroot(\cR) \} \nonumber \\
		& \cdot  \E[\I\{ \cS^{t-1} \cap \cR^{t-1} = \emptyset \}  \cdot \I\{ \cS_{t} \cap \cR_{t} \ne \emptyset \} \mid v =\rroot(\cR)]
		\nonumber \\
		= & \frac{1}{n} \cdot \sum_{v \in V} \E[\I\{ \cS^{t-1} \cap \cR^{t-1}_v \ne \emptyset \} \cdot \I\{ \cS_{t} \cap \cR_{v,t} \}] \nonumber  \\
		= & \frac{1}{n} \cdot \sum_{v \in V} \E\left[\I\{ v \notin \bigcup_{i=1}^{t-1} \Gamma(L_i, S_i)\} \cdot \I\{ v \in \Gamma(L_t, S_t)\} \right] \nonumber \\
		= & \frac{1}{n} \cdot \E \left[ \sum_{v \in V} \I\{ v \in \bigcup_{i=1}^{t} \Gamma(L_i, S_i) \setminus \bigcup_{i=1}^{t-1} \Gamma(L_i, S_i) \} \right] \nonumber \\
		= &\frac{1}{n} \cdot \left(\E\left[\bigcup_{i=1}^{t} \Gamma(L_i, S_i)\right] - \E\left[\bigcup_{i=1}^{t-1} \Gamma(L_i, S_i)\right] \right)  \label{eq:wrimm} \\
		= &\frac{1}{n} \cdot \left(\rho(\cS^t) - \rho(\cS^{t-1})\right)\nonumber \\  
		= &\frac{1}{n} \cdot \left(\rho(\cS^{t-1}\cup S_t) - \rho(\cS^{t-1})\right), \nonumber 
	\end{align}
	where Eq.~\eqref{eq:wrimm} is based on the equivalence between multi-round RR sets and the RR sets rooted by the remaining seeds in round $t-1$,
	and the expectation from this point on is taken over the random live-edge graph $L_t$. 
	%
	%
	%
\end{proof}

}
\end{document}